\pgfplotsset{compat=1.8}
\newtheorem{assumption}{Assumption}
\newtheorem{corollary}{Corollary}
\newtheorem{definition}{Definition}
\newtheorem{example}{Example}
\newtheorem{lemma}{Lemma}
\newtheorem{problem}{Problem}
\newtheorem{remark}{Remark}
\newcommand{\controlalgo}{\texttt{GTLProCo}}
\begin{document}

\title{ \LARGE Probabilistic Control of Heterogeneous Swarms Subject to Graph Temporal Logic Specifications: A Decentralized and Scalable Approach 
    \thanks{This material is based on work partly supported by the grants AFRL FA9550-19-1-0169 and DARPA D19AP00004.\newline 
    \indent F. Djeumou is with the Department of Electrical and Computer Engineering at the University of Texas at Austin, Austin, TX, USA. M. Cubuktepe and U. Topcu are with the Department of Aerospace Engineering  and Engineering Mechanics at the University of Texas at Austin, Austin, TX, USA. Email: \texttt{\{fdjeumou, mcubuktepe, utopcu\}@utexas.edu}.\newline
    \indent Z. Xu is with the School for Engineering of Matter, Transport and Energy, Arizona State University, Tempe, AZ, USA. Email: \texttt{xzhe1@asu.edu}.}}
\author{Franck Djeumou, Zhe Xu, Murat Cubuktepe, and Ufuk Topcu}
\maketitle
\author{Anonymous Authors}
\maketitle

\begin{abstract}
    We develop a probabilistic control algorithm, \controlalgo{}, for swarms of agents with heterogeneous dynamics and objectives, subject to high-level task specifications. 
    The resulting algorithm not only achieves decentralized control of the swarm but also significantly improves scalability over state-of-the-art existing algorithms. 
    Specifically, we study a setting in which the agents move along the nodes of a graph, and the high-level task specifications for the swarm are expressed in a recently-proposed language called graph temporal logic (GTL). 
    By constraining the distribution of the swarm over the nodes of the graph, GTL can specify a wide range of properties, including safety, progress, and response. 
    \controlalgo{}, agnostic to the number of agents comprising the swarm, controls the density distribution of the swarm in a decentralized and probabilistic manner.
    To this end, it synthesizes a time-varying Markov chain modeling the time evolution of the density distribution under the GTL constraints. 
    We first identify a subset of GTL, namely reach-avoid specifications, for which we can reduce the synthesis of such a Markov chain to either linear or semi-definite programs. 
    Then, in the general case, we formulate the synthesis of the Markov chain as a mixed-integer nonlinear program (MINLP). 
    We exploit the structure of the problem to provide an efficient sequential mixed-integer linear programming scheme with trust regions to solve the MINLP.  
    We empirically demonstrate that our sequential scheme is at least \emph{three} orders of magnitude faster than off-the-shelf MINLP solvers and illustrate the effectiveness of \controlalgo{} in several swarm scenarios.\looseness=-1
\end{abstract}

\section{Introduction} \label{sec:introduction}

Large numbers, or \emph{swarms}, of autonomous and heterogeneous agents can collaboratively achieve complex tasks that a single agent cannot. 
Such swarms have been used in the construction of a complex formation shape~\cite{kushleyev2013towards,rubenstein2014programmable}, opinion dynamics~\cite{kouvaros2016formal}, task allocations~\cite{berman2009optimized,liu2007towards}, surveillance, and search or rescue missions with ground or aerial vehicle swarms~\cite{cortes2004coverage,jaimes2008approach}. 
However, as the number of agents comprising a swarm increases, individual-agent-based control techniques for collective task completion become computationally intractable.  Besides, the heterogeneity of the agents makes the problem even harder due to the diverse dynamics and objectives.
Consequently, controlling a heterogeneous swarm to achieve some global task requirements remains a challenging problem.

 We propose an approach, which is agnostic to the number of agents comprising the swarm, to control a collective property of the swarm: its density distribution. 
 Specifically, we consider a setting in which the agents of a swarm move along the nodes of a graph~\cite{bondy1976graph}. 
 We sometimes refer to this graph as the \emph{configuration space}.     
 In this scenario, instead of controlling each agent individually, we propose an approach to control the time-varying density distribution of the swarm over the nodes of the graph. 
 Therefore, imposing constraints on the time evolution of the density distribution can express a set of collective behaviors for the swarm.

We specify the constraints on the evolution of the density distribution of the swarm using graph temporal logic (GTL)~\cite{xu2019graph}.
GTL, an extension of linear temporal logic (LTL)~\cite{pnueli1977temporal}, is an expressive language for high-level task specifications that focuses on the spatial-temporal properties of the node labels of a graph. 
Specifically, GTL can express  spatial-temporal properties on a graph more concisely than other logics such as alternating-time temporal logic (ATL)~\cite{Alur1997ATL} and LTL.
As an example, GTL can express properties such as ``whenever the density of the swarm in a node is less than $0.3$, eventually in the next $3$ time steps, at least two of its neighbor nodes have their density above $0.6$," by a formula with only a few propositions.
This property will result in a lengthier formula if expressed in either ATL or LTL.

We seek to synthesize controllers for swarms of heterogeneous agents subject to high-level task specifications expressed in GTL. We consider that the heterogeneity of the swarm is due to the diverse dynamics and objectives of its agents.
Then, the desired control algorithm should have the following properties: (a) correctness, i.e., the algorithm should enable the satisfaction of the GTL specifications; (b) scalability, i.e., the algorithm should scale with the size of the swarm and the size of the configuration space; and (c) distributed, i.e., the algorithm should return decentralized control laws to be executed by each agent comprising the swarm~\cite{accikmecse2012markov}.

We develop \controlalgo{} to control, in a probabilistic and decentralized manner, the time evolution of the density distribution of the swarm.
\controlalgo{} synthesizes a time-varying Markov chain~\cite{accikmecse2012markov,chattopadhyay2009supervised}, which models the time evolution of the density distribution in the configuration space. The obtained Markov chain is such that its states and transitions correspond to the nodes and edges of the configuration space.
On the agent level, the transition probability between two nodes is specified by the transition probability between the corresponding states of the synthesized Markov chain. 
Thus, the proposed formalism builds on the notion of transition probabilities between nodes of the configuration space, which is agnostic to the low-level individual dynamics or local interactions between agents as long as the transitions imposed by the synthesized Markov chain can be achieved.

\controlalgo{} computes Markov chains to control a swarm subject to GTL specifications through linear programs (LP), semi-definite programs (SDP), or mixed-integer linear programs (MILP). 
We first identify a subset of GTL, e.g., \emph{reach-avoid} specifications, for which we show that, depending on the graph structure, the synthesis of such a Markov chain can be formulated as either an LP or SDP problem. 
Then, in the general case, we formulate the synthesis problem as an NP-hard mixed-integer nonlinear programming (MINLP)~\cite{tawarmalani2002convexification} problem. 
Thus, we develop algorithms that can efficiently compute approximate solutions for the resulting MINLP. 
In the particular case where the agents move along the nodes of a complete graph~\cite{bondy1976graph}, we prove an equivalence between the feasibility of the MINLP and the feasibility of a mixed-integer linear program (MILP). 
For a non-complete graph, we adapt a \emph{sequential MILP} scheme that takes advantage of the structure of the problem to efficiently compute a locally optimal solution for the resulting MINLP. 
\controlalgo{} iteratively linearizes the nonconvex constraints around the solution of the previous iteration and leverages trust regions to account for the potential errors due to the linearization.

The complexity analysis of \controlalgo{} shows that, in the general case, its worst-case time complexity is polynomial in the size of the configuration space and only exponential in the size of the specifications.
The resulting complexity is a significant improvement over the existing approaches, which have exponential complexity in both the size of the configuration space and specifications.
We empirically demonstrate that even on relatively small problems, the sequential convex programming scheme is three orders of magnitude faster and more accurate than off-the-shelf MINLPs solvers~\cite{gurobi, belotti2009couenne,GleixnerEiflerGallyetal.2017}. 
Finally, we demonstrate the scalability and correctness of the developed algorithm in several gridworld scenarios involving heterogeneous swarms with thousands of agents. 

\controlalgo{} is scalable as it does not depend on the number of agents in the swarm. 
Furthermore, it is correct since, by construction, the resulting Markov chain enables the satisfaction of the GTL specifications.
Assuming that the transition time between two nodes is synchronized, each agent individually chooses the node to transit solely based on the transition probabilities of the synthesized Markov chain.
Henceforth, the Markov chain synthesized by the algorithm enables a decentralized control for each agent in the swarm.

\noindent {\textbf{Related work.}
Existing techniques for probabilistic density control~\cite{bandyopadhyay2017probabilistic, demir2015decentralized} of swarms based on the synthesis of Markov chains assumes homogeneity and do not consider complex behaviors, such as the ones induced by temporal logic specifications.
The approach in~\cite{demir2015decentralized} performs probabilistic decentralized control of swarms subject to ergodicity and upper-bound density constraints: A subset of GTL specifications. 
It provides an SDP formulation to find a Markov chain satisfying the constraints. 
In contrast, this paper improves the scalability of such approaches through an LP-based formulation for the Markov chain synthesis problem. 
Furthermore, to the best of our knowledge, this is the first paper to investigate probabilistic density control with temporal logic specifications.

The problem of synthesizing a controller for systems with multiple agents from a high-level temporal logic specification is considered in~\cite{kloetzer2008fully,kress2011correct,Allerton2019,swarmSTL2019,cubuktepe2020policy}. 
These papers define the specifications on the agent level and use an automata-based approach~\cite{kloetzer2008fully,wongpiromsarn2012receding} to compute a discrete controller satisfying the specifications over a finite abstraction of the system. 
However, it is expensive to compute such a finite abstraction, and the size of the automaton may be exponential in the length of the specifications while the synthesis of a controller can be double exponential in the length of the specifications.
Moreover, the length of the specifications depends on the size of the configuration space and may also grow exponentially with the number of controllable agents. 
Instead, this paper presents a synthesis algorithm with a worst-case time complexity that is only exponential in the size of the specifications. 

The synthesis of control algorithms for swarms subject to spatial and temporal logic specifications has also been considered in recent work~\cite{sahin2017provably,CensusSTL2016,haghighi2016robotic,haghighi2015spatel,moarref2019automated}. 
When considering spatial-temporal properties on a graph, GTL is more expressive than the spatial-temporal logics such as counting LTL~\cite{sahin2017provably} or SpaTeL~\cite{haghighi2015spatel,haghighi2016robotic}. 
Besides, the approaches based on these logics are significantly less scalable than the proposed approach, and most of them require a central unit to assign targets to individual agents.
Specifically, the number of integer variables in the optimization problems resulting from counting LTL-based, GR(1)-based~\cite{moarref2019automated}, and SpaTeL-based approaches depends on the size of the specifications. 
Besides, it exhibits quadratic dependency on the size of the considered abstraction. 
In contrast, the number of integer variables in the proposed approach depends only on the size of the specifications.
}

We make several extensions over our conference paper~\cite{Djeumou-RSS-20}. 
First, \controlalgo{} enables probabilistic control of heterogeneous swarms subject to infinite-horizon GTL formulas, while the conference paper assumes homogeneous swarms and finite-horizon GTL formulas. 
Second, this paper identifies a subset of GTL for which LP and SDP formulations are sufficient to solve the control problem. 
Finally, the conference paper relies on a coordinate descent algorithm to solve the resulting MINLP, which convergence to an accurate and feasible solution is highly dependent on the proximity of the starting point of the algorithm to an optimal solution. 
In contrast, the sequential scheme developed in this paper is faster, more accurate, and robust to the choice of the starting point. 


\noindent {\textbf{Contributions.}  We make the following contributions: (a) we present a novel, correct-by-construction, scalable, and decentralized algorithm for controlling swarms of autonomous and heterogeneous agents subject to GTL specifications; (b) we develop an algorithm, \controlalgo{}, based on LP, SDP, and MILP formulations to efficiently tackle the control problem, and we provide a worst-case time complexity analysis of \controlalgo{}; (c) we evaluate the developed algorithm on numerical examples involving a large number of agents.
}

\section{Preliminaries} \label{sec:prel}

\noindent {\textbf{Notation.} $\boldsymbol{0}$ is the zero matrix or vector of appropriate dimensions. $\boldsymbol{1}$ denotes a vector with all elements equal to $1$ of appropriate dimensions. $e_i$ is a vector of appropriate dimensions with its i-th entry $1$ and its other entries $0$. $A^{\mathrm{T}}$ denotes the transpose of a matrix $A$. $A_{i,j} = A[i,j] = e_i^{\mathrm{T}} A e_j$ for a matrix $A$. $x_i = x[i] = e_i^{\mathrm{T}} x$ for a vector x.  Comparisons (e.g., $\ge$) between matrices or vectors are conducted element-wise. The operator $\odot$ represents the element-wise product, and $[x_1;x_2]$ is the vector obtained by stacking vectors $x_1$ and $x_2$.
}

\subsection{Markov Chain-based Control of Homogeneous Swarms}
We present the definitions and assumptions used in the Markov chain approach to control swarms of autonomous and \emph{homogeneous} agents. Note that most of the definitions in this section can be found in the existing literature~\cite{accikmecse2012markov,bandyopadhyay2017probabilistic}.

\begin{definition}[\textsc{Bins}]
    The configuration space over which the state of an agent is distributed is denoted as $\mathcal{R}$. It is assumed that $\mathcal{R}$ is partitioned into $n_{\mathrm{r}}$ disjoint subspaces called \textit{bins}.
    \begin{equation*}
        \mathcal{R} = \cup_{i=1}^{n_{\mathrm{r}}} \mathcal{R}_i, \text{ s.t. } \mathcal{R}_i \cap \mathcal{R}_j = \emptyset, \: \text{ for all } i \neq j.
    \end{equation*}
    Each bin $\mathcal{R}_i$ (also referred to as bin $i$) represents a predefined range of the state of an agent, e.g., position, behavior, etc.
    \label{def:bins_subspaces}
\end{definition}

\begin{definition}[\textsc{State of an agent}]
    We denote by $N_{\mathrm{a}}$ the number of agents in the swarm. We define $r^m (t) \in \{0,1\}^{n_{\mathrm{r}}}$ as the state of agent $m$ at time $t$. If $r^m (t)$ belongs to the bin $\mathcal{R}_i$, for some $i \in \{1,\hdots,n_{\mathrm{r}}\}$, then $r^m(t) = e_i$.
    \label{def:agent_state}
\end{definition}

\begin{definition}[\textsc{Motion constraints}]
     The state of each agent can transition, between two consecutive time steps, from a bin to only certain bins because of the dynamics or the environment. These motion constraints are specified by the fixed matrix $A_{\mathrm{adj}} \in \{0,1\}^{n_{\mathrm{r}} \times n_{\mathrm{r}}}$, called an \textit{adjacency matrix}. Each component of $A_{\mathrm{adj}}$ is given by
    \begin{equation*}
        A_{\mathrm{adj}}[i,j] = 
            \begin{dcases}
            1 \quad \text{if the transition from bin $\mathcal{R}_i$} \\
            \  \quad \text{to bin $\mathcal{R}_j$ is allowed}, \\
            0 \quad \text{if this transition is not allowed.}\end{dcases}
    \end{equation*}
    Equivalently, the topology of the bins can be modeled as a graph $G = (V,E)$ where $V = \{v_1,\hdots,v_{n_{\mathrm{r}}}\}$ is the set of bins, and $E \subseteq V \times V$ is the set of edges such that  $(v_i,v_j) \in E$ if and only if $A_{\mathrm{adj}}[i,j] = 1$, $\forall i,j \in \{1,\hdots,n_{\mathrm{r}}\}$.
    \label{def:motion_constraint}
\end{definition}

In the rest of the paper, when we refer to an agent belonging to a bin, we mean that its state belongs to that bin. Similarly, when we refer to an agent transiting between bins, we suggest that its state transits between these bins.

\begin{example}
    Consider a swarm scenario where the state of an agent is its position, and the physical configuration space is partitioned into $n_{\mathrm{r}} = 3$ bins. Consider that $A_{\mathrm{adj}} = [[1,1,0]^\mathrm{T};[1,1,1]^\mathrm{T};[0,1,1]^\mathrm{T}]$. Having that $A_{\mathrm{adj}}[1,1]= A_{\mathrm{adj}}[1,2]=1$, and $A_{\mathrm{adj}}[1,3] = 0$ enforces agents in bin $\mathcal{R}_1$ to either stay in $\mathcal{R}_1$ or transit to $\mathcal{R}_2$ between two consecutive time steps. The corresponding graph $G = (V,E)$ is given by $V = \{v_1,v_2,v_3\}$, where the nodes $v_1,v_2$, and $v_3$ represent respectively the bins $\mathcal{R}_1,\mathcal{R}_2$, and $\mathcal{R}_3$. The set of edges is given by $E = \{(v_1,v_1),(v_1,v_2),(v_2,v_1),(v_2,v_2),$ $(v_2,v_3),(v_3,v_2),(v_3,v_3)\}$.
    \label{running-example}
\end{example}

\begin{definition}[\textsc{Density distribution of the swarm}]
    The density distribution $x(t) \in \mathbb{R}^{n_{\mathrm{r}}}$ of a swarm is a column-stochastic vector, i.e. $x(t) \geq \boldsymbol{0} \text{ and } \boldsymbol{1}^{\mathrm{T}} x(t) = 1$, such that a component $x_i(t)$ is the proportion of agents in bin $\mathcal{R}_i$ at time $t$:
    \begin{equation*}
        x_i(t) := \frac{1}{N_{\mathrm{a}}} \sum_{m=1}^{N_{\mathrm{a}}} r^m_i(t).
    \end{equation*}
    \label{def:swarm_density}
\end{definition}

\begin{definition}[\textsc{Transition policy of an agent}]
    At time $t$, the agent $m$ transits from bin $\mathcal{R}_j$ to bin $\mathcal{R}_i$ with probability
    \begin{equation*}
        M^m_{i,j}(t) = Pr(r^m_i(t+1) = 1 | r^m_j(t) = 1), 
    \end{equation*}
    where $M^m (t) \in \mathbb{R}^{n_{\mathrm{r}} \times n_{\mathrm{r}}}$ is a column-stochastic matrix, i.e. $\boldsymbol{1}^{\mathrm{T}} M^m (t) = \boldsymbol{1}^{\mathrm{T}}, M^m(t) \geq \boldsymbol{0}$. We refer to $M^m(t)$ as the time-varying Markov matrix of agent m at time $t$.
    \label{def:mc_dynamic}
\end{definition}

\begin{remark}
    Under the motion constraints given by $A_{\mathrm{adj}}$, the transitions between some bins may not be allowed. For agent m, $M^m_{i,j}(t)$ is the probability of transition from bin $\mathcal{R}_j$ to bin $\mathcal{R}_i$. Hence, $M^m_{i,j}(t) = 0$ if $A_{\mathrm{adj}} [j,i] = 0$.
    \label{remark-motion-constraints}
\end{remark}

In Example~\ref{running-example}, if $M^m(t)$ is the time-varying Markov matrix of agent $m$ at time $t$, then $M^m_{2,1}(t)$ gives the probability of agent $m$ to transit from bin $\mathcal{R}_1$ to bin $\mathcal{R}_2$ in one time step. Moreover, having $A_{\mathrm{adj}}[1,3] = 0$ enforces that $M^m_{3,1}(t)=0$.

In this section, we focus on methods that ensure that each agent of the homogeneous swarm has the same time-varying Markov matrix at any given time $t$, i.e., $M^1(t) = \cdots = M^{N_{\mathrm{a}}}(t)= M(t)$. When the agents independently choose their transitions between bins using $M(t)$, two mathematical interpretations are given for $x(t)$~\cite{accikmecse2012markov}: (a) $x(t)$ is the vector of expected ratio of the number of agents in each bin; (b) the ensemble of agent state, $\{r^k(t)\}_{k=1}^{N_{\mathrm{a}}}$, has a distribution that approaches $x(t)$ with probability one as $N_{\mathrm{a}}$ increases towards infinity (due to the law of large numbers). As a consequence, the dynamics of the density distribution of the swarm can be modeled by~\cite{accikmecse2012markov,demir2015decentralized}
\begin{equation}
    x(t+1) = M(t) x(t),
    \label{eq:swarm_dyn}
\end{equation}
as $N_{\mathrm{a}}$ increases towards infinity. The Markov chain approach for the control of swarms relies on the synthesis of a time-varying Markov matrix $M(t)$ such that the time evolution of the density distribution of the swarm is given by~\eqref{eq:swarm_dyn}.

\subsection{Graph Temporal Logic} \label{sec:gtl}

Let $G=(V, E)$ be a graph, where $V$ is a finite set of nodes and $E$ is a finite set of edges. We use $\mathcal{X}$ to denote a (possibly infinite) set of node labels. $\mathbb{T}=\{0, 1, \dots\}$ is a discrete set of time indices. A graph with node labels is also called a \textit{labeled graph}. A trajectory $g : V\times\mathbb{T}\rightarrow \mathcal{X}$ on the graph $G$ denotes the time evolution of the node labels.

In the swarm scenario, we focus on labelled versions of $G$ where each of its nodes is labelled with a given function of the density distribution of the swarm. That is, the graph trajectory $g$ at node $v_i$ and time $t$ is given by $g(v_i,t) = f_i(x(t))$, where $f_i : \mathbb{R}^{n_\mathrm{r}} \mapsto \mathcal{X}$ is known and specific to $v_i$. For example, for the remainder of this section, consider the following labelling on $G$ of Example~\ref{running-example}: $f_1(x) = [x_1, x_1 - x_3]^\mathrm{T}$, $f_2(x) = [x_2, x_3-x_1-x_2]^\mathrm{T}$, and $f_3(x) = [x_3, x_3-x_2]^\mathrm{T}$.

An \textit{atomic node proposition} is a predicate on $\mathcal{X}$, i.e. a Boolean valued map from $\mathcal{X}$. We use $\pi$ to denote an atomic node proposition, and $\mathcal{O}(\pi)$ to denote the subset of $\mathcal{X}$ for which $\pi$ is true.

We define that a graph trajectory $g$ satisfies the atomic node proposition $\pi$ at a node $v$ at time index $k$, denoted as $(g,v,k)\models\pi$, if and only if $g(v,k) \in \mathcal{O}(\pi)$. In Example~\ref{running-example}, using the labelling described above, if $x(0) = [0.3,0.3,0.4]^{\mathrm{T}}$ and $\pi = (y \leq [0.3, 0]^\mathrm{T})$ with $y$ a symbolic representation of $f_i$, then $\pi$ is satisfied by $g$ at time index $0$ at nodes $v_1$ and $v_2$.

\begin{definition}[\textsc{Neighbor operator}]
	\label{def:neighbor}
	Given a graph $G$, the \textit{neighbor operation} $\bigcirc$ : $2^{V}\rightarrow 2^{V}$ is defined as
	\[
	\begin{split}
	   \bigcirc(V') = \{v  \in V| \exists v'\in V'  \text{ s.t. }(v',v)\in E\}.
	 \end{split} \]
	Intuitively, $\bigcirc(V')$ consists of nodes that can be reached from $V'$. Note that neighbor operations can be applied successively. In Example~\ref{running-example}, we have $\bigcirc(\{v_1\}) = \{v_1,v_2\}$.
\end{definition}

We refer to a graph trajectory as a trajectory $g : V \times \{0,\hdots,T_{\mathrm{f}}\} \rightarrow \mathcal{X}$, where $T_{\mathrm{f}} \in \mathbb{T} \cup \{+\infty\}$. Graph trajectories are sufficient to satisfy (resp. violate) GTL formulas. We define the syntax of a  GTL formula $\varphi$ recursively as    
\begin{center}
	$\varphi:=\pi~|~\neg\varphi_1~|~X\varphi_1~|~\varphi_1\wedge\varphi_2~|~\varphi_1\mathcal{U}\varphi_2~|~\exists^{N}(\bigcirc \cdots \bigcirc)\varphi_1$, 
\end{center}
where $\pi$ is an atomic         
node proposition, $\exists^{N}(\bigcirc \cdots \bigcirc)\varphi$ reads as \textquotedblleft there exist at least $N$ nodes under the neighbor operation $\bigcirc \cdots \bigcirc $ that satisfy $\varphi$ \textquotedblright, $\lnot$ and $\wedge$ stand for negation and conjunction respectively, $X$ is the temporal operator \textquotedblleft next\textquotedblright, and $\mathcal{U}_{\le i}$ is the temporal operator \textquotedblleft until". We can also derive $\vee$ (disjunction), $\Rightarrow$ (implication), $\Diamond$ (eventually),  $\Box$ (always), $\Box \Diamond$ (always enventually), and $\Diamond \Box$ (eventually always) from the above-mentioned operators~\cite{Baier2008}, e.g. 
\begin{align*}
    \Diamond \varphi = \mathrm{True} \: \mathcal{U} \: \varphi,  \: \: &
    \Box \varphi = \neg \Diamond \neg \varphi.
\end{align*}

The satisfaction relation $(g, v, t)\models\varphi$ for a graph trajectory $g$ at node $v$ at time index $t$ with respect to a GTL formula $\varphi$ is defined recursively by
\begin{equation*}
    \begin{aligned}
        &(g, v, t)\models\pi \: &\text{iff } & \: g(v, t)\in\mathcal{O}(\pi),\\
        &(g, v, t)\models\lnot\varphi \: &\text{iff }& \: (g, v, t)\not\models\varphi,\\
        &(g, v, t)\models X\varphi \: &\text{iff } & \:(g, v, t+1)\models\varphi,\\
        &(g, v, t)\models\varphi_{1}\wedge\varphi_{2} \: &\text{iff }& \: (g, v, k)\models\varphi_{1} \: \text{and}~(g, v, t)\models\varphi_{2},\\
        &(g, v, t)\models\varphi_{1}\mathcal{U}\varphi_{2} \: &\text{iff }& \exists t'\geq t, \mbox{s.t.}~(g, v, t')\models\varphi_{2} \text{ and}\\
        &          &  &(g, v, t^{\prime\prime})\models\varphi_{1}, \forall t \leq t^{\prime\prime} < t',
    \end{aligned}
\end{equation*}
\begin{equation*}
    \begin{aligned}
        (g, &v, t) \models\exists^{N}(\bigcirc \cdots \bigcirc)\varphi \: \mbox{iff}  \: \: \exists v_1, \dots, v_N~(v_i\neq v_j ~\mbox{for} \\
        &i\neq j),~\mbox{s.t.}, \forall i, v_i\in \bigcirc \cdots \bigcirc (\{v\}),~\mbox{and}~ (g, v_i, t)\models\varphi.             
    \end{aligned}                                             
\end{equation*}

Intuitively, a graph trajectory $g$ satisfies $\exists^{N}(\bigcirc \cdots \bigcirc)  \varphi$ at a node $v\in V$ at time index $k$, if there exist at least $N$ nodes in $\bigcirc \cdots \bigcirc(\{v\})$ where $\varphi$ is satisfied by $g$ at time index $k$. Note that, by definition, if $\bigcirc \cdots \bigcirc (\{v\})$ consists of fewer than $N$ nodes, then $\exists^{N}(\bigcirc \cdots \bigcirc )\varphi$ is false. In Example~\ref{running-example}, if $x(0) = [0.3,0.3,0.4]^{\mathrm{T}}$, then the nodes that satisfy $\exists^{1} \bigcirc (y \geq [0.2, 0]^\mathrm{T})$ at time index $0$ are $v_2$ and $v_3$.

We also define that a graph trajectory $g$ satisfies $\varphi$ at node $v$, denoted as $(g, v)\models\varphi$, if $g$ satisfies $\varphi$ at node $v$ at time $0$.



\section{Problem Formulation} \label{sec:prob-formulation}
In this section, we first specify the link between a graph trajectory satisfying a graph temporal logic (GTL) formula and the time evolution of the density distribution of a swarm. Then, we formulate the problem of controlling the density distribution of a swarm subject to GTL, as the problem of synthesizing time-varying Markov matrices.

In the remainder of the paper, we assume a configuration space divided into $n_\mathrm{r}$ bins, and we consider that the swarm of heterogeneous agents can be partitioned into $m$ smaller swarms of homogeneous agents. Typically, such partitioning enables to regroup agents in the swarm that might have the same dynamics, objectives, or motion constraints.
\begin{definition}[\textsc{Sub-swarms}]\label{def:sub-swarm}
    Given $s \in \{1, \hdots, m\}$, the $s$-th sub-swarm is a collection of homogeneous agents with motion constraints given by the adjacency matrix $A_{\mathrm{adj}}^s \in \{0,1\}^{n_\mathrm{r} \times n_\mathrm{r}}$, its density distribution denoted by $x^s(t)$, and the graph induced by $A_{\mathrm{adj}}^s$ (Definition~\ref{def:motion_constraint}) denoted by $G^s=(V,E^s)$.
\end{definition}    
Thus, we define GTL specifications over the heterogeneous swarm as joint constraints on the time evolution of the density distributions of the sub-swarms.

\begin{definition}[\textsc{GTL specifications}]
    Let $G=(V,E)$ with $E = \cup_{s=1}^m E^s$ be the graph obtained by considering the motion constraints of all the sub-swarms. By labeling each node $v_i \in V$ with a function $f_i : (\mathbb{R}^{n_\mathrm{r}})^m \mapsto \mathcal{X}$ of all the density distribution $x^s(t)$ of the sub-swarms, we define GTL specifications on the swarm as GTL formulas on the obtained labelled graph $G$.
    \label{def:gtl-swarm}
\end{definition}

Definition~\ref{def:gtl-swarm} specifies that a graph trajectory $g$ on the labeled graph $G$, at node $v_i$ and time index $t \in \mathbb{T}$, is given by $g(v_i,t) = f_i(x^1(t),\hdots,x^m(t))$.

\begin{assumption}
    For all $i \in \{1,\hdots,m\}$, the function $f_i$ associated to the node label of $v_i$ is an affine function, and for every atomic node proposition $\pi$ of a given GTL formula, $\mathcal{O}(\pi) \subseteq \mathcal{X}$ is a convex polyhedra.
    \label{ass:atomic-prop-halfspace}
\end{assumption}

\begin{problem}
    Given the adjacency matrices $A_{\mathrm{adj}}^s$ for all $s\in \{1,\hdots,m\}$, the induced labelled graph $G=(V,E)$ (Definition~\ref{def:gtl-swarm}), the initial density distributions $x^1(0), \hdots, x^m(0)$ for all sub-swarms, a set $V' \subseteq V$, and a GTL formula $\varphi$ on $G$, compute the time-varying Markov matrices $M^s(t)$ for all $s \in \{ 1,\hdots,m\} $  such that the followings are true:
    \begin{enumerate}
        \item The motion constraints are satisfied by all sub-swarms.
        \item $(g , v_i) \models \varphi$ for all $v_i \in V'$, where $g$ is induced by the combined evolution of $M^s(t)$ for all $s \in \{1,\hdots,m\}$.
        \item A linear cost function $\mathcal{C} : (\mathbb{R}^{n_\mathrm{r}})^m \times (\mathbb{R}^{n_\mathrm{r} \times n_\mathrm{r}})^m \mapsto \mathbb{R}$ is minimzed over time.
    \end{enumerate}
    \label{prob:mc-approach-gtl}
\end{problem}

\begin{remark}
    According to~\eqref{eq:swarm_dyn}, $M^s(t)$ dictates the evolution of the density distribution $x^s(t)$ of the $s$-th sub-swarm. Thus, the matrices $M^s(t)$ for all $s\in\{1,\hdots,m\}$ also specify the graph trajectory  of $G$ since by Definition~\ref{def:gtl-swarm} the trajectory at $v_i$ is given by $f_i$, a function of all $x^s(t)$.
\end{remark}

The cost function $\mathcal{C}$ enables to distinguish among the trajectories satisfying $\varphi$. Typically, we seek to minimize over a time horizon $T$, $\sum_{t=0}^{T} \mathcal{C}((x^s(t))^m_{s=1},(M^s(t))^m_{s=1})$.

As a toy example, consider Example~\ref{running-example} with two sub-swarms having the same motion constraints $A_{\mathrm{adj}}^1=A_{\mathrm{adj}}^2$. The initial density distributions are $x^1(0) = [0.3,0.3,0.4]^T$ and $x^2(0) = [0.3,0.4,0.3]^T$. We label nodes $v_1$, $v_2$, and $v_3$ with the functions $f_1(x^1,x^2) = x^1_1+x^2_1$, $f_2(x^1,x^2) = x^2_2 - 2x_2^1$, and $f_3(x^1,x^2) = x^1_3$. We consider the GTL formula $\varphi_1 = X (\Box (y=0))$ specified for node $v_1$ and $v_2$ (bin $\mathcal{R}_1$ and $\mathcal{R}_2$), and no cost function. Recall that $y$ is a symbolic representation of $f_i$. That is, $\varphi_1$ specified at node $v_i$ can also be written as $\varphi_1 = X (\Box (f_i(x^1,x^2)=0))$. Intuitively, $\varphi_1$ specified for $v_1$ and $v_2$ means that starting from time index $1$, there should always be no agents from both sub-swarms in bin $\mathcal{R}_1$. Further, the density of the $2$nd sub-swarm in bin $\mathcal{R}_2$ is always twice the density of the $1$st sub-swarm in bin $\mathcal{R}_2$. Markov matrices $M^1(t)$ and $M^2(t)$ solution to Problem~\ref{prob:mc-approach-gtl} are given by
\begin{align*}
    M^1(0) = \begin{bmatrix} 0 & 0 & 0 \\ 1 & 0 & 0 \\ 0 & 1 & 1 \end{bmatrix},
    M^2(0) = \begin{bmatrix} 0 & 0 & 0 \\ 1 & 0.75 & 0 \\ 0 & 0.25 & 1 \end{bmatrix}, M^s(t) = \mathbb{I}_{3},
\end{align*}
where $s\in\{1,2\}$, $t \geq 1$ and $\mathbb{I}_{3}$ is the identity matrix of dimension $3$. Intuitively, at time $0$, the agents of each sub-swarm in $\mathcal{R}_1$ must move to $\mathcal{R}_2$ with probability $1$, the agents of the $1$st sub-swarm in $\mathcal{R}_2$ must move to $\mathcal{R}_3$ with probability $1$ while the agents of the $2$nd sub-swarm in $\mathcal{R}_2$ moves to $\mathcal{R}_3$ with probability $0.75$ and remains in $\mathcal{R}_2$ with probability $0.25$. For all $t \geq 1$, the agents in each sub-swarms remains in their current bin. The reader can check that with $x^s(t+1) = M^s(t) x^s(t)$ for $t \geq 0$ and $s \in \{1,2\}$, we have $x^1_1(t) =x^2_1(t) = 0$  and $x^2_2(t) = 2 x^1_2(t)$ hold for all $t \geq 1$. Thus, $\varphi_1$ is satisfied at nodes $v_1$ and $v_2$ at time index $0$.

\section{MINLP Formulation} \label{sec:minlp-formulation}

In this section, we do not make any assumptions on the GTL specifications or the structure of the graph, and we formulate Problem~\ref{prob:mc-approach-gtl} as a mixed-integer nonlinear programming (MINLP) problem containing $M^s(t)$ as the variables.

\subsection{Stochasticity and Motion Constraints}

The desired time-varying Markov matrices $M^s(t)$ at time index $t$ and for all $s$ are column-stochastic matrices, i.e.,
\begin{equation}\label{eq:stochast-constr}
    \boldsymbol{1}^{\mathrm{T}} M^s(t) = \boldsymbol{1}^{\mathrm{T}}.
\end{equation}

From Remark~\ref{remark-motion-constraints}, we have that $M^s_{i,j}(t) = 0$ if $A^s_{\mathrm{adj}} [j,i] = 0$, and $M^s_{i,j}(t) \geq 0$ otherwise. Thus, for all $s \in \{1,\hdots,m\}$,
\begin{align}
    (\boldsymbol{1}\boldsymbol{1}^{\mathrm{T}} - (A^s_{\mathrm{adj}})^{\mathrm{T}}) \odot M^s(t) &= \boldsymbol{0}, \label{eq:zero-transition}\\
    M^s(t) &\geq \boldsymbol{0}. \label{eq:pos-trans}
\end{align}

\subsection{Mixed-Integer Encoding of GTL Formulas}
In this section, we ignore the constraints implied by the dynamics~\eqref{eq:swarm_dyn}, and we build on the work in~\cite{wolff2014optimization} to provide a mixed-integer linear program (MILP) for finding graph trajectories satisfying a GTL formula $\varphi$ on the labelled graph $G = (V,E)$, where $V = \{v_1,\hdots,v_{n_\mathrm{r}}\}$. 

Although satisfying an infinite-horizon GTL formulas requires a graph trajectory of infinite length, we design \emph{periodic} trajectories to capture the infinite length requirement.
\begin{definition}[\textsc{$(k_\mathrm{p},l_\mathrm{p})$-periodic graph trajectory}]\label{def:periodic-trajectory}
    A graph trajectory $g_\mathrm{p}$ is $(k_\mathrm{p},l_\mathrm{p})$-periodic if
    \begin{align}
        g_\mathrm{p}(v_i, k_\mathrm{p}) = g_\mathrm{p}(v_i, l_\mathrm{p}-1), \label{eq:loop-cond}
    \end{align}
    and for all $t \in \{l_\mathrm{p},\hdots,k_\mathrm{p}-1\}$, we have that
    \begin{align}
        g_\mathrm{p}(v_i, t + (k_\mathrm{p}-l_\mathrm{p}+1)q) = g(v_i, t), \: \forall q \geq 0,  \label{eq:periodicity-cond}
    \end{align}
    where $l_\mathrm{p}, k_\mathrm{p} \in \mathbb{T}$ are such that $0<l_\mathrm{p} \leq k_\mathrm{p}$, and $v_i \in V$ is a node of $G$. Thus, such trajectory can be seen as a finite sequence of length $k_\mathrm{p}$, where a loop is introduced between the $(k_\mathrm{p}-1)$-th and the $(l_\mathrm{p}-1)$-th elements of the sequence.
\end{definition}

As a consequence, given a node $v_i \in V$ and a length $k_\mathrm{p} \in \mathbb{T}$, we seek for mixed-integer linear constraints that are satisfiable if and only if there exists $l_\mathrm{p}\in (0,k_\mathrm{p}]$ and a  $(k_\mathrm{p},l_\mathrm{p})$-periodic graph trajectory $g_\mathrm{p}$ such that $(g_\mathrm{p}, v_i) \models \varphi$. Specifically, given $t \in \{0,\hdots,k_\mathrm{p}\}$ and the formula $\varphi$, we construct the equivalent mixed-integer constraints that encode $(g_\mathrm{p},v_i,t) \models \varphi$ by induction on $t$ and $\varphi$ as follows.

Before going through the induction, we first encode the loop constraint resulting from the periodicity of $g_\mathrm{p}$. To this end, we introduce $k_\mathrm{p}$ binary variables $l_1,\hdots,l_{k_\mathrm{p}}$ which determine where the graph trajectory loops. The variables are such that there is a unique $l_j$ satisfying $l_j=1$ and such $l_j$ enforces $g_\mathrm{p}(v_i, k_\mathrm{p}) = g_\mathrm{p}(v_i, j-1)$. Thus, since $g_\mathrm{p}$ is affine in $x^s$, such loop constraint can be encoded as the mixed-integer constraints
\begin{align}
    &l_1 + l_2 + \cdots + l_{k_{\mathrm{p}}} = 1, \label{eq:unique-lj}\\
    &g_\mathrm{p}(v_i, k_\mathrm{p}) \leq g_\mathrm{p}(v_i, j-1) + P (1 - l_j), \: j = 1,\hdots,k_\mathrm{p}, \label{eq:loop1}\\
    &g_\mathrm{p}(v_i, k_\mathrm{p}) \geq g_\mathrm{p}(v_i, j-1) - P (1 - l_j), \: j = 1,\hdots,k_\mathrm{p}, \label{eq:loop2}
\end{align}
where $P>0$ is a sufficiently large positive number, and recall that $g(v_i,t) = f_i(x^1(t),\hdots,x^m(t))$.

In the case $\varphi = \pi$, where $\pi$ is an atomic node proposition, $\mathcal{O}(\pi) \subseteq \mathcal{X}$ is a convex polyhedra by Assumption~\ref{ass:atomic-prop-halfspace}. Thus, using the halfspace representation, we can write $\mathcal{O}(\pi)$ as the intersection of finite number of halfspaces. That is, there exists a matrix $A$ and vector $b$ of appropriate dimensions such that $g_\mathrm{p}(v_i,t) \in \mathcal{O}(\pi)$ if and only if $A g_\mathrm{p}(v_i,t) \leq b$. That is, $g_\mathrm{p}(v_i,t) \in \mathcal{O}(\pi)$ if and only if $ A f_i(x^1(t),\hdots,x^m(t)) \leq b$. Since $f_i$ is a linear function of $x^1, \hdots, x^m$,  the last inequality is also a linear inequality. Thus, if the binary variable $\varphi^t \in \{0,1\}^d$ encodes the result of the query $(g_\mathrm{p}, v_i, t) \models \pi$, the equivalent mixed-integer constraints are given by~\eqref{eq:unique-lj}--\eqref{eq:loop2},
\begin{align}
    A f_i(x^1(t),\hdots,x^m(t)) &\leq b + P (1-\varphi^t), \\
    A f_i(x^1(t),\hdots,x^m(t)) &> b - P \varphi^t.
\end{align} 

For $\varphi = \lnot \varphi_1$, let $\varphi^t_1$ (binary or continuous) encodes the result of  the query $(g_\mathrm{p}, v_i, t) \models \varphi_1$. That is, $\varphi^t_1 = 1$ if and only if $(g_\mathrm{p}, v_i, t) \models \varphi_1$. By induction hypothesis on $\varphi_1$, there exists an equivalent mixed-integer constraint denoted by $[[A^{1},b^{1},w^p,\varphi^t_1]]$ such that $A^{1} [x^1(t_{1});\hdots; x^m(t_{1}); w^p;\varphi^t_1] \leq b^{1}$, for some $A^{1}$ and $b^{1}$ of appropriate dimensions. $w^p \in [0,1]^p$ combines binary and continuous variables, where $p \in \mathbb{N}$. Thus, if the continuous variable $\varphi^t \in [0,1]$ encodes the result of the query $(g_\mathrm{p}, v_i, t) \models \lnot \varphi_1$,  using the definition of the $\lnot$ operator, the equivalent mixed-integer constraints are given by~\eqref{eq:unique-lj}--\eqref{eq:loop2},
\begin{align}
    &A^{1} [x^1(t_{1});\hdots; x^m(t_{1}); w^p; \varphi^t_1] \leq b^{1}, \\
    &\varphi^t =  1 - \varphi^t_1
\end{align}

In the case $\varphi = \varphi_1\wedge\varphi_2$, let $\varphi^t_1$ and $\varphi^t_2$, both either binary or continuous variables, encode the result of the queries $(g_\mathrm{p}, v_i, t) \models \varphi_1$ and $(g_\mathrm{p}, v_i, t) \models \varphi_2$, respectively. By induction hypothesis, there exists equivalent mixed-integer constraints $[[A^{1},b^{1},w^p,\varphi^t_1]]$ and $[[A^{2},b^{2},z^q,\varphi^t_2]]$ for satisfiablity of $\varphi_1$ and $\varphi_2$. Thus, if the continuous variable $\varphi^t \in [0,1]$ encodes the result of the query $(g_\mathrm{p}, v_i, t) \models \varphi_1 \wedge \varphi_2$,  using the definition of the $\wedge$ operator, the equivalent mixed-integer constraints are given by~\eqref{eq:unique-lj}--\eqref{eq:loop2},
\begin{align}
    &A^{1} [x^1(t_{1});\hdots; x^m(t_{1}); w^p;\varphi^t_1] \leq b^{1}, \\
    &A^{2} [x^1(t_{1});\hdots; x^m(t_{1}); z^q;\varphi^t_2] \leq b^{2}, \\
    &\varphi^t \leq \varphi^t_j, \: j=1,2, \\
    &\varphi^t \geq \varphi^t_1 + \varphi^t_2 - 1.
\end{align}

In the case $\varphi = X \varphi_1$, let $[[A^{1},b^{1},w^p,\varphi^t_1]]$ be the equivalent mixed-integer constraint obtained by induction on $\varphi_1$, where $\varphi^t_1$ (binary or continuous) encodes the result of the query $(g_\mathrm{p}, v_i, t) \models \varphi_1$. If $t < k_\mathrm{p}$ and the binary variable $\varphi^t$ encodes the result of the query $(g_\mathrm{p}, v_i, t) \models X \varphi_1$, then the equivalent mixed-integer constraint is given by~\eqref{eq:unique-lj}--\eqref{eq:loop2},
\begin{align}
    &A^{1} [x^1(t_{1});\hdots; x^m(t_{1}); w^p; \varphi^t_1] \leq b^{1} + P (1- \varphi^t), \nonumber \\
    &A^{1} [x^1(t_{1});\hdots; x^m(t_{1});w^p; \varphi^t_1] > b^{1} - P \varphi^t, \nonumber
\end{align}
where $t_1 = t+1$ and $P >0$ is a sufficiently large number. However, if $t = k_\mathrm{p}$, we need to encode that the next time step $t+1$ corresponds to the unique $j$ such that $l_j=1$. That is, $\varphi^{k_\mathrm{p}} = \vee_{j=1}^{k_\mathrm{p}} (l_j \wedge \varphi_1^j)$, wherein $\varphi_1^j$ encodes the result of the query $(g_\mathrm{p}, v_i, j) \models \varphi_1$. It is straightforward to see that the latter constraint can also be encoded as mixed-integer constraints as $\vee$ can be transformed into $\wedge$ and $\lnot$, for which we already obtained mixed-integer constraints.

In the case $\varphi = \varphi_1 \mathcal{U} \varphi_2$, let $\varphi_1^j$ and $\varphi_2^j$ be the binaries or continuous variables encoding the result of $(g_\mathrm{p}, v_i, j) \models \varphi_1$ and $(g_\mathrm{p}, v_i, j) \models \varphi_2$, respectively, for all $j \in \{0,\hdots,k_\mathrm{p}\}$. If $t < k_\mathrm{p}$ and $\varphi^t$ is the result of $(g_\mathrm{p}, v_i, t) \models \varphi$, the definition of $\mathcal{U}$ enables to write that $\varphi^t = \varphi_2^t \vee (\varphi_1^t \wedge \varphi^{t+1})$. With a similar approach to~\cite{wolff2014optimization}, we resolve the circular reasoning appearing at $t = k_p$ by $\varphi^{k_\mathrm{p}}= \varphi^{k_\mathrm{p}}_2 \vee (\varphi^{k_\mathrm{p}}_1 \wedge (\vee_{j=1}^{k_\mathrm{p}} (l_j \wedge \varphi^{j}_\mathrm{c})))$,  where $\varphi^{j}_\mathrm{c}$ is recursively defined by $\varphi^{k_\mathrm{p}}_\mathrm{c} = \varphi^{k_\mathrm{p}}_2$,  and $\varphi^j_\mathrm{c} = \varphi^{j}_2 \vee (\varphi^{j}_1 \wedge \varphi^{j+1}_\mathrm{c})$ for all $0 \leq j < k_\mathrm{p}$. Thus, since equivalent mixed-integer constraints can be obtained for the $\vee$ and $\wedge$ operators, by induction we can also construct mixed-integer constraints for satisfiability of $\varphi$.

Finally, in the case $\varphi = \exists^{N}(\bigcirc \cdots \bigcirc)\varphi_1$, let denote by $\mathcal{S}_i$ the set of subset of $V$ such that 
    \begin{equation}
    \begin{aligned}
        \mathcal{S}_i = \{\{v_1,\hdots,v_k\}\: | &\: v_j \in \bigcirc\cdots \bigcirc(\{v_i\}), \: j \leq k, \\
                                        &  k \geq N, v_j \neq v_p, \text{ for } p \neq  j\}.
    \end{aligned}
    \label{neighbor_set_nodes}
    \end{equation}
    If $\varphi^t$ denotes the result of the query $(g_\mathrm{p}, v_i, t) \models \varphi$, then $\mathcal{S}_i$ is empty implies that $\varphi^t = 0$. Otherwise, if $\mathcal{S}_i$ is non empty, we have that $\varphi^t = \vee_{S \in \mathcal{S}_i}( \wedge_{v_k \in S}(\varphi^t_1 (v_k)))$, where $\varphi^t_1(v_k)$ encodes the result of the query $(g_\mathrm{p}, v_k, t) \models \varphi_1$. Thus, we can construct mixed-integer constraints for satisfaction of $\varphi$.

\begin{corollary}[\textsc{MILP for infinite-horizon GTL}]\label{eq:milp-spec}
    Given a GTL formula $\varphi$, a trajectory length $k_\mathrm{p}$, and a node $v_i \in V$, the existence of a periodic graph trajectoy $g_\mathrm{p}$ of length $k_\mathrm{p}$ such that $(g_\mathrm{p},v_i) \models \varphi$ can be equivalently formulated as the mixed-integer constraint $A^i [\boldsymbol{x}; w^{q_i}; l] \leq b^i$, where the variables are $\boldsymbol{x} = [x^1(0);\hdots; x^1(k_\mathrm{p}); \hdots; x^m(0);\hdots; x^m(k_\mathrm{p})]$, $w^{q_i} \in [0,1]^{q_i}$ has continuous and binary components, and $l \in \{0,1\}^{k_\mathrm{p}}$ is such that if there exists $j \in \{1,\hdots,k_\mathrm{p}\}$ with $l_j = 1$, then the resulting $g_\mathrm{p}$ is $(k_\mathrm{p}, j)$-periodic. Further, the parameters $b^i \in \mathbb{R}^{p_i}$, $A^i$ of appropriate dimensions, and $q_i ,p_i\in \mathbb{N}$ depend only on $\varphi$ and $v_i$.
\end{corollary}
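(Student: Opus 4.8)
The plan is to prove the statement by structural induction on $\varphi$, assembling the per-operator encodings derived in the preceding paragraphs into a single aggregated system. First I would fix the length $k_\mathrm{p}$ and the node $v_i$, and introduce the loop-selection constraints~\eqref{eq:unique-lj}--\eqref{eq:loop2} once and for all; these contribute the variables $l \in \{0,1\}^{k_\mathrm{p}}$ and are linear in $\boldsymbol{x}$ precisely because each $g_\mathrm{p}(v_i,t) = f_i(x^1(t),\dots,x^m(t))$ is affine by Assumption~\ref{ass:atomic-prop-halfspace}. The induction hypothesis I would carry is that for every subformula $\psi$ of $\varphi$, every node $v_k \in V$, and every time index $t \in \{0,\dots,k_\mathrm{p}\}$, the construction associates a variable $\psi^t(v_k) \in [0,1]$ together with a finite block of mixed-integer linear inequalities, such that on any feasible point the trajectory induced by $\boldsymbol{x}$ satisfies $\psi^t(v_k) = 1$ if and only if $(g_\mathrm{p}, v_k, t) \models \psi$.

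The base case $\psi = \pi$ is exactly the atomic-proposition encoding: by Assumption~\ref{ass:atomic-prop-halfspace}, $\mathcal{O}(\pi)$ is a convex polyhedron $\{A y \le b\}$ and $f_i$ is affine, so $g_\mathrm{p}(v_k,t) \in \mathcal{O}(\pi)$ reduces to a linear inequality in $\boldsymbol{x}$, which the big-$P$ pair links to the indicator. For the inductive cases $\lnot$, $\wedge$, $X$, $\mathcal{U}$, and $\exists^{N}(\bigcirc\cdots\bigcirc)$ I would invoke the constructions already given, each of which takes the inductively available blocks for the immediate subformulas and appends finitely many further linear inequalities plus a fresh indicator; the $\vee$-of-$\wedge$ patterns arising in the $X$, $\mathcal{U}$, and $\exists^{N}$ cases are rewritten through $\lnot$ and $\wedge$, whose encodings are established earlier in the same induction. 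The correctness of each block is precisely the semantic equivalence stated alongside the corresponding operator, so under the induction hypothesis the composite block for $\psi$ is correct.

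To finish, I would stack every inequality produced across all subformulas and all pairs $(v_k, t)$ into one matrix inequality $A^i[\boldsymbol{x}; w^{q_i}; l] \le b^i$, with $w^{q_i}$ collecting all auxiliary continuous and binary indicators, and append the single root constraint $\varphi^0(v_i) = 1$ encoding $(g_\mathrm{p}, v_i) \models \varphi$. Because $\varphi$ has finitely many subformulas, each generating at most $O(k_\mathrm{p})$ time-indexed blocks and at most $|V|$ node-indexed ones for the neighbor operator (whose set $\mathcal{S}_i$ from~\eqref{neighbor_set_nodes} is finite), the counts $q_i$ and $p_i$ are finite and determined by $\varphi$, $v_i$, and the fixed $k_\mathrm{p}$, as claimed; feasibility of the aggregated system is then equivalent to the existence of a $(k_\mathrm{p}, j)$-periodic trajectory satisfying $\varphi$ at $v_i$.

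I expect the main obstacle to be the infinite-horizon cases, namely making precise that the finite periodic encoding faithfully captures unbounded satisfaction. The delicate point is the circular dependence of $\varphi^{k_\mathrm{p}}$ on earlier indices in the $X$ and $\mathcal{U}$ rules: one must show that redirecting the successor of index $k_\mathrm{p}$ to the loop target selected by $l$, together with the auxiliary recursion $\varphi^{j}_\mathrm{c}$, reproduces the semantics of $\mathcal{U}$ over the full infinite unrolling~\eqref{eq:periodicity-cond} of $g_\mathrm{p}$ rather than merely over the finite prefix. Establishing this equivalence --- essentially that a witness time $t' \ge t$ for an until obligation can always be taken within one period --- is where the argument adapting~\cite{wolff2014optimization} does the real work; the remaining aggregation and counting steps are bookkeeping.
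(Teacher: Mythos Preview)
Your proposal is correct and follows essentially the same approach as the paper: the paper presents the per-operator mixed-integer encodings by structural induction on $\varphi$ in the paragraphs immediately preceding the corollary, and the corollary itself is stated as a summary of that construction without a separate proof block. Your write-up simply formalizes this induction, including the same handling of the loop constraints, the big-$P$ indicator linkage for atomic propositions, and the $\varphi^{j}_{\mathrm{c}}$ auxiliary recursion (adapted from~\cite{wolff2014optimization}) to break the circularity at $t=k_{\mathrm{p}}$ for $X$ and $\mathcal{U}$; the final aggregation step you describe is exactly what the corollary packages into the single system $A^i[\boldsymbol{x};w^{q_i};l]\le b^i$.
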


\begin{remark}
    Note that due to page limitations, the MILP encoding of GTL formulas shortly described in this section might not be optimal in the obtained number of constraints and continuous components of $w^{q_i}$. However, our code implementation provides efficient encoding of $\vee$, $\mathcal{U}$, safety property $\square \varphi$, persistence $\Diamond \square \varphi$, and liveness $\square \Diamond \varphi$.
\end{remark}

\subsection{Synthesis of a Time-Varying Markov Matrix via MINLPs}

Corollary~\ref{eq:milp-spec} shows that the synthesis of a graph trajectory satisfying a GTL formula at a given node can be equivalently formulated as mixed-integer constraints. However, the resulting graph trajectory must also incorporate the dynamics of the sub-swarms given by~\eqref{eq:swarm_dyn} and their motion constraints.

\begin{lemma}[\textsc{General MINLP formulation}] \label{corr:minlp-formulation-constr}
    Let $G = (V,E)$ be the labeled graph induced by the topology of the bins as in Definition~\ref{def:gtl-swarm}, $\varphi$ be a GTL formula, $k_\mathrm{p} \in \mathbb{T}$ be the desired length of a periodic graph trajectory, $V'$ be a subset of $V$, $\mathcal{C}$ be the cost function to minimize, and $x^1(0),\hdots, x^m(0)$ be the initial density distributions for all the sub-swarms. Then, the following statements are equivalent:
    \begin{enumerate}
        \item There exists a periodic graph trajectory $g_\mathrm{p}$ of length $k_\mathrm{p}$ such that $(g_\mathrm{p} , v_i) \models \varphi$ for all $v_i \in V'$ while the motion constraints are satisfied and the cost $\mathcal{C}$ is minimized.
        \item There exists a solution to the MINLP~\eqref{eq:cost-minlp}--\eqref{pgtl_markov_state_evol}.
    \end{enumerate}
        \begin{align}
            & \underset{x^s,M^s,w^{q_i}, l}{\mathrm{minimize}} \quad \quad \quad \quad \quad  \sum_{t=0}^{k_\mathrm{p}} \mathcal{C}((x^s(t))^m_{s=1},(M^s(t))^m_{s=1}) \label{eq:cost-minlp}\\
            & \mathrm{subject \ to} \quad \quad \quad \quad \quad \quad w^{q_i} \in [0,1]^{q_i}, l \in \{0,1\}^{k_\mathrm{p}}\nonumber\\
            &\forall v_i \in V', \quad \quad \quad \quad  \quad \:  \quad A^i [\boldsymbol{x}; w^{q_i}; l] \leq b^i, \label{pgtl_induced_constr}\\
            &\forall t,s \in \mathbb{N}_{[1,k_\mathrm{p}] \times [1,m]}, \quad \quad  \boldsymbol{1}^{\textrm{T}} \: x^s(t) = 1, \label{pgtl_dist_constr} \\
            & \forall t,s \in \mathbb{N}_{[1,k_\mathrm{p}] \times [1,m]} , \quad \quad x^s(t) \geq \boldsymbol{0}, \label{pgtl_pos_x_constr} \\
            & \forall t,s \in \mathbb{N}_{[0,k_\mathrm{p}-1] \times [1,m]}, \quad \boldsymbol{1}^{\textrm{T}} \: M^s(t) = \boldsymbol{1}^{\textrm{T}}, \label{pgtl_stochas_constr}\\
            &\forall t,s \in \mathbb{N}_{[0,k_\mathrm{p}-1] \times [1,m]}, \quad M^s(t) \geq \boldsymbol{0}, \label{pgtl_pos_P_constr} \\
            &\forall t,s \in \mathbb{N}_{[0,k_\mathrm{p}-1] \times [1,m]}, (\boldsymbol{1}\boldsymbol{1}^{\textrm{T}} - (A^s_{\mathrm{adj}})^{\textrm{T}}) \odot M^s(t) = \boldsymbol{0}, \label{pgtl_adjacency_constr}\\
            & \forall t,s \in \mathbb{N}_{[0,k_\mathrm{p}-1] \times [1,m]}, \quad x^s(t+1) = M^s(t) \: x^s(t),\label{pgtl_markov_state_evol}
        \end{align}
        where the variables  are $x^s(t)$, $M^s(t)$, $l \in \{0,1\}^{k_\mathrm{p}}$, $w^{q_i} \in [0,1]^{q_i}$ for all $v_i \in V'$ with $i$ denoting a bin index, the parameters $A^i,b^i$ and $q_i$ depend only on $v_i$ and $\varphi$ for all $v_i \in V'$, and $\mathbb{N}_{[a,b] \times [c,d]} = \{a,\hdots,b\} \times \{c,\hdots,d\}$. Recall that $\boldsymbol{x}$ stacks $x^s(t)$ for all possible values of $s$ and $t$, and $w^{q_i}$ has components that can be either binary and continuous.
\end{lemma}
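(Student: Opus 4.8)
The plan is to prove the two implications separately, treating the MINLP \eqref{eq:cost-minlp}--\eqref{pgtl_markov_state_evol} as the conjunction of two independent blocks of constraints: the block \eqref{pgtl_induced_constr} that encodes GTL satisfaction at each node of $V'$, and the block \eqref{pgtl_dist_constr}--\eqref{pgtl_markov_state_evol} that encodes ``a sequence of valid density distributions generated by valid Markov matrices respecting the motion constraints.'' The heavy lifting for the first block is already supplied by Corollary~\ref{eq:milp-spec}, so the bulk of the argument is a line-by-line matching of the second block against the definitions of Section~\ref{sec:prel} and the requirements of Problem~\ref{prob:mc-approach-gtl}.

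First I would fix the dictionary between the two descriptions. Because the node labels are affine (Definition~\ref{def:gtl-swarm}), a graph trajectory on $G$ is completely determined by the density sequences $(x^s(t))_t$ through $g_\mathrm{p}(v_i,t)=f_i(x^1(t),\dots,x^m(t))$, and these sequences are in turn generated from the fixed initial distributions by the Markov matrices via \eqref{eq:swarm_dyn}. Thus the Markov matrices $M^s(t)$ are the primitive variables in both statements: statement~1 speaks of a trajectory \emph{induced} by them, while the MINLP carries them explicitly together with the auxiliary $x^s(t)$ tied to them by the bilinear constraint \eqref{pgtl_markov_state_evol}.

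For the direction (1)~$\Rightarrow$~(2), I would start from a $(k_\mathrm{p},l_\mathrm{p})$-periodic trajectory $g_\mathrm{p}$ that satisfies $\varphi$ at every $v_i\in V'$ while respecting the motion constraints and minimizing $\mathcal{C}$. The realizing matrices are column-stochastic with the support prescribed by $A^s_{\mathrm{adj}}$, which is exactly \eqref{pgtl_stochas_constr}--\eqref{pgtl_adjacency_constr}, and they reproduce the densities through \eqref{pgtl_markov_state_evol}; the densities being genuine distributions gives \eqref{pgtl_dist_constr}--\eqref{pgtl_pos_x_constr}. Feeding this trajectory into Corollary~\ref{eq:milp-spec} for each $v_i\in V'$ yields, for the common loop index $j=l_\mathrm{p}$, auxiliary variables $w^{q_i}$ and a single loop vector $l$ satisfying \eqref{pgtl_induced_constr}; since the objective \eqref{eq:cost-minlp} is literally $\mathcal{C}$ summed over the horizon, the candidate is feasible and optimal. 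The converse (2)~$\Rightarrow$~(1) reverses each step: a feasible assignment defines $M^s(t)$ that are valid Markov matrices with the right support, hence the sequence $x^s(t)$ they generate is a valid density trajectory; setting $g_\mathrm{p}(v_i,t)=f_i(\cdot)$ and invoking the ``if'' part of Corollary~\ref{eq:milp-spec} with the index $j$ for which $l_j=1$ certifies that $g_\mathrm{p}$ is $(k_\mathrm{p},j)$-periodic (Definition~\ref{def:periodic-trajectory}) and satisfies $\varphi$ at every $v_i\in V'$.

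The step I expect to require the most care is the shared loop variable. Corollary~\ref{eq:milp-spec} is stated per node, each with its own $w^{q_i}$ but a common $l$; I must check that a single $(k_\mathrm{p},l_\mathrm{p})$ periodicity is simultaneously consistent across all nodes of $V'$ and across all subformulas, which holds because periodicity is a global property of the one density sequence that labels the entire graph, so one loop vector $l$ legitimately serves all the constraints \eqref{pgtl_induced_constr}. The only other place needing attention is the equivalence between ``motion constraints satisfied'' in statement~1 and the explicit support constraint \eqref{pgtl_adjacency_constr} paired with the bilinear dynamics \eqref{pgtl_markov_state_evol}: one must verify that both the forward extraction of realizing matrices and the backward generation of the density sequence preserve column-stochasticity and the adjacency pattern, which is immediate from Definition~\ref{def:motion_constraint} and Remark~\ref{remark-motion-constraints}.
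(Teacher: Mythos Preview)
Your proposal is correct and follows essentially the same approach as the paper: the paper's proof is a terse two-sentence application of Corollary~\ref{eq:milp-spec} for constraint~\eqref{pgtl_induced_constr}, together with the observation that \eqref{pgtl_markov_state_evol} encodes the dynamics~\eqref{eq:swarm_dyn} and \eqref{pgtl_dist_constr}--\eqref{pgtl_adjacency_constr} encode the density, stochasticity, and motion constraints. Your version is simply a more careful unpacking of the same argument, including the explicit two-direction split and the remark about the shared loop vector $l$, which the paper leaves implicit.
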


\begin{proof}
    This is a direct application of Corollary~\ref{eq:milp-spec}. The constraint~\eqref{pgtl_induced_constr} is obtained by the equivalence shown in Corollary~\ref{eq:milp-spec}. The bilinear constraint~\eqref{pgtl_markov_state_evol}, source of nonlinearity, is resulting from the dynamics~\eqref{eq:swarm_dyn}. The definition of the density distribution, the stochasticity, and the motion constraints are given by~\eqref{pgtl_dist_constr}--\eqref{pgtl_adjacency_constr}. 
\end{proof}

\section{Efficient Solutions} \label{sec:efficient-solution}
In this section, we provide an efficient algorithm to find locally-optimal solutions to the mixed-integer nonlinear program (MINLP)~\eqref{eq:cost-minlp}--\eqref{pgtl_markov_state_evol}. We first show that for a specific and widely-used subset of GTL specifications, it is only sufficient to solve a linear program (LP) or semi-definite program (SDP). Then, we use the specific structure of the problem to propose an efficient sequential mixed-integer linear programming (MILP) to address the problem.

\subsection{LP and SDP Formulations for Reach-Avoid Specifications}
We first specify explicitly the subset of GTL formulas corresponding to reach-avoid specifications.
\begin{definition}[\textsc{Reach-avoid specifications}]\label{def:reach-avoid-spec}
    Given the distributions $\nu^1, \hdots, \nu^m \in [0,1]^{n_\mathrm{r}}$, a GTL formula $\varphi$ encoding safety constraints (i.e., avoid specifications) in the form $\varphi = \wedge_{k=1}^{n_\mathrm{s}} (\square (\pi_k))$, where $\pi_k$ is an atomic node proposition, the reach-avoid specifications constrain the densities $x^1,\hdots, x^m$ to reach the steady-state distributions $\nu^1, \hdots, \nu^m$, respectively, while the resulting graph trajectory must satisfy $\varphi$.
\end{definition}
\begin{remark}
    The steady-state distribution constraints can be also encoded using GTL formulas with operators such as $\Diamond \square  \pi$ or $\Diamond \pi$, where $\pi$ is applied on an adequate node labelling.
\end{remark} 

Recall that in order for a graph trajectory to satisfy an atomic proposition $\pi$ at a node $v_i \in V$ and time $t \in \mathbb{T}$, we have that $g(v_i,t) \in \mathcal{O}(\pi)$. That is, there should exist $x^s(t)$ for all $s \in \{1,\hdots,m\}$ such that $A f_i(x^1(t),\hdots,x^m(t)) \leq b$, where $A$ and $b$ are defined by the polyhedra $\mathcal{O}(\pi)$. Since the function $f_i$ is affine in its arguments, we can write such constraint as the linear constraint $A^i [x^1(t);\hdots;x^m(t)] \leq b^i$, where $A^i$ of adequate dimension encodes both $A$ and the linear part of $f_i$ and $b^i \in \mathbb{R}^{p_i}$ incorporates both $b$ and the constant part of $f_i$.

As a consequence, the satisfiability of a safety specification, e.g. $\square (\pi)$ at $v_i$, can be equivalently formulated as the infinite-dimensional linear constraint 
\begin{align}\label{eq:safety-constraint}
    A^i [x^1(t);\hdots;x^m(t)] \leq b^i, \: \forall t \geq 0,
\end{align} 
where $x^s(t)$ for all $s \in \{1,\hdots,m\}$ are the variables.
\begin{lemma}[\textsc{Finite-dimensional linear encoding for safety constraints}]\label{lem:linear-constraint-safety}
    Assume $A^i [x^1(0);\hdots;x^m(0)] \leq b^i$ is satisfied. Then, the safety specification given by the infinite-dimensional constraint~\eqref{eq:safety-constraint} is satisfied if and only if there exists $Y \in \mathbb{R}^{ p_i \times p_i}$, $S \in \mathbb{R}^{p_i \times m}$  such that 
    \begin{align} 
	        Y b^i + S \boldsymbol{1}   &\geq -b^i, \label{eq:safety-constraint-finite-1}\\
	        Y A^i + S \mathbb{O} &\leq -A^i \mathcal{M}(t), \label{eq:safety-constraint-finite-2}\\
	        Y &\leq \boldsymbol{0}, \label{eq:safety-constraint-finite-3}
    \end{align}
    where $\mathcal{M}(t) = \mathrm{diag}(M^1(t),\hdots, M^m(t))$ is a block diagonal matrix of $M^s(t)$ for all $s \in \{1,\hdots,m\}$, the matrix $\mathbb{O} \in \mathbb{R}^{m \times n_\mathrm{r}m}$ satisfies $\mathbb{O}_{i,j} = 1$ for all $i \in \{1,\hdots,m\}$, $j \in \{n_\mathrm{r}(i-1), \hdots, n_\mathrm{r} i\}$, and $\mathbb{O}_{i,j} = 0$ otherwise. 
\end{lemma}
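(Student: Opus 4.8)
The plan is to read the safety specification as a \emph{positive-invariance} condition for the polyhedron $\mathcal{P} = \{z : A^i z \leq b^i\}$ and to certify it through the affine (extended) version of Farkas' lemma. First I would stack the sub-swarm densities into a single vector $z(t) = [x^1(t);\hdots;x^m(t)] \in \mathbb{R}^{n_\mathrm{r} m}$ and record three facts that hold along any admissible trajectory: (i) the dynamics~\eqref{pgtl_markov_state_evol} read $z(t+1) = \mathcal{M}(t)\, z(t)$ with $\mathcal{M}(t) = \mathrm{diag}(M^1(t),\hdots,M^m(t))$; (ii) nonnegativity~\eqref{pgtl_pos_x_constr} gives $z(t) \geq \boldsymbol{0}$; and (iii) the stochasticity~\eqref{pgtl_dist_constr} is, by the definition of $\mathbb{O}$, exactly $\mathbb{O}\, z(t) = \boldsymbol{1}$. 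With these identities the infinite-dimensional constraint~\eqref{eq:safety-constraint} is, by induction on $t$ starting from the hypothesis $A^i z(0) \leq b^i$, equivalent to the one-step implication: whenever $A^i z \leq b^i$, $\mathbb{O} z = \boldsymbol{1}$ and $z \geq \boldsymbol{0}$, one has $A^i \mathcal{M}(t)\, z \leq b^i$.

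For the sufficiency direction ($\Leftarrow$) I would argue by induction. Suppose $Y \leq \boldsymbol{0}$ and $S$ satisfy~\eqref{eq:safety-constraint-finite-1}--\eqref{eq:safety-constraint-finite-3}, and assume $A^i z(t) \leq b^i$. Rewriting~\eqref{eq:safety-constraint-finite-2} as $A^i \mathcal{M}(t) \leq -Y A^i - S \mathbb{O}$ and multiplying on the right by $z(t) \geq \boldsymbol{0}$ (which preserves the inequality), I obtain $A^i z(t+1) = A^i \mathcal{M}(t)\, z(t) \leq -Y\big(A^i z(t)\big) - S\big(\mathbb{O} z(t)\big)$. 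Using $\mathbb{O} z(t) = \boldsymbol{1}$, then $-Y \geq \boldsymbol{0}$ together with $A^i z(t) \leq b^i$, and finally~\eqref{eq:safety-constraint-finite-1} in the form $-Y b^i - S \boldsymbol{1} \leq b^i$, the right-hand side is bounded by $b^i$, which closes the induction. The role of each hypothesis is transparent here: $Y \leq \boldsymbol{0}$ lets the polyhedral bound propagate through the multiplier, while the nonnegativity $z(t)\geq\boldsymbol{0}$ is precisely what converts the matrix inequality~\eqref{eq:safety-constraint-finite-2} into a usable pointwise bound.

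For the necessity direction ($\Rightarrow$) I would invoke the affine Farkas' lemma (equivalently, LP strong duality) applied row by row to the one-step implication above, which is applicable because $A^i z(0) \leq b^i$ and $\mathbb{O} z(0) = \boldsymbol{1}$ guarantee that the premise set is nonempty. For the $k$-th row $e_k^{\mathrm{T}} A^i \mathcal{M}(t)$, the implication $\{A^i z \leq b^i,\ \mathbb{O} z = \boldsymbol{1}\} \Rightarrow e_k^{\mathrm{T}} A^i \mathcal{M}(t)\, z \leq b^i_k$ yields a multiplier $y_k \geq \boldsymbol{0}$ for the inequalities and a sign-free multiplier $s_k$ for the equality such that $e_k^{\mathrm{T}} A^i \mathcal{M}(t) = y_k^{\mathrm{T}} A^i + s_k^{\mathrm{T}} \mathbb{O}$ and $y_k^{\mathrm{T}} b^i + s_k^{\mathrm{T}} \boldsymbol{1} \leq b^i_k$. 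Stacking the rows and setting $Y = -[y_1,\hdots,y_{p_i}]^{\mathrm{T}}$ and $S = -[s_1,\hdots,s_{p_i}]^{\mathrm{T}}$ recovers $Y \leq \boldsymbol{0}$ and~\eqref{eq:safety-constraint-finite-1}, while the equality produced by Farkas' lemma immediately implies the relaxed inequality~\eqref{eq:safety-constraint-finite-2}.

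The main obstacle I expect is the necessity step, for two reasons. First, I must argue that the infinite-horizon trajectory condition is faithfully captured by the one-step polyhedral invariance implication, so that applying the converse row-by-row does not lose information; this is exactly where the structural facts $z(t)\geq\boldsymbol{0}$ and $\mathbb{O} z(t)=\boldsymbol{1}$ are indispensable, since they restrict the reachable states to the affine nonnegative slice on which the multipliers act. Second, the equality constraint $\mathbb{O} z = \boldsymbol{1}$ forces the use of the \emph{extended} Farkas' lemma with a sign-unrestricted multiplier $S$, and the transpose/sign bookkeeping needed to land on the precise signs of~\eqref{eq:safety-constraint-finite-1}--\eqref{eq:safety-constraint-finite-3} (in particular $Y \leq \boldsymbol{0}$ rather than $Y \geq \boldsymbol{0}$) must be carried out carefully.
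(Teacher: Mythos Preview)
Your approach is essentially the same as the paper's: both reduce the infinite-horizon safety condition to one-step positive invariance of the polyhedron $\mathcal{Y}=\{z\geq\boldsymbol{0}:A^i z\leq b^i,\ \mathbb{O} z=\boldsymbol{1}\}$ under $\mathcal{M}(t)$, and both certify this row by row via LP duality / Farkas. The paper writes the primal LP $\max\{e_k^{\mathrm{T}}A^i\mathcal{M}(t)z:z\in\mathcal{Y}\}$ and invokes strong duality (feasibility coming from $z(0)\in\mathcal{Y}$) to get both directions at once; you separate sufficiency (a clean direct multiplier computation) from necessity (Farkas). Either presentation is fine.

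There is one slip in your necessity step. You write the premise of the implication as $\{A^i z\leq b^i,\ \mathbb{O} z=\boldsymbol{1}\}$, dropping $z\geq\boldsymbol{0}$, and accordingly obtain from Farkas an \emph{equality} $e_k^{\mathrm{T}}A^i\mathcal{M}(t)=y_k^{\mathrm{T}}A^i+s_k^{\mathrm{T}}\mathbb{O}$, which you then relax to~\eqref{eq:safety-constraint-finite-2}. But the one-step implication is only known to hold for $z\geq\boldsymbol{0}$ (that is exactly the point you yourself flag as ``indispensable''), so the Farkas premise must retain $z\geq\boldsymbol{0}$. With it, the extra nonnegative multiplier for $-z\leq\boldsymbol{0}$ turns the certificate into the \emph{inequality} $y_k^{\mathrm{T}}A^i+s_k^{\mathrm{T}}\mathbb{O}\geq e_k^{\mathrm{T}}A^i\mathcal{M}(t)$, which after your sign change $Y=-[y_1,\hdots,y_{p_i}]^{\mathrm{T}}$, $S=-[s_1,\hdots,s_{p_i}]^{\mathrm{T}}$ is precisely~\eqref{eq:safety-constraint-finite-2}. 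So there is no need to pass through an equality that ``implies the relaxed inequality''; omitting $z\geq\boldsymbol{0}$ could in general make the equality-form certificate fail to exist, whereas including it gives~\eqref{eq:safety-constraint-finite-2} directly. This is exactly how the paper handles it: the primal LP has $v=[x;s]\geq\boldsymbol{0}$, so the dual constraint comes out as the inequality $(A^i)^{\mathrm{T}}y_k+\mathbb{O}^{\mathrm{T}}s_k\leq -(A^i\mathcal{M}(t))^{\mathrm{T}}e_k$.
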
 
\begin{proof}
    Let define the set $\mathcal{Y}$ of distributions characterizing the safety constraints by 
    $$\mathcal{Y} = \{y = [y^1;\hdots;y^m] \in \mathbb{R}^{n_\mathrm{r}m } | y \geq 0, \mathbb{O} y =  \boldsymbol{1}, A^i y \leq b^i \}.$$
    Let $x(t) = [x^1(t);\hdots;x^m(t)]$. Since the safety constraint~\eqref{eq:safety-constraint} is satisfied at $t=0$, one can observe that it remains satisfied if and only if $\forall x(t) \in \mathcal{Y}, A^i x(t+1) = A^i \mathcal{M}(t)x(t) \leq b^i$. The latter condition holds if and only for all $k \in \{1,\hdots,p_i\}$
    \begin{align}
        &\mathrm{maximize} \{ e_k^\mathrm{T} A^i \mathcal{M}(t)x(t) | x(t) \in \mathcal{Y}\} \leq b^i_k \label{eq:safety-ctr-max} \\
        \Longleftrightarrow &\mathrm{minimize} \{ - e_k^\mathrm{T} A^i \mathcal{M}(t)x(t) | x(t) \in \mathcal{Y}\} \geq -b^i_k. \nonumber
    \end{align}
    In the standard form, the minimization problem is given by
    \begin{equation}\label{proof:safety-min-lp-sf}
        \begin{aligned}
            & \underset{v = [x; s] \geq 0}{\mathrm{minimize}} & & [ - e_k^\mathrm{T} A^i \mathcal{M}(t) \: \: \boldsymbol{0}] v \\
            & \mathrm{subject \ to} & &  \begin{bmatrix} A^i & \mathbb{I} \\ \mathbb{O} & \boldsymbol{0} \end{bmatrix} v = \begin{bmatrix} b^i \\ \boldsymbol{1} \end{bmatrix},
        \end{aligned}
    \end{equation}
    where $\mathbb{I}$ is the identity matrix of appropriate dimension. Thus, the dual form of the above LP standard form is given by
    \begin{equation*}
	    \begin{aligned}
	        & \underset{y_k, s_k}{\mathrm{maximize}} & & [(b^i)^\mathrm{T} \: \: \boldsymbol{1}^\mathrm{T}] \begin{bmatrix} y_k \\ s_k \end{bmatrix} \\
	        & \mathrm{subject \ to} & &  \begin{bmatrix} (A^i)^\mathrm{T} & \mathbb{O}^\mathrm{T} \\ \mathbb{I} & \boldsymbol{0} \end{bmatrix} \begin{bmatrix} y_k \\ s_k \end{bmatrix} \leq \begin{bmatrix} -(A^i \mathcal{M}(t))^\mathrm{T} e_k \\ \boldsymbol{0} \end{bmatrix},
	    \end{aligned}
	\end{equation*}
	for all $k \in \{1,\hdots,p_i\}$.  For an LP, Strong duality holds when either the primal or dual problem is feasible~\cite{boyd2004convex}. Since the constraint~\eqref{eq:safety-constraint} is satisfied at $t=0$, $\mathcal{Y}$ is non-empty. Thus, the primal~\eqref{proof:safety-min-lp-sf} is feasible, hence strong duality holds. As a consequence, the constraint given by~\eqref{eq:safety-ctr-max} is equivalent to the existence of  $(y_k^*,s_k^*) \in \mathbb{R}^{p_i} \times \mathbb{R}^{m}$ such that
	\begin{equation*}
	    \begin{aligned}
	        (b^i)^\mathrm{T} y_k^* + \boldsymbol{1}^\mathrm{T} s_k^*   &\geq -b^i_k, \\
	        (A^i)^\mathrm{T} y_k^* + \mathbb{O}^\mathrm{T} s_k^* &\leq -(A^i \mathcal{M}(t))^\mathrm{T} e_k, \\
	        y_k^* &\leq \boldsymbol{0},
	    \end{aligned}
	\end{equation*}
	for all $k \in \{1,\hdots,p_i\}$. Hence, the announced result.
\end{proof}

\begin{remark}
    Lemma~\ref{lem:linear-constraint-safety} states that as long as there exists $Y$, $S$ and $\mathcal{M}(t)$ such that the linear constraints~\eqref{eq:safety-constraint-finite-1}--\eqref{eq:safety-constraint-finite-3} are satisfied, we only need to consider time-invariant markov matrices, i.e. $\mathcal{M} = \mathrm{diag}(M^1,\hdots, M^m)$, to satisfy the safety constraints~\eqref{eq:safety-constraint}. Hence, the linear constraints have finite dimension. In the remainder of this section, we focus on such time-invariant matrices and denote $\mathcal{M}$ instead of $\mathcal{M}(t)$.
\end{remark}

With the safety specifications given by~\eqref{eq:safety-constraint} expressed as finite-dimensional linear constraints, we now focus on the reach specifications. 
\begin{lemma}[\textsc{Ergodicity constraint,~\cite{seneta2006non}}]\label{lem:erg-constraint}
    Assume that each graph $G^s = (V, E^s)$ of each sub-swarm is strongly connected, i.e there exists a path between every pair of bins $\mathcal{R}_i$ and $\mathcal{R}_j$. Then, $\nu^1, \hdots, \nu^m$ are steady-state distributions of the sub-swarms if and only if the desired time-invariant markov matrices $M^s$ for all $s \in \{1,\hdots,m\}$ satisfy
    \begin{align}
        M^s \nu^s = \nu^s, \quad \forall s \in \{1,\hdots,m\}. \label{eq:erg-constraint}
    \end{align}
\end{lemma}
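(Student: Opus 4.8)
The plan is to prove the equivalence one sub-swarm at a time, since the condition~\eqref{eq:erg-constraint} decouples across $s$; so I fix $s$ and write $M = M^s$, $\nu = \nu^s$, where $M$ is a column-stochastic matrix whose zero pattern is dictated by $A^s_{\mathrm{adj}}$ and whose support graph is exactly $G^s$. The first step is to translate the hypothesis: strong connectivity of $G^s$ means precisely that $M$ is \emph{irreducible} as a nonnegative matrix, since there is a directed path from bin $\mathcal{R}_i$ to bin $\mathcal{R}_j$ in $G^s$ if and only if $(M^k)_{j,i} > 0$ for some $k$. I would then invoke the Perron--Frobenius theorem for irreducible nonnegative matrices (as in~\cite{seneta2006non}): because $M$ is column-stochastic, $\boldsymbol{1}^{\mathrm{T}} M = \boldsymbol{1}^{\mathrm{T}}$ shows that $1$ is an eigenvalue with left eigenvector $\boldsymbol{1}$ and that the spectral radius equals $1$; irreducibility then guarantees that $1$ is a \emph{simple} eigenvalue admitting a \emph{unique} strictly positive right eigenvector, which once normalized by $\boldsymbol{1}^{\mathrm{T}} \nu^\star = 1$ yields the unique stationary (steady-state) distribution $\nu^\star$ of the chain.

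With this in hand, the reverse direction is immediate: if $M \nu = \nu$ with $\nu$ a distribution, then $\nu$ is a fixed point of the dynamics~\eqref{eq:swarm_dyn}, i.e., $x^s(t) = \nu$ implies $x^s(t+1) = M\nu = \nu$, so the swarm remains at $\nu$; by the uniqueness just established $\nu = \nu^\star$ is the steady-state distribution, and the ergodic theorem for irreducible chains gives $\lim_{t\to\infty} x^s(t) = \nu$ from any initial density, so $\nu$ is genuinely reached. For the forward direction, if $\nu$ is the steady-state distribution then $\nu = \lim_{t\to\infty} x^s(t)$ exists and is invariant; passing to the limit in $x^s(t+1) = M x^s(t)$ yields $\nu = M\nu$, which is exactly~\eqref{eq:erg-constraint}.

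I expect the main obstacle to be the careful handling of convergence versus mere invariance. Irreducibility alone yields a unique fixed point and a limiting Ces\`aro average, but pointwise convergence $\lim_t M^t = \nu^\star \boldsymbol{1}^{\mathrm{T}}$ additionally requires aperiodicity of the chain. I would address this either by noting that the relevant gridworld adjacency matrices contain self-loops (so $M_{i,i} > 0$ for some $i$, making $M$ primitive), or by adopting ``steady-state distribution'' to mean the invariant distribution, in which case the equivalence with~\eqref{eq:erg-constraint} follows from the fixed-point and uniqueness argument alone. The remaining bookkeeping---verifying that the zero pattern of $M$ really matches the edge set $E^s$ and that the positivity $\nu^\star > 0$ is compatible with the reach objective---is routine.
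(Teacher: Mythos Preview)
The paper does not supply its own proof of this lemma; it is stated with a bare citation to Seneta~\cite{seneta2006non}, and your Perron--Frobenius argument is precisely the standard route one finds in that reference, so your proposal is correct and aligned with the intended source.

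One small caveat on what you call ``routine bookkeeping'': the motion constraints~\eqref{eq:zero-transition}--\eqref{eq:pos-trans} only force $M^s_{i,j}=0$ where $A^s_{\mathrm{adj}}[j,i]=0$; they do not guarantee strict positivity on the allowed edges, so the support of $M^s$ may be a \emph{strict} subgraph of $G^s$ and irreducibility of $M^s$ is not automatic from strong connectivity of $G^s$. The paper tacitly resolves this later (in the proof of Lemma~\ref{lem:scrambling-pattern}) by assuming $M^s \geq \epsilon (A^s_{\mathrm{adj}})^{\mathrm{T}}$ for some $\epsilon>0$; you should make this explicit rather than defer it, since without it your appeal to Perron--Frobenius for uniqueness and convergence does not go through. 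Your handling of the aperiodicity issue is appropriate.
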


Although Lemma~\ref{lem:erg-constraint} enables to write the reach specifications as the linear constraint~\eqref{eq:erg-constraint}, we seek for Markov matrices that converge optimally to the steady-state distribution.

\begin{definition}[\textsc{Coefficient of ergocity},~\cite{seneta2006non}]\label{def:coeff-ergocity}
    For a stochastic matrix $M \in \mathbb{R}^{n \times n}$, its coefficient of ergocity $\tau_1 (M)$ is defined by $\tau_1(M) = 0.5\: \underset{i,j \in \{1,\hdots,n\}}{\mathrm{max}} \sum_{p=1}^n |M_{p,i} - M_{p,j}|.$
\end{definition} 

\begin{lemma}[\cite{seneta2006non}, Theorem~2.10] \label{lem:max_bound_rate} 
    Given a stochastic matrix $M$, suppose $\lambda$ is an eigenvalue of $M$ such that $\lambda \neq 1$. Then, $|\lambda| \leq \tau_1 (M)$. In particular, the rate of convergence to the steady-state distribution given by the second largest eigen value, $\lambda_2(M)$, is such that $|\lambda_2(M)| \leq \tau_1(M)$.
\end{lemma}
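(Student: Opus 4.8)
The plan is to recognize this as the Dobrushin coefficient-of-ergodicity bound (Seneta, Theorem~2.10) and to prove it in two pieces: a contraction property of $M$ on the hyperplane of zero-sum vectors, and the observation that every eigenvector of $M$ for an eigenvalue $\lambda \neq 1$ lives in that hyperplane. Throughout I work with the column-stochastic convention of the paper, so that $\tau_1(M)$, which compares the \emph{columns} $m_i := Me_i$ of $M$ via $\tau_1(M) = \tfrac12 \max_{i,j}\norm{m_i - m_j}_1$, is the natural quantity.

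First I would record the hyperplane fact. Since $M$ is column-stochastic, $\boldsymbol{1}^{\mathrm{T}} M = \boldsymbol{1}^{\mathrm{T}}$, so $\boldsymbol{1}^{\mathrm{T}}$ is a left eigenvector for the eigenvalue $1$. If $Mv = \lambda v$ with $v \neq \boldsymbol{0}$, then $\boldsymbol{1}^{\mathrm{T}} M v = \boldsymbol{1}^{\mathrm{T}} v$ equals $\lambda\,\boldsymbol{1}^{\mathrm{T}} v$, whence $(\lambda - 1)\,\boldsymbol{1}^{\mathrm{T}} v = 0$. Because $\lambda \neq 1$, this forces $\boldsymbol{1}^{\mathrm{T}} v = 0$, i.e. the eigenvector is zero-sum.

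Next I would establish the contraction $\norm{M\delta}_1 \leq \tau_1(M)\,\norm{\delta}_1$ for every real zero-sum $\delta$. Writing $\delta = \delta^{+} - \delta^{-}$ with $\delta^{\pm}\geq \boldsymbol 0$ and $\norm{\delta^{+}}_1 = \norm{\delta^{-}}_1 = s = \tfrac12\norm{\delta}_1$, the normalized vectors $a = \delta^{+}/s$ and $b = \delta^{-}/s$ are probability vectors, and using $\sum_i a_i = \sum_j b_j = 1$ one gets the bilinear identity
\begin{equation*}
    M\delta = s\Big(\sum_i a_i m_i - \sum_j b_j m_j\Big) = s\sum_{i,j} a_i b_j\,(m_i - m_j).
\end{equation*}
Taking $\norm{\cdot}_1$ and bounding each $\norm{m_i - m_j}_1 \leq 2\tau_1(M)$ (while $\sum_{i,j} a_i b_j = 1$) yields $\norm{M\delta}_1 \leq 2 s\,\tau_1(M) = \tau_1(M)\,\norm{\delta}_1$; the pairing of positive against negative mass is exactly what produces the sharp factor $\tfrac12$ rather than a crude bound off by a factor of two.

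Finally I would apply the contraction to the (possibly complex) eigenvector $v$. The one subtlety is that $v$ need not be real, so the positive/negative decomposition does not apply directly; this is the step I expect to require the most care. I would bridge it by phase averaging: from the identity $|z| = \tfrac14\int_0^{2\pi}|\mathrm{Re}(e^{i\phi}z)|\,d\phi$ applied coordinatewise, together with $\mathrm{Re}(e^{i\phi}M\delta) = M\,\mathrm{Re}(e^{i\phi}\delta)$ (valid since $M$ is real) and the fact that $\mathrm{Re}(e^{i\phi}\delta)$ is a \emph{real} zero-sum vector, the real-case bound integrates up to $\norm{M\delta}_1 \leq \tau_1(M)\,\norm{\delta}_1$ for complex zero-sum $\delta$ as well. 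Specializing to $\delta = v$ gives $|\lambda|\,\norm{v}_1 = \norm{Mv}_1 \leq \tau_1(M)\,\norm{v}_1$, and since $\norm{v}_1 > 0$ we conclude $|\lambda| \leq \tau_1(M)$. The ``in particular'' clause is then immediate: the subdominant eigenvalue $\lambda_2(M)$ satisfies $\lambda_2(M) \neq 1$, so the general inequality gives $|\lambda_2(M)| \leq \tau_1(M)$, which controls the convergence rate to the steady-state distribution. (Since the paper cites Seneta directly, an alternative is simply to invoke his Theorem~2.10, but the phase-averaging argument above makes the complex case self-contained.)
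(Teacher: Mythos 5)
Your proof is correct. Note, however, that the paper itself gives no proof of this lemma: it is stated purely by citation to Seneta's book (Theorem~2.10), so there is no in-paper argument to compare against step by step. What you have written is a complete, self-contained derivation of the cited result, and it is essentially the classical Dobrushin coefficient-of-ergodicity argument: (i) any eigenvector for $\lambda \neq 1$ of a column-stochastic matrix is zero-sum, since $\boldsymbol{1}^{\mathrm{T}}$ is a left eigenvector for $1$; (ii) $M$ contracts the zero-sum hyperplane in $\ell_1$ with constant $\tau_1(M)$, which you obtain via the bilinear pairing identity $M\delta = s\sum_{i,j} a_i b_j (m_i - m_j)$ — this pairing of positive against negative mass is exactly what recovers the sharp factor $\tfrac12$ in $\tau_1$; and (iii) the contraction applied to the eigenvector yields $|\lambda|\norm{v}_1 \leq \tau_1(M)\norm{v}_1$. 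Your handling of the complex-eigenvector subtlety by phase averaging, using $|z| = \tfrac14\int_0^{2\pi}|\mathrm{Re}(e^{i\phi}z)|\,d\phi$ together with the fact that $\mathrm{Re}(e^{i\phi}\delta)$ is real and zero-sum, is clean and correct; the more common textbook route is to observe that $\tau_1(M)$ equals the $\ell_1$ operator norm of $M$ restricted to the zero-sum subspace and to verify that this characterization extends to complex zero-sum vectors, but your integral argument accomplishes the same thing without re-deriving the sup characterization. One cosmetic point: the final clause ``$\lambda_2(M) \neq 1$'' inherits an implicit assumption already present in the paper's statement (if $1$ were a repeated eigenvalue one would instead note that $\tau_1(M) = 1$ in that case, so the bound is vacuous but still true); you treat it exactly as the paper does, so this is not a gap in your argument relative to the claim being proved.
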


As a consequence of Lemma~\ref{lem:max_bound_rate}, by minimizing the linear function $\tau_1(M^s)$, one can obtain a tight upper bound on the rate of convergence of $x^s(t)$ to  $\nu^s$. 
\begin{definition}[\textsc{Scrambling pattern}]
    A graph $G^s = (V, E^s)$ has a scrambling pattern if for every pair of rows $i , j$, there exists a column $k$ such that $A^s_\mathrm{adj}[i,k] = A^s_\mathrm{adj}[j,k] = 1$.
\end{definition}    

We demonstrate in Lemma~\ref{lem:scrambling-pattern} that when the graph associated to each sub-swarm has a scrambling pattern, minimizing the function $\tau_1(M^s)$ guarantees an exponential rate of convergence to the steady-state distribution since $\tau_1(M^s)$ is an upper bound on the second largest eigen value $\lambda_2(M^s)$.

\begin{lemma}[\textsc{Exponential convergence rate for graphs  with scrambling pattern}]\label{lem:scrambling-pattern}
    The graph $G^s = (V, E^s)$ has a scrambling pattern for all $s \in \{1,\hdots,m\}$ if and only if  $\tau_1(M^s) < 1$. Hence $\lambda_2(M^s) < 1$ and we ensure an exponential convergence rate to the steady-state distribution.
\end{lemma}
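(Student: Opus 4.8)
The plan is to reduce the coefficient of ergodicity in Definition~\ref{def:coeff-ergocity} to a statement about the overlap of the supports of the columns of $M^s$, and then to translate that overlap condition, via the motion constraints, into the scrambling pattern of $A^s_\mathrm{adj}$. First I would exploit column-stochasticity: for any two columns $M^s_{\cdot,i}$ and $M^s_{\cdot,j}$ both are nonnegative and sum to $1$, so the elementary identity $|a-b| = a + b - 2\min(a,b)$ summed over the row index $p$ gives $\tfrac12 \sum_{p} |M^s_{p,i} - M^s_{p,j}| = 1 - \sum_p \min(M^s_{p,i}, M^s_{p,j})$. Taking the maximum over column pairs $(i,j)$ yields the closed form $\tau_1(M^s) = 1 - \min_{i,j}\sum_p \min(M^s_{p,i}, M^s_{p,j})$, which is the workhorse of the whole argument.

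From this identity, $\tau_1(M^s) < 1$ holds if and only if $\sum_p \min(M^s_{p,i}, M^s_{p,j}) > 0$ for every pair of columns $(i,j)$, i.e., if and only if every pair of columns of $M^s$ shares at least one row index $p$ on which both entries are strictly positive. The remaining step is to rewrite this overlap condition in terms of $A^s_\mathrm{adj}$. By Remark~\ref{remark-motion-constraints}, the entry $M^s_{p,i}$ (transition from bin $i$ to bin $p$) may be positive only when $A^s_\mathrm{adj}[i,p] = 1$; hence $M^s_{p,i} > 0$ and $M^s_{p,j} > 0$ force $A^s_\mathrm{adj}[i,p] = A^s_\mathrm{adj}[j,p] = 1$. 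Thus ``every column pair $(i,j)$ shares a strictly positive row $p$'' matches exactly ``for every pair of rows $i,j$ of $A^s_\mathrm{adj}$ there is a column $p$ with $A^s_\mathrm{adj}[i,p] = A^s_\mathrm{adj}[j,p] = 1$,'' which is the scrambling pattern; the source bins $i,j$ play the role of the row pair and the common destination bin $p$ plays the role of the shared column.

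For the necessity direction ($\tau_1(M^s) < 1$ implies a scrambling pattern) the argument above is unconditional, since a positive entry of $M^s$ forces the corresponding admissible transition. The sufficiency direction is where I expect the main obstacle: a scrambling pattern only guarantees that a common destination bin $p$ is \emph{admissible} for each source pair, whereas the overlap condition needs $M^s_{p,i}, M^s_{p,j}$ to be \emph{strictly positive}; a degenerate $M^s$ that zeroes out admissible transitions could break the overlap even under a scrambling pattern. I would therefore invoke that the feasible (in particular, the $\tau_1$-minimizing) matrices of interest assign strictly positive probability to every allowed transition, so that the common column furnished by the scrambling pattern indeed yields positive entries and hence $\tau_1(M^s) < 1$. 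Finally, once $\tau_1(M^s) < 1$ is established for each $s$, Lemma~\ref{lem:max_bound_rate} gives $|\lambda_2(M^s)| \le \tau_1(M^s) < 1$, placing the second largest eigenvalue strictly inside the unit disk and guaranteeing geometric convergence of $x^s(t)$ to the steady-state distribution $\nu^s$ at a rate bounded by $\tau_1(M^s)$.
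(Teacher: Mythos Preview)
Your proposal is correct and follows essentially the same route as the paper: both reduce $\tau_1(M^s)<1$ to the statement that every pair of columns of $M^s$ shares a strictly positive row, handle the sufficiency direction by assuming $M^s$ places strictly positive mass on every admissible transition (the paper writes this as $M^s\ge\epsilon A^s_{\mathrm{adj}}$ ``without loss of generality''), and close with Lemma~\ref{lem:max_bound_rate}. Your use of the Dobrushin identity $\tau_1(M^s)=1-\min_{i,j}\sum_p\min(M^s_{p,i},M^s_{p,j})$ is a cleaner packaging of the same column-wise estimate the paper carries out by hand.
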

\begin{proof}
    By the scrambling pattern, for all $i,j \in \{1 \hdots n_\mathrm{r}\}$ there exists $k \in \{1,\hdots,n_\mathrm{r}\}$ such that $A^s_\mathrm{adj}[i,k] = A^s_\mathrm{adj}[j,k] = 1$. Without loss of generality, consider that $M^s \geq \epsilon A^s_\mathrm{adj}$ for some small fixed $\epsilon > 0$. That is, $M^s$ preserves the connectivity of $G^s$. Therefore, we have that $M^s_{k,i} \geq \epsilon$ and $M_{k,j} \geq \epsilon$ and 
    \begin{align*}
        \tau_1(M^s) &\leq 0.5  \sum_{p=1}^{n_\mathrm{r}} |M_{p,i} - M_{p,j}| \\
                  &= 0.5 (|M^s_{k,i} - M^s_{k,j}| + \sum_{p \neq k} |M^s_{p,i} - M^s_{p,j}|)\\
                  &\leq 0.5 (|M^s_{k,i} - M^s_{k,j}| + \sum_{p \neq k} M^s_{p,i} + \sum_{p \neq k} M^s_{p,j}) \\
                  &= 0.5 (|M^s_{k,i} - M^s_{k,j}| + 2 - (M^s_{k,i} + M^s_{k,j})) \\
                  & = \begin{cases}
                            1 - M^s_{k,j} , & \mathrm{if }\: M^s_{k,i} - M^s_{k,j} \geq 0 \\
                            1 -  M^s_{k,i} , & \mathrm{otherwise}
                       \end{cases} \\
                  & \leq 1 - \epsilon < 1.
    \end{align*}
    Thus, since $\lambda_2(M^s) \leq \tau_1(M^s) < 1$, we have exponential convergence to the steady-state distribution. On the other hand, if $A^s_\textrm{adj}$ does not have the scrambling pattern, there exists $i_0,j_0 \in \{1 \hdots n_{\mathrm{r}}\}$ such that for all $k \in \{1 \hdots n_{\mathrm{r}}\}$ either $M^s_{k,i_0} = 0$ and $M^s_{k,j_0} > 0$ or $M^s_{k,i_0} > 0$ and $M^s_{k,j_0} = 0$. As a consequence, $\sum_{k=1}^{n_{\mathrm{r}}} |M^s_{k,i_0} - M^s_{k,j_0}| = 2$ which is the maximum possible value that can be attained by $\sum_{p=1}^{n_\mathrm{r}} |M_{p,i} - M_{p,j}|$ for all $i,j \in \{1 \hdots n_{\mathrm{r}}\}$. Hence $\tau_1(M^s) = 1$, and the equivalence is therefore obtained.
\end{proof} 

To summarize, consider reach-avoid specifications encoded with $A^i$ and $b^i$ as detailed in~\eqref{eq:safety-constraint}, nonnegative weights $c_1, \hdots, c_m$ specifying the relative importance of the rate of convergence of each sub-swarm, and a cost function $\mathcal{C} : (\mathbb{R}^{n_\mathrm{r} \times n_\mathrm{r}} )^m \mapsto \mathbb{R}$ to be optimized. Then, the time-invariant Markov matrices $M^1, \hdots, M^s$ solution of the LP~\eqref{eq:cost-lp}--\eqref{eq:pos-M-lp} induce, through the evolution of $x^1(t),\hdots,x^m(t)$, a graph trajectory that satisfies the GTL formula.
\begin{align}
    & \underset{M^s, S, Y}{\mathrm{minimize}} & & \mathcal{C}((M^s)^m_{s=1}) + \sum_{s=1}^m c_s \tau_1(M^s) \label{eq:cost-lp}\\
    & \mathrm{subject \ to} & & Y \leq \boldsymbol{0}, \label{eq:lp-sto}\\
    & & & Y b^i + S \boldsymbol{1}   \geq -b^i, \\
    & & & Y A^i + S \mathbb{O} \leq -A^i \mathrm{diag}(M^1,\hdots,M^s),\\
    & \forall s \in \{1,\hdots,m\}, & & (\boldsymbol{1}\boldsymbol{1}^{\mathrm{T}} - (A^s_{\mathrm{adj}})^{\mathrm{T}}) \odot M^s = \boldsymbol{0},  \\
    & \forall s \in \{1,\hdots,m\}, & & \boldsymbol{1}^\mathrm{T} M^s = \boldsymbol{1}^\mathrm{T}, \\
    & \forall s \in \{1,\hdots,m\}, & &  M^s \nu^s = \nu^s, \\
    & \forall s \in \{1,\hdots,m\}, & & M^s \geq \boldsymbol{0}. \label{eq:pos-M-lp}
\end{align}

\begin{remark}
    Note that without the scrambling assumption on $G^s = (V, E^s)$, we have that $\tau_1(M^s) = 1$. Hence, optimizing the ergocity coefficient does not guarantee any convergence to the desired distribution $\nu^s$. In this scenario, we propose, as follows, an SDP formulation to control the rate of convergence.
\end{remark} 

\begin{lemma}[\textsc{Convergence rate for graphs  without scrambling pattern}]\label{lem:noscrambling-pattern}
    Assume that the graph $G^s = (V, E^s)$ has no scrambling pattern for all $s \in \{1,\hdots,m\}$. Let $\mathcal{M}(M^s) = M^s diag(\nu^s) (M^s)^\mathrm{T} diag(\nu^s)^{-1}$ be the \textit{multiplicative reversiblization}~\cite{fill1991eigenvalue} of $M^s$. Then, we have that:
    \begin{enumerate}
        \item The rate of convergence of $M^s$ to $\nu^s$ is given by $\lambda_2(\mathcal{M}(M^s))$ and exponential when $\lambda_2(\mathcal{M}(M^s)) < 1$.
        \item $\lambda_2(\mathcal{M}(M^s)) = ||(Q^s)^{-1} M^s Q^s - r^s (r^s)^\mathrm{T}||^2_2$, where $r^s=\sqrt{\nu^s}$, and $Q^s = diag(r^s)$.
    \end{enumerate}
\end{lemma}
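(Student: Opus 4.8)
The plan is to fix $s$, drop it, and write $M=M^s$, $\nu=\nu^s$, $r=r^s=\sqrt{\nu}$, $Q=Q^s=\mathrm{diag}(r)$ and $D=\mathrm{diag}(\nu)$, so that $Q^2=D$ and $\lVert r\rVert_2^2=\boldsymbol 1^{\mathrm T}\nu=1$. Everything hinges on the similarity transform $\tilde M:=Q^{-1}MQ$. Substituting $D=Q^2$ into the definition of the multiplicative reversiblization gives the single-line computation
\[
Q^{-1}\mathcal{M}(M)\,Q=Q^{-1}MQ^2M^{\mathrm T}Q^{-2}Q=(Q^{-1}MQ)(Q^{-1}MQ)^{\mathrm T}=\tilde M\tilde M^{\mathrm T},
\]
so $\mathcal M(M)$ is similar to the symmetric positive-semidefinite matrix $\tilde M\tilde M^{\mathrm T}$. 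I would first record the immediate consequence: the eigenvalues of $\mathcal M(M)$ are real, nonnegative, and equal to the squared singular values of $\tilde M$, so that $\lambda_2(\mathcal M(M))=\sigma_2(\tilde M)^2$.

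Next I would show that $r$ is simultaneously a right and left eigenvector of $\tilde M$ for the eigenvalue $1$. Using $Qr=\nu$, $Q^{-1}\nu=r$ and $Q^{-1}r=\boldsymbol 1$, together with the steady-state relation $M\nu=\nu$ from Lemma~\ref{lem:erg-constraint} and column-stochasticity $\boldsymbol 1^{\mathrm T}M=\boldsymbol 1^{\mathrm T}$, one gets $\tilde M r=Q^{-1}M\nu=Q^{-1}\nu=r$ and $r^{\mathrm T}\tilde M=\boldsymbol 1^{\mathrm T}MQ=\boldsymbol 1^{\mathrm T}Q=r^{\mathrm T}$. Hence $\tilde M\tilde M^{\mathrm T}r=r$ with $\lVert r\rVert_2=1$. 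To identify this $1$ as the \emph{largest} eigenvalue, I would note that $\mathcal M(M)=M\,(DM^{\mathrm T}D^{-1})$ is a product of two column-stochastic matrices and is therefore itself column-stochastic with spectral radius $1$; combined with the nonnegativity above, the top eigenvalue of $\mathcal M(M)$ equals $1$, so $\lambda_2(\mathcal M(M))$ is genuinely the second-largest eigenvalue. Statement~2 then follows from the algebraic identity
\[
(\tilde M-rr^{\mathrm T})(\tilde M-rr^{\mathrm T})^{\mathrm T}=\tilde M\tilde M^{\mathrm T}-\tilde M rr^{\mathrm T}-rr^{\mathrm T}\tilde M^{\mathrm T}+r(r^{\mathrm T}r)r^{\mathrm T}=\tilde M\tilde M^{\mathrm T}-rr^{\mathrm T},
\]
which uses $\tilde M r=r$, $\tilde M^{\mathrm T}r=r$ and $r^{\mathrm T}r=1$: subtracting the rank-one term $rr^{\mathrm T}$ zeroes out the unit eigenvalue of $\tilde M\tilde M^{\mathrm T}$ while leaving the others, so the largest eigenvalue of the left-hand side is exactly $\lambda_2(\mathcal M(M))$, i.e.\ $\lVert\tilde M-rr^{\mathrm T}\rVert_2^2=\lambda_2(\mathcal M(M))$.

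For statement~1 I would analyze the error in the $\nu$-weighted norm $\lVert y\rVert_{D^{-1}}^2:=y^{\mathrm T}D^{-1}y$. Setting $e(t)=x(t)-\nu$, the dynamics $x(t+1)=Mx(t)$ and $M\nu=\nu$ give $e(t+1)=Me(t)$, and column-stochasticity forces $\boldsymbol 1^{\mathrm T}e(0)=0$. The change of variable $z(t)=Q^{-1}e(t)$ then satisfies $\lVert e(t)\rVert_{D^{-1}}=\lVert z(t)\rVert_2$, $z(t+1)=\tilde M z(t)$, and $r^{\mathrm T}z(0)=\boldsymbol 1^{\mathrm T}e(0)=0$. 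Since $\tilde M^{\mathrm T}r=r$, the subspace $r^{\perp}$ is $\tilde M$-invariant and contains every $z(t)$; on $r^{\perp}$ the operator $\tilde M$ coincides with $\tilde M-rr^{\mathrm T}$, and $(\tilde M-rr^{\mathrm T})r=0$ shows its spectral norm is attained on $r^{\perp}$. Consequently $\lVert z(t)\rVert_2\le\lVert\tilde M-rr^{\mathrm T}\rVert_2^{\,t}\lVert z(0)\rVert_2=\lambda_2(\mathcal M(M))^{t/2}\lVert z(0)\rVert_2$, with the per-step factor tight in the worst case. Thus the weighted error decays with geometric rate $\lambda_2(\mathcal M(M))$ per step and convergence is exponential precisely when $\lambda_2(\mathcal M(M))<1$, which is statement~1.

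The main obstacle is bookkeeping around the column-stochastic (rather than the usual row-stochastic) convention: one must verify that $DM^{\mathrm T}D^{-1}$ is the correct $\nu$-reversal and that the similarity through $Q$ lands on $\tilde M\tilde M^{\mathrm T}$ rather than $\tilde M^{\mathrm T}\tilde M$. The one genuinely substantive point is pinning the top eigenvalue of $\mathcal M(M)$ at exactly $1$, so that $\lambda_2$ is the second eigenvalue and not an artifact of some modulus exceeding one, and arguing that the contraction on $r^{\perp}$ is tight, so that $\lambda_2(\mathcal M(M))$ is the actual rate and not merely an upper bound.
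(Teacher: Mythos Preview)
Your argument is correct and, for statement~2, follows essentially the same route as the paper: both identify $Q^{-1}\mathcal{M}(M)Q$ with $\tilde M\tilde M^{\mathrm T}$, pin the top eigenvalue at $1$ via stochasticity of $\mathcal{M}(M)$ with eigenvector $r$, and then use the rank-one deflation identity $(\tilde M-rr^{\mathrm T})(\tilde M-rr^{\mathrm T})^{\mathrm T}=\tilde M\tilde M^{\mathrm T}-rr^{\mathrm T}$ to read off $\lambda_2$ as the squared spectral norm. Your presentation is tighter---you get $Q^{-1}\mathcal{M}(M)Q=\tilde M\tilde M^{\mathrm T}$ in one line, whereas the paper first establishes symmetry via $\mathcal{M}(M)D=D\mathcal{M}(M)^{\mathrm T}$ and only later expands the product---but the skeleton is the same.

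The one genuine difference is statement~1: the paper simply invokes Theorem~2.7 of Fill~\cite{fill1991eigenvalue}, which directly gives the bound $4\lVert x(n)-\nu\rVert^2\le\lambda_2(\mathcal{M}(M))^n\chi_0^2$. You instead supply a self-contained argument via the change of variable $z(t)=Q^{-1}(x(t)-\nu)$, the $\tilde M$-invariance of $r^{\perp}$, and the per-step contraction by $\lVert\tilde M-rr^{\mathrm T}\rVert_2=\lambda_2(\mathcal{M}(M))^{1/2}$. This is more elementary and keeps the proof internal to the paper, at the cost of a few extra lines; the paper's citation is shorter but opaque. Both yield the same rate in the same weighted norm.
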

\begin{proof}
    First, from Theorem $2.7$ in~\cite{fill1991eigenvalue}, we have that \begin{equation*}
        4||x^s(n) -v^s||^2 \leq \lambda_2(\mathcal{M}(M^s))^n (\chi^s_0)^2,
    \end{equation*}
    where $x^s(n)$ is the Markov chain state distribution at time index $n$, and $\chi^s_0 = \sum_k \frac{(x^s_k(0)- v^s_k)^2}{v^s_k}$. As an immediate consequence, the rate of convergence of $M^s$ to $\nu^s$ is given by $\lambda_2(\mathcal{M}(M^s))$ and exponential when $\lambda_2(\mathcal{M}(M^s)) < 1$.
    
    Second, we characterize the second eigen value of $\mathcal{M}(M^s)$.
    \begin{align*}
        \mathcal{M}(M^s) diag(\nu^s)   &= M^s diag(\nu^s) (M^s)^\mathrm{T} diag(\nu^s)^{-1} diag(\nu^s) \\
                 &= M^s diag(\nu^s) (M^s)^\mathrm{T}\\
                 &= diag(\nu^s) diag(\nu^s)^{-1} M^s diag(\nu^s) (M^s)^\mathrm{T} \\
                 &= diag(\nu^s) \mathcal{M}(M^s)^\mathrm{T}.
    \end{align*}
    
    By left and right multiplication of the equation above by $(Q^s)^{-1}$, we obtain
    \begin{equation*}
        (Q^s)^{-1} \mathcal{M}(M^s) Q^s = Q^s \mathcal{M}(M^s)^\mathrm{T} (Q^s)^{-1}.
    \end{equation*}
    We deduce that $(Q^s)^{-1} \mathcal{M}(M^s) Q^s$ is symmetric and has same eigenvalues as $\mathcal{M}(M^s)$. Moreover, the followings hold.
    \begin{align}
	    &\lambda_{\mathrm{max}}( (Q^s)^{-1} \mathcal{M}(M^s) Q^s) = \lambda_{\mathrm{max}}(\mathcal{M}(M^s)) = 1, \label{eq:lam-reversibility}\\
	    &(Q^s)^{-1} \mathcal{M}(M^s) Q^s r^s =  (Q^s)^{-1} \mathcal{M}(M^s) \nu^s = (Q^s)^{-1} \nu^s = r^s, \label{eq:lam-eigenright}\\
	   &||r^s||_2 = \sqrt{\sum_{i=1}^{n_\mathrm{r}}( r^s_i)^2} = \sqrt{\sum_{i=1}^{n_\mathrm{r}} \nu^s_i} = 1, \label{eq:lam-eigenunit}
	\end{align}
	where $\lambda_{\mathrm{max}}(\cdot)$ denotes the maximum eigen value. The last equality of equation~\eqref{eq:lam-reversibility} is due to $\mathcal{M}(M^s)$ straightforwardly being a stochastic matrix. The equation~\eqref{eq:lam-eigenright} comes from straightforward algebra and combined with~\eqref{eq:lam-eigenunit}, we have that $r^s$ is an unit eigenvector of $(Q^s)^{-1} \mathcal{M}(M^s) Q^s$ associated with the maximum eigenvalue $1$. Thus, a classic result in algebra linking the second eigen value and the maximum eigen value provides that
	\begin{equation} \label{proof:lam2-M}
	    \lambda_2(\mathcal{M}(M^s)) = \lambda_{\mathrm{max}}((Q^s)^{-1} \mathcal{M}(M^s) Q^s - r^s (r^s)^\mathrm{T}).
	\end{equation}
    Observe that 
    \begin{equation} \label{proof:sim-lam2-form}
        \begin{aligned}
            &((Q^s)^{-1} M^s Q^s - r^s (r^s)^\mathrm{T})((Q^s)^{-1} M^s Q^s - r^s (r^s)^\mathrm{T})^T  \\
            =& (Q^s)^{-1} M^s Q^s Q^s (M^s)^\mathrm{T} (Q^s)^{-1} - (Q^s)^{-1} M^s \nu^s (r^s)^\mathrm{T} \\
            & \quad - r^s (\nu^s)^\mathrm{T} (M^s)^\mathrm{T} (Q^s)^{-1} + r^s (r^s)^\mathrm{T} r^s (r^s)^\mathrm{T} \\
            =& (Q^s)^{-1} \mathcal{M}(M^s) Q^s - r^s (r^s)^\mathrm{T} - r^s (r^s)^\mathrm{T} + r^s (r^s)^\mathrm{T}  \\
            =& (Q^s)^{-1} \mathcal{M}(M^s) Q^s - r^s (r^s)^\mathrm{T}. 
        \end{aligned}
    \end{equation}
    By combining~\eqref{proof:lam2-M} and~\eqref{proof:sim-lam2-form}, we finally have that
    \begin{align*}
        \lambda_2(\mathcal{M}(M^s)) &= \lambda_{max}((Q^s)^{-1} \mathcal{M}(M^s) Q^s - r^s (r^s)^\mathrm{T}) \\
                        &= ||(Q^s)^{-1} M^s Q^s - r^s (r^s)^\mathrm{T}||_2^2.
    \end{align*}
\end{proof}
\begin{remark}
    Lemma~\ref{lem:noscrambling-pattern} provides a way to control the rate of convergence to the stationary distribution via the \emph{convex} function $||(Q^s)^{-1} M^s Q^s - r^s (r^s)^\mathrm{T}||_2^2$ of $M^s$. Note that if $M^s$ is a reversible Markov matrix~\cite{boyd2004fastest}, we have that $||(Q^s)^{-1} M^s Q^s - r^s (r^s)^\mathrm{T}||_2^2 = \lambda_\mathrm{max}((Q^s)^{-1} M^s Q^s - r^s (r^s)^\mathrm{T})$, which is widely studied in the literature of the fastest mixing rate for Markov chains~\cite{boyd2004fastest}. However, in this paper, we characterize the rate of convergence via the convex $||(Q^s)^{-1} M^s Q^s - r^s (r^s)^\mathrm{T}||_2^2$ without assuming reversibility of $M^s$.
\end{remark}   

As a consequence of Lemma~\ref{lem:noscrambling-pattern},  the time-invariant Markov matrices $M^1, \hdots, M^s$ solutions of the SDP~\eqref{eq:cost-sdp} induce, through the evolution of $x^1(t),\hdots,x^m(t)$, a graph trajectory that satisfies the specifications.
\begin{equation}
\begin{aligned}
    & \underset{M^s, S, Y}{\mathrm{minimize}} & & \mathcal{C}((M^s)^m_{s=1}) + \sum_{s=1}^m c_s ||(Q^s)^{-1} M^s Q^s - r^s (r^s)^\mathrm{T}||_2^2 \label{eq:cost-sdp}\\
    & \mathrm{subject \ to} & & \eqref{eq:lp-sto}-\eqref{eq:pos-M-lp}.
\end{aligned}
\end{equation}

\noindent{\textbf{Complexity and correctness analysis}.} The worst-case time complexity to solve the LP~\eqref{eq:cost-lp}-\eqref{eq:pos-M-lp} and SDP~\eqref{eq:cost-sdp} is polynomial in its number of constraints and variables. Specifically, we have $m n_\mathrm{r}^2 + p_i^2 + p_i m$ number of variables, where we recall that $p_i$ is defined as the number of constraints enforced by the specifications. Note that $p_i$ is therefore proportional to the size of the specifications. Similarly, the number of constraints can be straightforwardly upper-bounded by $p_i^2 + p_i + n_\mathrm{r} m p_i + 2 m n_\mathrm{r}^2 + 2m n_\mathrm{r}$. Besides, by Lemma~\ref{lem:linear-constraint-safety}, a solution of the LP or SDP ensures the satisfaction of the constraints. Therefore, the algorithm for reach-avoid specifications is correct.

\subsection{Special Case: MILP Formulation for Complete Graphs} \label{subsec:spec-case}

In the scenario where the GTL formula $\varphi$ does not express reach-avoid specifications, if each $G^s = (V,E^s)$ is a complete graph, we can reduce the MINLP feasibility problem given by constraints~\eqref{pgtl_induced_constr}--\eqref{pgtl_markov_state_evol} to a MILP feasibility problem.

\begin{corollary}[\textsc{MILP for complete graphs}] \label{corr:complete_graph_case_milp}
   With the notation of Lemma~\ref{corr:minlp-formulation-constr}, assume that each graph $G^s$ is complete. Then, the latter statements are equivalent:
    \begin{enumerate}
        \item There exists a periodic graph trajectory $g_\mathrm{p}$ of length $k_\mathrm{p}$ such that $(g_\mathrm{p} , v_i) \models \varphi$ for all $v_i \in V'$ while the motion constraints are satisfied.
        \item There exists an optimal solution to the MILP problem given by~\eqref{eq:cost-minlp}--\eqref{pgtl_pos_x_constr}.
    \end{enumerate}
    Furthermore, if there exists $\hat{x}^s(t)$ for all $(s,t) \in \mathbb{N}_{[1,m] \times [1,k_\mathrm{p}]}$  satisfying constraints~\eqref{pgtl_induced_constr}--\eqref{pgtl_pos_x_constr}, then $\hat{M}^s(t)$ given by
    \begin{equation} \label{p_sol_comple_graph}
        \hat{M}^s_{i,j}(t) = \hat{x}^s_i(t+1), \; \forall i,j \in \{1,\hdots,n_\mathrm{r}\},
    \end{equation}
    for $t \in \{0,\hdots,k_\mathrm{p}-1\}$, satisfies constraints~\eqref{pgtl_stochas_constr}--\eqref{pgtl_markov_state_evol}. 
\end{corollary}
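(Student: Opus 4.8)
The plan is to prove the equivalence by establishing that the feasible regions of the full MINLP of Lemma~\ref{corr:minlp-formulation-constr} and of the reduced MILP~\eqref{eq:cost-minlp}--\eqref{pgtl_pos_x_constr} coincide once projected onto the density variables $x^s(t)$, and then to invoke Lemma~\ref{corr:minlp-formulation-constr} to translate MINLP feasibility back into the existence of a periodic graph trajectory satisfying $\varphi$. The structural observation I would exploit first is that for a complete graph $G^s$ the adjacency matrix satisfies $A^s_{\mathrm{adj}} = \boldsymbol{1}\boldsymbol{1}^{\mathrm{T}}$, so the motion-constraint equation~\eqref{pgtl_adjacency_constr} reduces to $\boldsymbol{0} \odot M^s(t) = \boldsymbol{0}$ and is vacuously satisfied. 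Consequently the only constraints coupling the Markov matrices $M^s(t)$ to the densities are the stochasticity and nonnegativity conditions~\eqref{pgtl_stochas_constr}--\eqref{pgtl_pos_P_constr} together with the bilinear dynamics~\eqref{pgtl_markov_state_evol}, with no remaining restriction on the support of $M^s(t)$.

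The direction from the MINLP to the MILP is immediate, since the MILP constraints~\eqref{pgtl_induced_constr}--\eqref{pgtl_pos_x_constr} form a subset of the MINLP constraints; thus any MINLP-feasible point, after discarding its $M^s(t)$ components, is MILP-feasible with identical $x^s(t)$. For the converse I would take any $\hat{x}^s(t)$ satisfying~\eqref{pgtl_induced_constr}--\eqref{pgtl_pos_x_constr} and define $\hat{M}^s(t)$ by the broadcast rule~\eqref{p_sol_comple_graph}, i.e.\ every column of $\hat{M}^s(t)$ equals $\hat{x}^s(t+1)$, then verify the remaining constraints in turn: (i)~nonnegativity~\eqref{pgtl_pos_P_constr} follows directly from $\hat{x}^s(t+1) \geq \boldsymbol{0}$; (ii)~column-stochasticity~\eqref{pgtl_stochas_constr} follows because each column sum equals $\boldsymbol{1}^{\mathrm{T}} \hat{x}^s(t+1) = 1$ by~\eqref{pgtl_dist_constr}; (iii)~the adjacency constraint~\eqref{pgtl_adjacency_constr} is vacuous by the observation above; and (iv)~the dynamics~\eqref{pgtl_markov_state_evol} hold because $\hat{M}^s(t)\hat{x}^s(t) = \hat{x}^s(t+1)\,\bigl(\boldsymbol{1}^{\mathrm{T}}\hat{x}^s(t)\bigr) = \hat{x}^s(t+1)$, where the last equality again uses $\boldsymbol{1}^{\mathrm{T}}\hat{x}^s(t)=1$. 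This simultaneously proves that the $x$-projections of the two feasible regions agree and establishes the concluding ``Furthermore'' claim.

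To finish, I would combine the two directions: feasibility of the MILP is equivalent to feasibility of the MINLP, which by Lemma~\ref{corr:minlp-formulation-constr} is equivalent to statement~(1). Because the $x^s(t)$ are confined to probability simplices by~\eqref{pgtl_dist_constr}--\eqref{pgtl_pos_x_constr}, the feasible set is bounded and the linear cost~\eqref{eq:cost-minlp} --- rendered a function of the $x^s(t)$ alone after substituting~\eqref{p_sol_comple_graph} --- attains its minimum whenever the set is nonempty, so the existence of an optimal MILP solution coincides with feasibility. Since the lifting $\hat{x} \mapsto \hat{M}$ and the projection $M \mapsto x$ give a cost-preserving correspondence between MILP-feasible and MINLP-feasible points, an optimal solution of one problem yields an optimal solution of the other.

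The main obstacle --- really the only genuinely nonroutine point --- is step~(iv): verifying that the rank-one broadcast matrix actually reproduces the prescribed density evolution. This is precisely where the probability-simplex structure $\boldsymbol{1}^{\mathrm{T}}\hat{x}^s(t)=1$ is indispensable, and where completeness of $G^s$ is essential, since it is what permits the dense matrix~\eqref{p_sol_comple_graph} to respect the (now vacuous) motion constraints; for a non-complete graph the broadcast matrix would generally violate~\eqref{pgtl_adjacency_constr}, which is exactly why the general case must instead be handled by the sequential MILP scheme.
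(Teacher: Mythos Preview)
Your proposal is correct and follows essentially the same approach as the paper: both arguments observe that completeness makes~\eqref{pgtl_adjacency_constr} vacuous, use the broadcast rule~\eqref{p_sol_comple_graph} to lift any MILP-feasible $\hat{x}^s(t)$ to MINLP-feasible $(\hat{x}^s(t),\hat{M}^s(t))$, and verify~\eqref{pgtl_stochas_constr}--\eqref{pgtl_markov_state_evol} via the simplex identity $\boldsymbol{1}^{\mathrm{T}}\hat{x}^s(t)=1$ before invoking Lemma~\ref{corr:minlp-formulation-constr}. Your treatment is in fact slightly more complete than the paper's, since you explicitly address why feasibility entails the existence of an \emph{optimal} MILP solution (via boundedness of the simplex), a point the paper's proof leaves implicit.
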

\begin{proof}
    For a complete graph $G^s$, we have that $A^s_{\mathrm{adj}} = \boldsymbol{1}\boldsymbol{1}^T$. Thus, the constraint~\eqref{pgtl_adjacency_constr} is automatically satisfied. Further, $1)$ implies $2)$ is trivial by the equivalence of Corollary \ref{corr:minlp-formulation-constr}. 
    
    Suppose $2)$ is valid, i.e. the constraints~\eqref{pgtl_induced_constr}--\eqref{pgtl_pos_x_constr} are satisfied by $\hat{x}^s(t)$ for all $s$ and $t$. We want to show the the constraints~\eqref{pgtl_stochas_constr}--\eqref{pgtl_markov_state_evol} are automatically satisfied. With $\hat{M}^s(t)$ given by~\eqref{p_sol_comple_graph} and $j \in \{1,\hdots,n_\mathrm{r}\}$,
    \begin{equation*}
        \textstyle \sum_{i=1}^{n_\mathrm{r}} \hat{M}^s_{i,j}(t) = \sum_{i=1}^{n_\mathrm{r}} \hat{x}^s_i(t+1) = \boldsymbol{1}^{\textrm{T}} \hat{x}^s(t+1) = 1.
    \end{equation*}
    This yields the satisfiability of the constraint~\eqref{pgtl_stochas_constr} by $\hat{M}^s(t)$. The constraint~\eqref{pgtl_pos_P_constr} is satisfied by $\hat{M}^s(t)$ due to the constraint~\eqref{pgtl_pos_x_constr}. Finally, for $i \in \{1,\hdots,n_\mathrm{r}\}$, we have
    \begin{equation*}
        \textstyle \sum_{j=1}^{n_\mathrm{r}} M^s_{i,j}(t) \hat{x}^s_j(t) = \hat{x}^s_i(t+1) \sum_{j=1}^{n_\mathrm{r}} \hat{x}^s_j(t) = \hat{x}^s_i(t+1).
    \end{equation*}
    Thus, we have the satisfiability of the constraint~\eqref{pgtl_markov_state_evol} by $\hat{M}^s(t)$. Hence, $2)$ implies $1)$ as the constraints~\eqref{pgtl_induced_constr}--\eqref{pgtl_markov_state_evol} are satisfied by $\hat{x}^s(t)$ and $\hat{M}^s(t)$ for all $(s,t) \in \mathbb{N}_{[1,m] \times [1,k_\mathrm{p}]}$.
\end{proof}

In the case of complete graphs, Corollary~\ref{corr:complete_graph_case_milp} provides that the feasibility of constraints~\eqref{pgtl_induced_constr}--\eqref{pgtl_markov_state_evol} is equivalent to the feasibility of~\eqref{pgtl_induced_constr}--\eqref{pgtl_pos_x_constr}. Thus, when the cost function $\mathcal{C}$ is a function of only the densities $x^1(t),\hdots,x^m(t)$, we can rewrite the MINLP optimization problem~\eqref{eq:cost-minlp}--\eqref{pgtl_markov_state_evol} as the MILP problem~\eqref{eq:cost-minlp}--\eqref{pgtl_pos_x_constr}. The resulting densities are then used in~\eqref{p_sol_comple_graph} to find the Markov matrices. When the cost function $\mathcal{C}$ is dependent of $M^s(t)$, one can obtain suboptimal solutions by replacing $M^s(t)$ with the corresponding $x^s(t)$ as in~\eqref{p_sol_comple_graph}.\vspace{0.10cm}

\noindent{\textbf{Complexity and correctness analysis.}} With the notation of Corollary~\ref{corr:minlp-formulation-constr}, the number $N_\textrm{c}$ of non-binary variables and an upper bound $N_\textrm{b}$ on the number of binary variables of the equivalent MILP in Corollary~\ref{corr:complete_graph_case_milp} are given by $N_\textrm{c}=n_{\textrm{r}} k_\mathrm{p} m$ and $N_\textrm{b} = k_\mathrm{p} + \sum_{v_i \in V'} q_i$. The number of constraints $C$ of the MILP is $C=N_\textrm{c}+\sum_{v_i \in V'} p_i$, where $q_i$ is the dimension of $b_i$. Since a linear program (LP) can be solved in polynomial time in the number of variables and constraints via interior-point methods~\cite{nesterov1994interior}, the worst-case time complexity to solve the MILP is $O(2^{N_\textrm{b}}R(N_\textrm{c},C))$, $R$ is a polynomial. By Corollary~\ref{corr:complete_graph_case_milp}, a solution to the MILP ensures the satisfaction of the constraints. Therefore, the algorithm for the special case is correct by construction.

\subsection{General Case: Trust-Region-Based Sequential Mixed-Integer Programming}
In this section, we make no assumptions on the structure of the GTL specifications and the graph of each sub-swarm. Then, we develop an efficient sequential mixed-integer linear programming scheme to solve the MINLP~\eqref{eq:cost-minlp}--\eqref{pgtl_pos_x_constr}.\vspace{0.10cm}

\noindent{\textbf{Linearizing the nonconvex constraints.}} The idea of the efficient solving scheme is to reduce the problem to an adequate set of MILPs that can be solved efficiently and optimally by off-the-shell solvers. Specifically, we solve the nonconvex problem by sequentially linearizing the constraint~\eqref{pgtl_markov_state_evol} around the solution of the $k^{\mathrm{th}}$ iteration. This linearization results into a MILP. The obtained solutions are then used for the $(k+1)^{\mathrm{th}}$ iteration. We begin by denoting the solutions of the $k^{\mathrm{th}}$ iteration by $x^{s,k}(t)$ and $M^{s,k}(t-1)$ for all $t\in \{1,\hdots,k_p\}$ and $s\in \{1,\hdots,m\}$. Thus, at the $(k+1)^\mathrm{th}$ iteration, the first-order approximation of $x^s(t+1) = M^s(t) \: x^s(t)$ around the previous solutions $x^{s,k}(t)$ and $M^{s,k}(t)$ is given by
\begin{align}
    x^{s}(t+1) = &x^{s,k}(t)M^{s,k}(t)  + M^{s,k}(t) \big( x^{s}(t)-x^{s,k}(t) \big) \nonumber\\
    &+ \big( M^s(t) - M^{s,k}(t) \big) x^{s,k}(t),\label{eq:first-order-approx}
\end{align}
where $x^{s,k}(0) = x^s(0)$ for all iteration $k$. First, note that the linearization~\eqref{eq:first-order-approx} may create an infeasible problem. To mitigate the effects of this infeasibility, we augment the linearized dynamics with the unconstrained slack variable $z^{s}(t) \in \mathbb{R}^{n_\mathrm{r}}$. Thus, the resulting constraint is always feasible and can be written as follows:
\begin{align}
    x^{s}(t+1) = x^{s,k}(t+1) &+ M^{s,k}(t) \big( x^{s}(t)-x^{s,k}(t) \big) \label{linearize-nonconvex}\\
    &+ \big( M^s(t) - M^{s,k}(t) \big) x^{s,k}(t) + z^s(t). \nonumber
\end{align}
Further, to ensure that the variable $z^s(t)$ is used only when necessary, we augment the cost function with a sufficiently large penalization weight $\lambda > 0$. Thus, the solution for the $(k+1)^\mathrm{th}$ iteration optimizes the linearized cost given by
 \begin{align}
     L^k(\boldsymbol{x},\boldsymbol{M}) =  \mathrm{cost}(\boldsymbol{x}, \boldsymbol{M})+ \lambda \sum_{t=0}^{k_\mathrm{p}-1} \sum_{s=1}^m \|z^s(t)\|, 
 \end{align}
 where $\mathrm{cost}(\boldsymbol{x},\boldsymbol{M}) = \sum_{t=0}^{k_\mathrm{p}}  \mathcal{C}((x^s(t))^m_{s=1},(M^s(t))^m_{s=1})$ and $\|\cdot\|$ can be either the infinity norm or $1$-norm. Recall that $\boldsymbol{x} \in \mathbb{R}^{n_\mathrm{r} \times m \times k_\mathrm{p}}$ contains the densities of all sub-swarms at all time and similarly we define $\boldsymbol{M}$ to contain the Markov matrices at all time and for all sub-swarms.\vspace{0.10cm}
 
 \noindent{\textbf{Trust Region Constraints and linearized problem.}} We ensure that the resulting density distribution $x^s(t)$ at the $(k+1)^\mathrm{th}$ iteration does not deviate significantly from the density obtained at the $k^\mathrm{th}$ iteration by imposing, for all $s \in \{1,\hdots,m\}$ and $t \in \{1,\hdots,k_\mathrm{p}\}$, the following trust region constraint 
\begin{align}
    \|x^s(t) - x^{s,k}(t)\| \leq r^k, \label{eq:trust-region}
\end{align}
where $r^k$ is a trust region that will be updated at each iteration so that the solution $x^s(t)$ remains close to the density obtained in the previous iteration, $x^{s,k}(t)$. This update rule enables to keep the solutions within a region where the linearization is accurate. As a consequence, at the $(k+1)^\mathrm{th}$ iteration, the convex subproblem is given by
\begin{align}
        & \underset{x^s,M^s,w^{q_i}, l, z^s}{\mathrm{minimize}} \quad \quad \quad \quad \quad  L^k(\boldsymbol{x}, \boldsymbol{M}) \label{eq:cost-minlp_c}\\
        & \mathrm{\ \ subject \ to} \quad \quad \quad \quad \quad \: \: \eqref{pgtl_induced_constr}-\eqref{pgtl_adjacency_constr},\eqref{linearize-nonconvex},~\eqref{eq:trust-region} \nonumber
\end{align}

\noindent{\textbf{Starting point via McCormick relaxations.}} The choice of the starting points $x^{s,0}(t)$ and $M^{s,0}(t)$ are crucial to accelerate and provide feasible solutions to the MINLP problem. We seek to get as close as possible feasible and optimal solutions. To this end, we write the constraint $x^s(t+1) = M^s(t) x^s(t)$ component-wise as $x^s_i(t+1) = \sum_{j=1}^{n_\mathrm{r}} V^s_{i,j}(t)$, where $V^s_{i,j}(t) = M^s_{i,j}(t) x^s_j(t)$ is a new variable. Then, for all $s \in \{1,\hdots, m\}$, $t \in \{0,\hdots,k_\mathrm{p}-1\}$, and $i,j \in \{1,\hdots,n_\mathrm{r}\}$, we have the following McCormick relaxation of the bilinear constraint $V^s_{i,j}(t) = M^s_{i,j}(t) x^s_j(t)$:
\begin{align}
    V^s_{i,j}(t) \geq 0,& \quad V^s_{i,j}(t) \geq M^s_{i,j}(t) + x^s_j(t) - 1,\label{eq:mccormick-1}\\
     V^s_{i,j}(t) \leq x^s_j(t),&  \quad V^s_{i,j}(t) \leq M^s_{i,j}(t).\label{eq:mccormick-2}
\end{align}
As a consequence, the starting point $x^{s,0}(t)$ of Algorithm~\ref{alg:scp-minlp} is an optimal solution of the relaxed MILP problem
\begin{align}
        & \underset{x^s,M^s,w^{q_i}, l, z^s, V^s}{\mathrm{minimize}} \quad \quad \quad  \: \: \: \mathrm{cost}(\boldsymbol{x}, \boldsymbol{M}) \label{eq:init_problem}\\
        & \: \mathrm{\ \ subject \ to} \quad \quad \quad \quad \quad \eqref{pgtl_induced_constr}-\eqref{pgtl_adjacency_constr},\eqref{eq:mccormick-1}-\eqref{eq:mccormick-2} \nonumber\\
        & \forall t,s \in \mathbb{N}_{[0,k_\mathrm{p}-1] \times [1,m]}, \quad x^s(t+1) = V^s(t) \boldsymbol{1}.\label{eq:mccormick-changes}
\end{align}
Further, from the obtained starting point $x^{s,0}(t)$ for all $s$ and $t$, we seek for $M^{s,0}(t)$ that minimizes the error of not satisfying the bilinear constraints~\eqref{pgtl_markov_state_evol}. To this end, $M^{s,0}(t)$ is an optimal solution of the following LP problem
\begin{align}
    & \underset{M^s,z^s}{\mathrm{minimize}} \quad \quad \quad  \quad \quad \: \: \: \: \sum_{s=1}^m \sum_{t=0}^{k_\mathrm{p}-1}  ||z^s(t)||_1 \label{eq:accurate-M}\\
    &\: \mathrm{\ \ subject \ to} \quad \quad \quad \quad \quad \eqref{pgtl_stochas_constr}-\eqref{pgtl_adjacency_constr} \nonumber\\
    & \forall t,s \in \mathbb{N}_{[0,k_\mathrm{p}-1] \times [1,m]}, \quad x^{s,0}(t+1) = M^{s}(t) x^{s,0}(t) + z^s(t), \nonumber
\end{align}
where we relax the bilinear constraints and add a slack variable such that we penalize its use in the cost function.\vspace{0.10cm}

\noindent{\textbf{Sequential mixed-integer programming algorithm.}} Algorithm~\ref{alg:scp-minlp} summarizes the trust-region-based sequential convex optimization scheme to compute approximate (possibly local) solutions of~\eqref{eq:cost-minlp}--\eqref{pgtl_markov_state_evol}. Specifically, the quality of the solution is established using three metrics: The change $\Delta L^{k+1}$ in the optimal cost, the accuracy $\mathrm{f}^{k+1}$ of the bilinear constraint attained by the new solution, and the ratio $\rho^{k+1}$ of the resulting accuracy and past accuracy. These metrics are given by
\begin{align}
    \Delta L^{k+1} &= |\hat{L}^{k+1} -  \hat{L}^k|,\label{eq:resol-cost-form}\\
    \mathrm{f}^{k+1} &= \sum_{s=1}^m \sum_{t=0}^{k_\mathrm{p}-1}  ||x^{s,k+1}(t) - M^{s,k+1}(t) x^{s,k+1}(t)||_1,\label{eq:acc-form}\\
    \rho^{k+1} &= \mathrm{f}^{k+1} / \mathrm{f}^{k},\label{eq:rho-form}
\end{align} 
where $\hat{L}^k$ is the optimal cost of the linearized problem at iteration $k$. The ratio $\rho^{k+1}$ compares the accuracy of the new solution and the solution obtained at the past iteration. When $\rho^k > 1$, the new solution is considered inaccurate. Then, we contract the trust region $r^k$ and restart the iteration. If not, the solutions $M^{s,k+1}(t)$ and $x^{s,k+1}(t)$ are considered acceptable. Then, we move to the next iteration and expand the trust region depending on the value of $\rho^k$. Algorithm~\ref{alg:scp-minlp} stops when the minimum trust region value is reached or the cost cannot be improved while the bilinear constraint is satisfied with $\epsilon_\mathrm{acc}$.\vspace{0.10cm}

 \algdef{SE}[DOWHILE]{Do}{doWhile}{\algorithmicdo}[1]{\algorithmicwhile\ #1}%
\begin{algorithm}[!t]
    \caption{Sequential convex programming with trust region to efficiently solve the MINLP Problem~\eqref{eq:cost-minlp}--\eqref{pgtl_markov_state_evol}.}\label{alg:scp-minlp}
    \begin{algorithmic}[1]    
    \Require{Swarm distribution $x^s(0)$ for all $s\in\{1,\hdots,m\}$, penalty weight $\lambda > 0$, parameters $r_\mathrm{min}<1$, $r_\mathrm{exp}, r_\mathrm{con} > 1$, cost tolerance $\epsilon_{\mathrm{tol}} > 0$, and accuracy tolerance $\epsilon_{\mathrm{acc}} > 0$. }
    \Ensure{ $M^s(t)$ locally optimal solution of ~\eqref{eq:cost-minlp}--\eqref{pgtl_markov_state_evol} }
        \State Initialize $x^{s,0}(t)$ by solving MILP~\eqref{eq:init_problem}--\eqref{eq:mccormick-changes}
        \State Initialize $M^{s,0}(t)$ by solving LP~\eqref{eq:accurate-M}
        \State $k \gets 0$ and $r^k \gets 2$ \Comment{Initial trust region $r^0$}
        \Do
            \State Find $x^{s,k+1}(t)$ by solving~\eqref{eq:cost-minlp_c} at $x^{s,k}(t), M^{s,k}(t), r^k$
            \State Find $M^{s,k+1}(t)$ solution of ~\eqref{eq:accurate-M} \Comment{$x^{s,0} \gets x^{s,k+1}$ }
            \State Compute $\Delta L^{k+1}$, $\mathrm{f}^{k+1}$, $\rho^{k+1}$ from~\eqref{eq:resol-cost-form},~\eqref{eq:acc-form},~\eqref{eq:rho-form}\label{eq:lin-acc-quant}
            \If{$\Delta L^k \leq \epsilon_{\mathrm{tol}}$ and ($\mathrm{f}^k \leq \epsilon_\mathrm{acc}$ or $\mathrm{f}^{k+1} \leq \epsilon_\mathrm{acc}$)} 
                \State \Return $M^{s,k}(t)$ \Comment{Found a solution}
            \EndIf
            \If{$\rho^k > 1$} \Comment{New solution reduces accuracy}
                \State $r^k \gets r^k / \min \{r_\mathrm{con}, \rho^k\} $\Comment{Contract trust region}
            \Else \Comment{Accept new solution}
                \State $k \gets k + 1$ \Comment{Update estimate}
                \State $r^k \gets r^{k-1} \min \{1/\rho_k, r_\textrm{exp}\}$ \Comment{Expand trust region}
            \EndIf
        \doWhile{$r^k > r_\mathrm{min}$} \Comment{Minimum trust region reached}
    \State \Return $M^{s,k}(t)$
  \end{algorithmic}
\end{algorithm}

\textbf{Complexity and correctness analysis}: We consider in this analysis the notation of Corollary~\ref{corr:minlp-formulation-constr}. Let $N_\mathrm{iter}$ be the number of iterations required by Algorithm~\ref{alg:scp-minlp} to terminate. By arguments similar to the complete graph case and using the notation in the discussion of its complexity analysis, the worst-case time complexity of Algorithm~\ref{alg:scp-minlp} is $O(2^{N_\textrm{b}}R(N_\textrm{c},C)N_{\mathrm{iter}})$. According to~\cite{mao2019successive}, such a sequential convex optimization can achieve a linear rate of convergence. Besides,  a solution returned by Algorithm~\ref{alg:scp-minlp} ensures the satisfaction of the specification.

\subsection{The Complete Algorithm: \controlalgo{}}

We develop \controlalgo{} to compute the desired Markov matrices $M^s(t)$ for all $s\in \{1,\hdots,m\}$ and $t \in \{0,\hdots,k_\mathrm{p}-1\}$. \controlalgo{} chooses the most efficient and scalable formulation for the problem depending on whether the specifications are reach-avoid specifications, the graph has a scrambling pattern, the graph is complete, or none of these special cases holds. Algorithm~\ref{alg:gtlproco} provides the description of \controlalgo{}. The user should provide as an input to \controlalgo{} the trajectory length $k_\mathrm{p}$. However, one might derive a sequential algorithm with increasing length $k_\mathrm{p}$ as long as the problem is infeasible until a feasible solution can be found. 

The Markov matrices computed by \controlalgo{} are distributed to each agent in order for them to choose their bin-to-bin transitions. Algorithm~\ref{alg:psg-algo} is a decentralized algorithm that specifies how each agent probabilistically computes its target bin at each time index in order for the high-level task specifications to be satisfied.

\begin{remark}
    Note that using a finite number of agent $N^s_\mathrm{a}$ for the sub-swarm $s$, achieving exactly a desired density $x^s(t)$ might not be possible due to the quantization error $\frac{1}{N^s_\mathrm{a}}$. For example, if $x^s(t) = [\frac{1}{3},\frac{2}{3}]$ and $N^s_\mathrm{a} = 10$, the realized density by $N^s_\mathrm{a}$ is $[0.3, 0.7]$ due to the finite value of $N^s_\mathrm{a}$.
\end{remark}

 \algdef{SE}[DOWHILE]{Do}{doWhile}{\algorithmicdo}[1]{\algorithmicwhile\ #1}%
\begin{algorithm}[!t]
    \caption{\controlalgo{}: Find the Markov matrices $M^s(t)$ for all $s\in \{1,\hdots,m\}$ and $t \in \{0,\hdots,k_\mathrm{p}-1\}$ solutions of Problem~\ref{prob:mc-approach-gtl}.}\label{alg:gtlproco}
    \begin{algorithmic}[1]    
    \Require{Graph $G^s=(V,E^s)$, swarm distribution $x^s(0)$, GTL formula $\varphi$, a set $V'$ of nodes, a length $k_{\mathrm{p}}$, and the cost $\mathcal{C}$. }
    \Ensure{ $M^s(t)$ solution of Problem~\ref{prob:mc-approach-gtl} }
        \If{$\varphi$ expresses reach-avoid specifications}
            \If{all graphs $G^s$ have scrambling pattern}
                \State Compute $M^s(t)$ by solving LP~\eqref{eq:cost-lp}--\eqref{eq:pos-M-lp}
            \Else
                \State Compute $M^s(t)$ by solving SDP~\eqref{eq:cost-sdp}
            \EndIf
        \Else 
            \State Find MILP encoding from $\varphi$ and $V'$ via Corollary~\ref{eq:milp-spec}
            \If{all graphs $G^s$ are complete}
                \State Find $M^s(t)$ via MILP~\eqref{eq:cost-minlp}--\eqref{pgtl_pos_x_constr} and Corollary~\ref{corr:complete_graph_case_milp}
            \Else
                \State Find $M^s(t)$ via Algorithm~\ref{alg:scp-minlp}
            \EndIf
        \EndIf
    \State \Return $M^{s}(t)$
  \end{algorithmic}
\end{algorithm}

\begin{algorithm}[!t]
    \caption{Probabilistic Swarm Guidance for each agent in the sub-swarm $s \in \{1,\hdots,m\}$.}  
    \label{alg:psg-algo}
    \begin{algorithmic}[1]
        \State Identify the current bin $\mathcal{R}_i$
        \State Query $M^s_{ji}(t)$, for all $j \in \{1,\hdots,n_{\textrm{r}}\}$
        \State Generate $z$ from the uniform distribution on $[0,1]$
        \State Select bin $\mathcal{R}_j$ such that
        $\sum_{l=1}^{j-1} M^s_{li}(t) \leq z \leq \sum_{l=1}^{j} M^s_{li}(t)$
        \State Transit to bin $\mathcal{R}_j$ while achieving collision avoidance. \label{alg-local-interaction}
    \end{algorithmic}
\end{algorithm} 

\section{Numerical experiments} \label{sec:numercal-example}
\begin{figure*}
    \centering
    \input{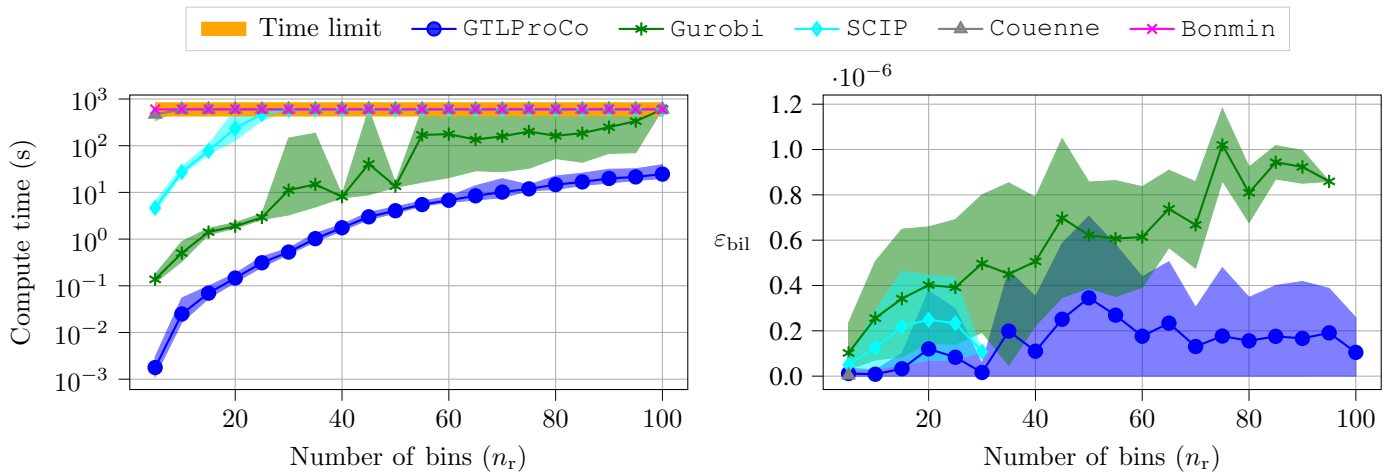}
    \vspace{-0.80cm}
    \caption{From left to right, we plot the computational time on a logarithm scale of each MINLP solver and the accuracy of the bilinear constraint of the solutions returned by each solver. We demonstrate that \controlalgo{} is significantly faster and more accurate than \texttt{Gurobi}, \texttt{SCIP}, \texttt{Couenne}, and \texttt{Bonmin}.}\vspace{-0.40cm}
    \label{fig:expMINLP}
\end{figure*}
In this section, we evaluate the \controlalgo{} on several swarm control tasks expressed using GTL. 
Specifically, we first empirically demonstrate that \controlalgo{} significantly improves scalability over off-the-shelf MINLP solvers applied on the original MINLP~\eqref{eq:cost-minlp}--\eqref{pgtl_markov_state_evol}. 
Besides, we show that \controlalgo{} can compute Markov matrices with a higher accuracy for the bilinear constraints than off-the-shelf MINLP solvers.
Second, in several gridworld examples, we show that our control approach is fast, sound, and can be applied in scenarios involving a large number of agents. 

All the experiments of this paper are performed on a computer with an Intel Core $i9$-$9900$ CPU $3.1$GHz $\times 16$ processors and $31.2$ Gb of RAM. 
All the implementations are written and tested in Python $3.8$. We use Gurobi $9$~\cite{gurobi} to solve all the linear and mixed-integer linear programs in this paper. 
We use Mosek~\cite{mosek} to solve the semi-definite programs presented in this paper. We provide in \texttt{https://github.com/wuwushrek/GTLProCo} all the codes for reproducibility and the videos of the experiments.\vspace*{-0.25cm}

\subsection{Comparisons with Off-The-Shelf MINLP Solvers}

In this section, we  compare \controlalgo{} with open-source and efficient MINLP solvers such as \texttt{Gurobi}~\cite{gurobi}, \texttt{Couenne}~\cite{belotti2009couenne}, \texttt{Bonmin}~\cite{bonami2007bonmin}, and \texttt{SCIP}~\cite{GamrathEtal2016ZR}. 
To this end, we randomly generate both problem instances of different sizes and GTL specifications as follows:
\begin{itemize}
    \item We generate $2000$ random problem instances.
    \item We generate each problem instance such that the number of bins  $n_\mathrm{r} \in \{5, 10, 15, 20,25,\hdots, 90, 95, 100\}$, the trajectory length $k_\mathrm{p} \in \{5,6,6,7,7,\hdots,14,14,15\}$, and the number of sub-swarms $m = 1$.
    \item We generate the underlying graph $G$ for each problem instance such that each node in $G$ has a random number of edges between $2$ and $5$.
    \item We randomly generate the GTL formula of each problem such that the atomic propositions are random. 
    We use a set of operators in the list $\wedge$, $\vee$, $\square$, and $\Diamond$.
    We make sure that the MILP encoding of the GTL formula is feasible.  
    \item In each problem, since the $l_j$ terms for the loop constraints are variables, we impose the additional cost function $\mathcal{C}^\mathrm{loop} = \sum_{j=1}^{k_\mathrm{p}} (j+1) l_j$. 
    Basically, by minimizing the cost, we desire the time loop to start as early as possible.
\end{itemize}
We use the default parameters of each MINLP solver except for the time limit that we constrain to be \emph{ten minutes}.
For \controlalgo{}, we choose the trust region contraction and expansion parameters as $r_\mathrm{con} = 1.5$ and $r_\mathrm{exp} = 1.5$.
We also choose the minimum trust region value to be $r_\mathrm{min} = 1e^{-4}$, the linearization penalty to be $\lambda = 10$, the cost tolerance to be $\epsilon_{\mathrm{tol}} = 1e^{-6}$, and the accuracy tolerance to be $\epsilon_{\mathrm{acc}} = 1e^{-6}$.

We compare the computation time and the error of the bilinear constraint of \controlalgo{} with the off-the-shelf MINLP solvers over the aforementioned randomly-generated problem instances. 
Given a solution $x^s(t)$ and $M^s(t)$ of the MINLP problem, we define the error of the bilinear constraint as
\begin{align*}
    \varepsilon_{\mathrm{bil}} = \max_{s \in \{1,\hdots,m\}} \max_{t\in \{0,\hdots, k_\mathrm{p}-1\}} \|x^s(t+1) - M^s(t)x^s(t)\|_\infty.
\end{align*}%
Figure~\ref{fig:expMINLP} empirically demonstrates the superior performance of our control algorithm \controlalgo{}, both in terms of computation time and error of the bilinear constraint,  compared to off-the-shelf MINLP solvers. 
Specifically, it shows that \texttt{Bonmin} is unable to solve any problems in the given time limit while \texttt{Couenne} is only able to solve problems corresponding to $n_\mathrm{r} = 5$ with a computation time of $468s$.
\texttt{SCIP}  times out for problem instances with $n_\mathrm{r} \geq 25$ while \texttt{Gurobi} times out with $n_\mathrm{r} \geq 95$. 
Therefore, \texttt{Gurobi} is the only algorithm that achieves comparable performance with \controlalgo{}. 
The standard deviation in compute time demonstrates that the compute time of \controlalgo{} is more consistent in most examples compared to \texttt{Gurobi}. 
Finally, Figure~\ref{fig:expMINLP} additionally demonstrates that, in almost all cases, \controlalgo{} finds a solution with better accuracy for the bilinear constraint than the MINLP solvers.

\subsection{Homogeneous Swarm Subject to GTL Specifications}
\begin{figure*}
    \centering
    \includegraphics[width=1.75in,height=1.75in]{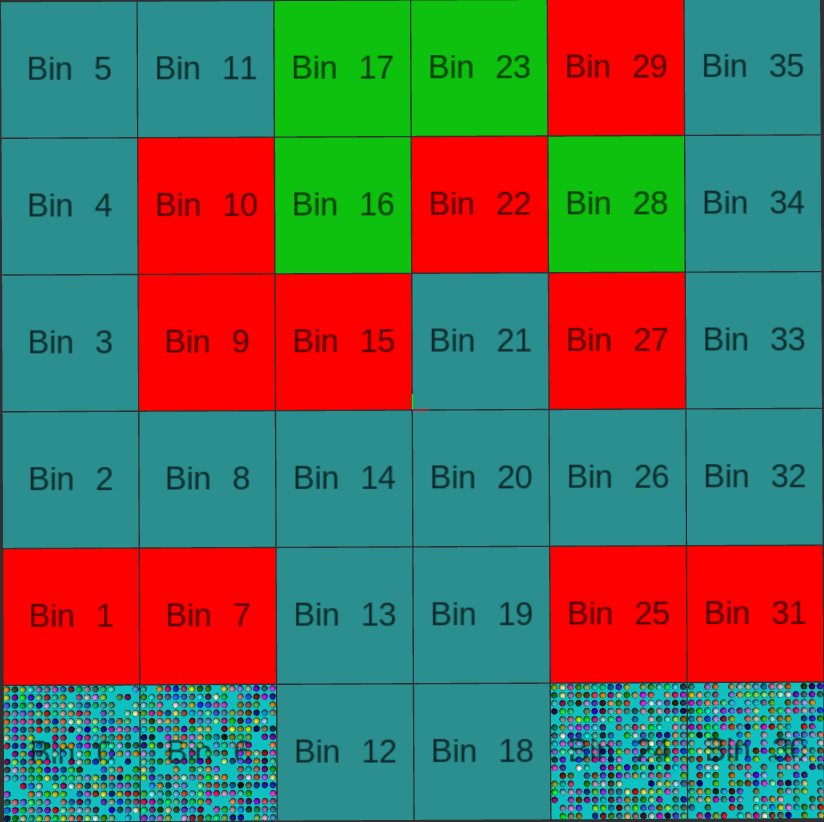}
    \includegraphics[width=1.75in,height=1.75in]{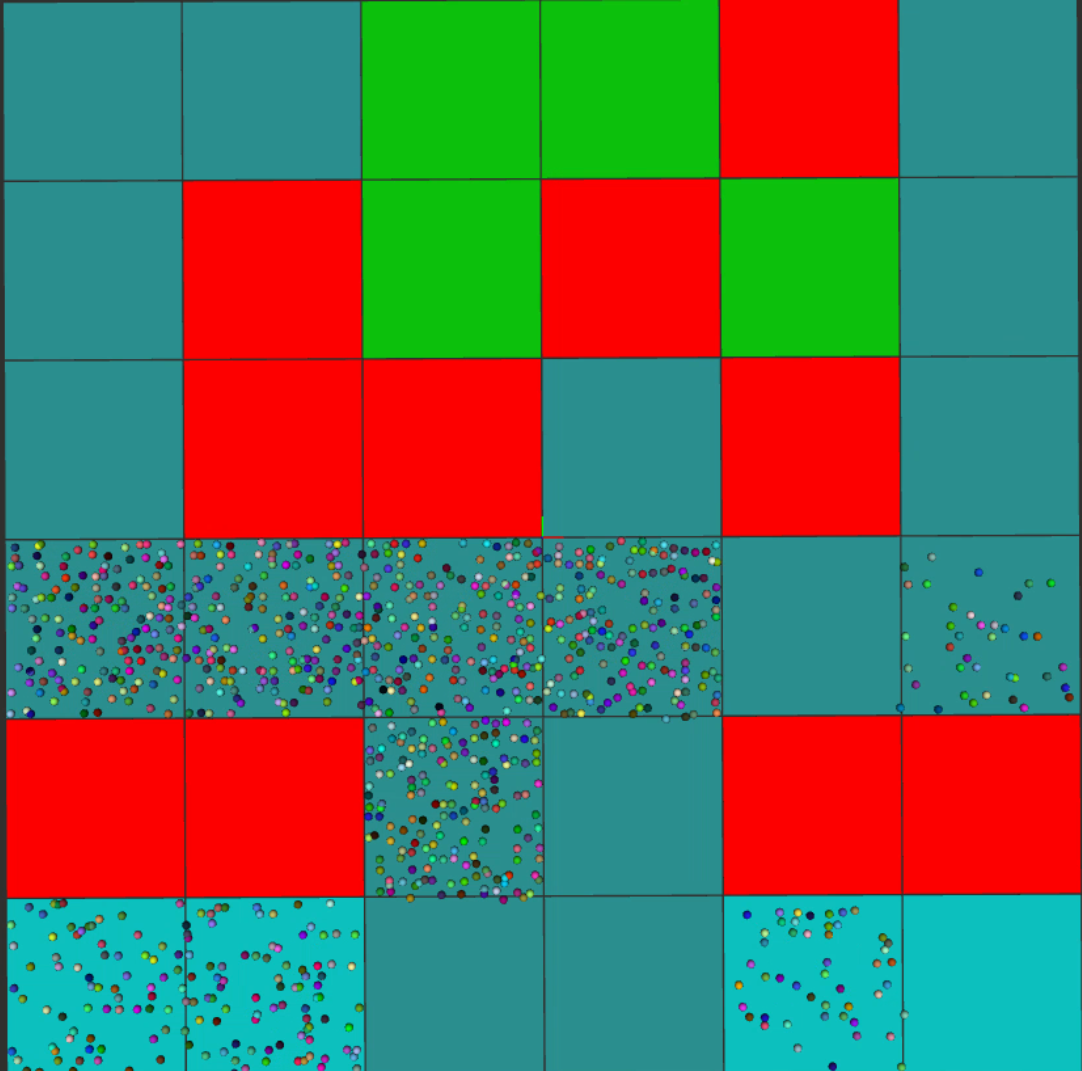}
    \includegraphics[width=1.75in,height=1.75in]{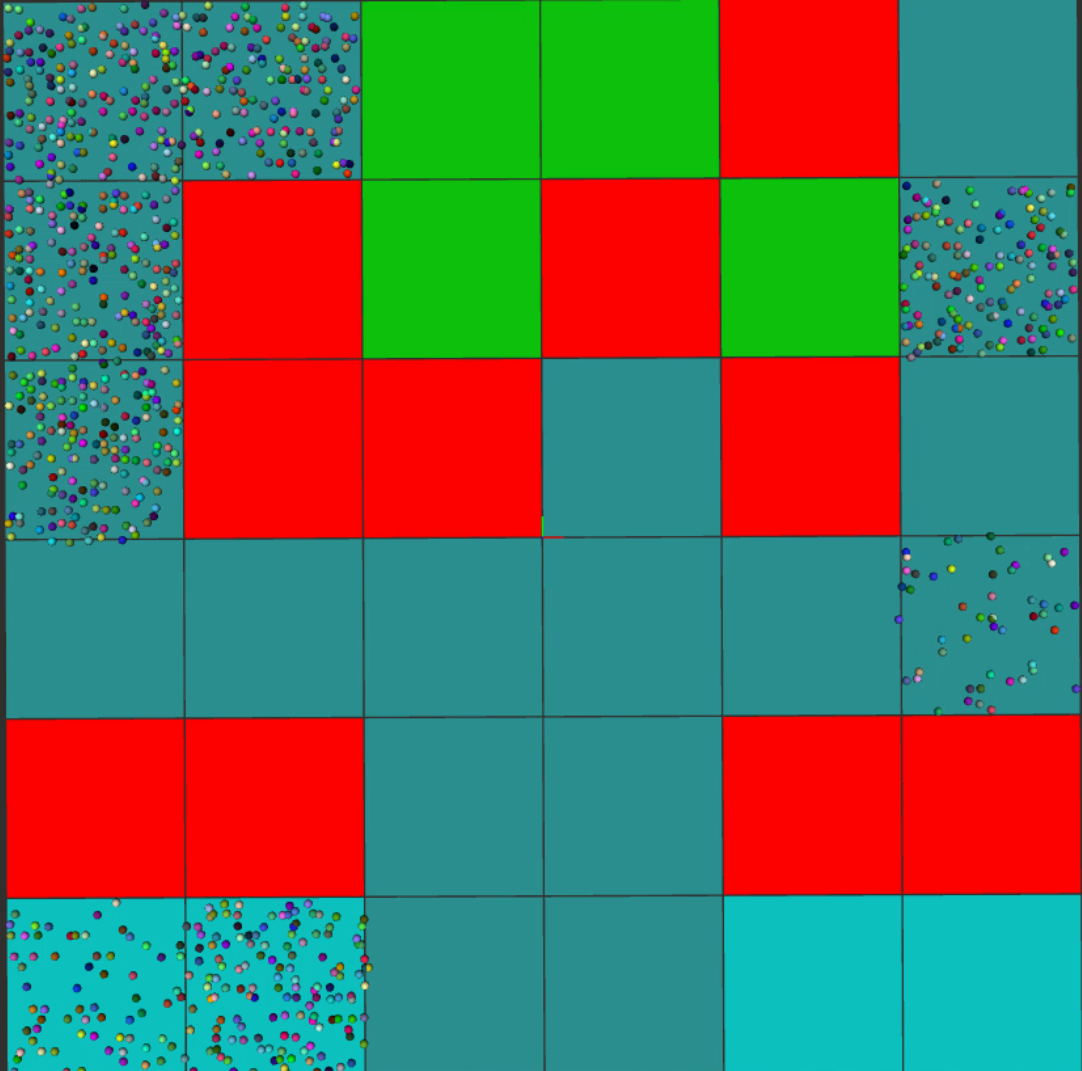}
    \includegraphics[width=1.75in,height=1.75in]{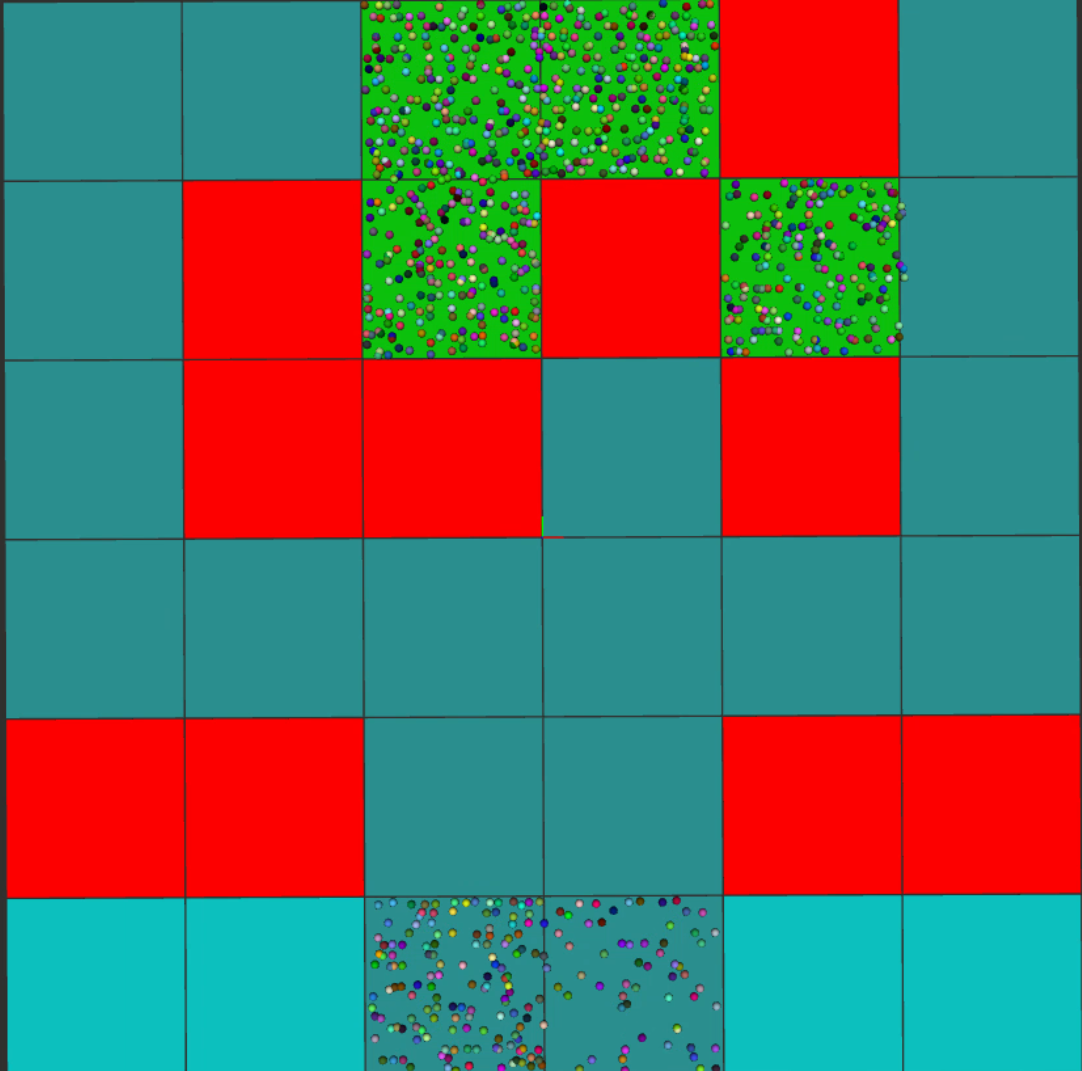}
    \vspace{-0.50cm}
    \caption{Evolution of the \emph{thousand} agents comprising the swarm at different time indexes. From the left to the right figure, we show the distribution of the swarm at time indexes $0$, $5$, $9$, and $15$. In the figure, the obstacle bins are in red, the target bins are in green, and the starting bins are in cyan.}\vspace{-0.40cm}
    \label{fig:evolution-swarm-homogeneous}
\end{figure*}
In this section, we consider a simulation example with a swarm of homogeneous agents navigating in a gridworld environment as shown in Figure~\ref{fig:evolution-swarm-homogeneous}. 
The desired behavior of the swarm is, from a given initial distribution, to reach a target distribution defined by some set of linear constraints while avoiding obstacles. 
Moreover, the swarm must satisfy some capacity constraints in the bins, i.e., each bin can contain only a fixed maximum number of agents at any time index. 
We obtain the graph representing the gridworld and label it with $f_{i}(x) = x$ for each node $i$. 
Then, we use GTL to express the task specifications of the homogeneous swarm ($m=1$) as following:\looseness=-1
\begin{itemize}
    \item Initially we have $x_i^1(0) = 0.25$ for all $i\in\{0,6,24,30\}$. That is, the agents are distributed in bins $0,6,24$, and $30$.
    \item We consider the GTL formulas $\Diamond \square (f_i \geq 0.2)$ for all $i \in \{16, 17, 23, 28\}$ and $\square (f_i = 0)$ for each obstacle bin $i$. Thus, we expect to swarm to reach final bins $16, 17, 23, 28$ while satisfying the specified constraints.
    \item We enforce the capacity constraints via the safe properties $\square (f_i \leq 0.25)$ for $i \in \{0,6,16,17,23,24,28,30\}$ and $\square (f_i \leq 0.15)$ for the remaining bins. 
    Thus, we relaxed the capacity constraints for the starting and final bins.
\end{itemize}

More specifically, we consider a scenario with a swarm comprised of $1000$ agents. 
We first apply \controlalgo{} to find a time-varying Markov matrix $M^1(t)$ such that the specifications above are satisfied. 
Then, at each time index, each agent independently and probabilistically chooses their target bin based on Algorithm~\ref{alg:psg-algo} with the computed $M^1(t)$.
\begin{figure}[!hbt]
    \centering
\begin{tikzpicture}

\definecolor{color0}{rgb}{0,1,1}
\definecolor{color1}{rgb}{1,0,1}

\begin{axis}[
width=8cm,
height=4cm,
legend cell align={left},
legend columns=5,
legend style={
  fill opacity=0.8,
  draw opacity=1,
  text opacity=1,
  at={(-0.15,1.25)},
  anchor=north west,
  draw=white!80!black
},
tick align=outside,
tick pos=left,
x grid style={white!69.0196078431373!black},
xlabel={Time steps},
xmajorgrids,
xmin=-1.45, xmax=30.45,
xtick style={color=black},
y grid style={white!69.0196078431373!black},
ylabel={Swarm density},
ymajorgrids,
ymin=-0.0114525, ymax=0.2405025,
ytick style={color=black},
ytick={-0.05,0,0.05,0.1,0.15,0.2,0.25},
yticklabels={−0.05,0.00,0.05,0.10,0.15,0.20,0.25}
]
\addplot [thick, red, mark=*, mark size=2, mark options={solid}]
table {%
0 0
1 0
2 0
3 0
4 0
5 0
6 0
7 0
8 0
9 0
10 0
11 0
12 0
13 0
14 0
15 0
16 0
17 0
18 0
19 0
20 0
21 0
22 0
23 0
24 0
25 0
26 0
27 0
28 0
29 0
};
\addlegendentry{\small Obstacles}
\addplot [thick, red, mark=*, mark size=2, mark options={solid}, forget plot]
table {%
0 0
1 0
2 0
3 0
4 0
5 0
6 0
7 0
8 0
9 0
10 0
11 0
12 0
13 0
14 0
15 0
16 0
17 0
18 0
19 0
20 0
21 0
22 0
23 0
24 0
25 0
26 0
27 0
28 0
29 0
};
\addplot [thick, red, mark=*, mark size=2, mark options={solid}, forget plot]
table {%
0 0
1 0
2 0
3 0
4 0
5 0
6 0
7 0
8 0
9 0
10 0
11 0
12 0
13 0
14 0
15 0
16 0
17 0
18 0
19 0
20 0
21 0
22 0
23 0
24 0
25 0
26 0
27 0
28 0
29 0
};
\addplot [thick, red, mark=*, mark size=2, mark options={solid}, forget plot]
table {%
0 0
1 0
2 0
3 0
4 0
5 0
6 0
7 0
8 0
9 0
10 0
11 0
12 0
13 0
14 0
15 0
16 0
17 0
18 0
19 0
20 0
21 0
22 0
23 0
24 0
25 0
26 0
27 0
28 0
29 0
};
\addplot [thick, red, mark=*, mark size=2, mark options={solid}, forget plot]
table {%
0 0
1 0
2 0
3 0
4 0
5 0
6 0
7 0
8 0
9 0
10 0
11 0
12 0
13 0
14 0
15 0
16 0
17 0
18 0
19 0
20 0
21 0
22 0
23 0
24 0
25 0
26 0
27 0
28 0
29 0
};
\addplot [thick, red, mark=*, mark size=2, mark options={solid}, forget plot]
table {%
0 0
1 0
2 0
3 0
4 0
5 0
6 0
7 0
8 0
9 0
10 0
11 0
12 0
13 0
14 0
15 0
16 0
17 0
18 0
19 0
20 0
21 0
22 0
23 0
24 0
25 0
26 0
27 0
28 0
29 0
};
\addplot [thick, red, mark=*, mark size=2, mark options={solid}, forget plot]
table {%
0 0
1 0
2 0
3 0
4 0
5 0
6 0
7 0
8 0
9 0
10 0
11 0
12 0
13 0
14 0
15 0
16 0
17 0
18 0
19 0
20 0
21 0
22 0
23 0
24 0
25 0
26 0
27 0
28 0
29 0
};
\addplot [thick, red, mark=*, mark size=2, mark options={solid}, forget plot]
table {%
0 0
1 0
2 0
3 0
4 0
5 0
6 0
7 0
8 0
9 0
10 0
11 0
12 0
13 0
14 0
15 0
16 0
17 0
18 0
19 0
20 0
21 0
22 0
23 0
24 0
25 0
26 0
27 0
28 0
29 0
};
\addplot [thick, red, mark=*, mark size=2, mark options={solid}, forget plot]
table {%
0 0
1 0
2 0
3 0
4 0
5 0
6 0
7 0
8 0
9 0
10 0
11 0
12 0
13 0
14 0
15 0
16 0
17 0
18 0
19 0
20 0
21 0
22 0
23 0
24 0
25 0
26 0
27 0
28 0
29 0
};
\addplot [thick, red, mark=*, mark size=2, mark options={solid}, forget plot]
table {%
0 0
1 0
2 0
3 0
4 0
5 0
6 0
7 0
8 0
9 0
10 0
11 0
12 0
13 0
14 0
15 0
16 0
17 0
18 0
19 0
20 0
21 0
22 0
23 0
24 0
25 0
26 0
27 0
28 0
29 0
};
\addplot [thick, blue, mark=asterisk, mark size=2, mark options={solid}]
table {%
0 0
1 0
2 0
3 0
4 0
5 0
6 0
7 0
8 0
9 0
10 0
11 0.114449977874756
12 0.0401500463485718
13 0.199949979782104
14 0.199949979782104
15 0.199949979782104
16 0.199949979782104
17 0.199949979782104
18 0.199949979782104
19 0.199949979782104
20 0.199949979782104
21 0.199949979782104
22 0.199949979782104
23 0.199949979782104
24 0.199949979782104
25 0.199949979782104
26 0.199949979782104
27 0.199949979782104
28 0.199949979782104
29 0.199949979782104
};
\addlegendentry{\small $\mathcal{R}_{16}$}
\addplot [thick, green!50.1960784313725!black, mark=diamond*, mark size=2, mark options={solid}]
table {%
0 0
1 0
2 0
3 0
4 0
5 0
6 0
7 0
8 0
9 0
10 0.151600003242493
11 0.148750066757202
12 0.229050040245056
13 0.201349973678589
14 0.197149991989136
15 0.201349973678589
16 0.197149991989136
17 0.201349973678589
18 0.197149991989136
19 0.201349973678589
20 0.197149991989136
21 0.201349973678589
22 0.197149991989136
23 0.201349973678589
24 0.197149991989136
25 0.201349973678589
26 0.197149991989136
27 0.201349973678589
28 0.197149991989136
29 0.201349973678589
};
\addlegendentry{\small $\mathcal{R}_{17}$}
\addplot [thick, color0, mark=triangle*, mark size=2, mark options={solid}]
table {%
0 0
1 0
2 0
3 0
4 0
5 0
6 0
7 0
8 0
9 0
10 0
11 0.0354499816894531
12 0.178550004959106
13 0.197149991989136
14 0.201349973678589
15 0.197149991989136
16 0.201349973678589
17 0.197149991989136
18 0.201349973678589
19 0.197149991989136
20 0.201349973678589
21 0.197149991989136
22 0.201349973678589
23 0.197149991989136
24 0.201349973678589
25 0.197149991989136
26 0.201349973678589
27 0.197149991989136
28 0.201349973678589
29 0.197149991989136
};
\addlegendentry{\small $\mathcal{R}_{23}$}
\addplot [thick, color1, mark=x, mark size=2, mark options={solid}]
table {%
0 0
1 0
2 0
3 0
4 0
5 0
6 0
7 0
8 0.0497499704360962
9 0
10 0
11 0.0580500364303589
12 0.0593999624252319
13 0.201550006866455
14 0.201550006866455
15 0.201550006866455
16 0.201550006866455
17 0.201550006866455
18 0.201550006866455
19 0.201550006866455
20 0.201550006866455
21 0.201550006866455
22 0.201550006866455
23 0.201550006866455
24 0.201550006866455
25 0.201550006866455
26 0.201550006866455
27 0.201550006866455
28 0.201550006866455
29 0.201550006866455
};
\addlegendentry{ $\mathcal{R}_{28}$}
\end{axis}

\end{tikzpicture}
\begin{tikzpicture}

\definecolor{color0}{rgb}{0,1,1}
\definecolor{color1}{rgb}{1,0,1}

\begin{axis}[
width=8cm,
height=4cm,
legend columns=4,
legend cell align={left},
legend style={fill opacity=0.8, draw opacity=1, text opacity=1, at={(0,1.25)},
  anchor=north west, draw=white!80!black},
tick align=outside,
tick pos=left,
x grid style={white!69.0196078431373!black},
xlabel={Time steps},
xmajorgrids,
xmin=-1.45, xmax=30.45,
xtick style={color=black},
y grid style={white!69.0196078431373!black},
ylabel={Swarm density},
ymajorgrids,
ymin=-0.012695, ymax=0.266595,
ytick style={color=black},
ytick={-0.05,0,0.05,0.1,0.15,0.2,0.25,0.3},
yticklabels={−0.05,0.00,0.05,0.10,0.15,0.20,0.25,0.30}
]
\addplot [thick, blue, mark=asterisk, mark size=2, mark options={solid}]
table {%
0 0.249300003051758
1 0.101600050926208
2 0.101600050926208
3 0.147850036621094
4 0.147850036621094
5 0.059499979019165
6 0
7 0.059499979019165
8 0.059499979019165
9 0.059499979019165
10 0.059499979019165
11 0
12 0
13 0
14 0
15 0
16 0
17 0
18 0
19 0
20 0
21 0
22 0
23 0
24 0
25 0
26 0
27 0
28 0
29 0
};
\addlegendentry{\small $\mathcal{R}_0$}
\addplot [thick, green!50.1960784313725!black, mark=diamond*, mark size=2, mark options={solid}]
table {%
0 0.250699996948242
1 0.246799945831299
2 0.196699976921082
3 0
4 0
5 0.0883500576019287
6 0.059499979019165
7 0
8 0.140499949455261
9 0.140499949455261
10 0
11 0.200000047683716
12 0.0577000379562378
13 0
14 0
15 0
16 0
17 0
18 0
19 0
20 0
21 0
22 0
23 0
24 0
25 0
26 0
27 0
28 0
29 0
};
\addlegendentry{\small $\mathcal{R}_6$}
\addplot [thick, color0, mark=triangle*, mark size=2, mark options={solid}]
table {%
0 0.246099948883057
1 0.238600015640259
2 0.206400036811829
3 0.0521500110626221
4 0
5 0.0521500110626221
6 0
7 0
8 0
9 0
10 0
11 0
12 0
13 0
14 0
15 0
16 0
17 0
18 0
19 0
20 0
21 0
22 0
23 0
24 0
25 0
26 0
27 0
28 0
29 0
};
\addlegendentry{\small $\mathcal{R}_{24}$}
\addplot [thick, color1, mark=x, mark size=2, mark options={solid}]
table {%
0 0.253900051116943
1 0.114699959754944
2 0
3 0
4 0.0521500110626221
5 0
6 0
7 0
8 0
9 0
10 0
11 0
12 0
13 0
14 0
15 0
16 0
17 0
18 0
19 0
20 0
21 0
22 0
23 0
24 0
25 0
26 0
27 0
28 0
29 0
};
\addlegendentry{\small $\mathcal{R}_{30}$}
\end{axis}

\end{tikzpicture}
\begin{tikzpicture}

\definecolor{color0}{rgb}{0,1,1}
\definecolor{color1}{rgb}{1,0,1}

\begin{axis}[
width=8cm,
height=4cm,
legend columns=4,
legend cell align={left},
legend style={
  fill opacity=0.8,
  draw opacity=1,
  text opacity=1,
  at={(0,1.25)},
  anchor=north west,
  draw=white!80!black
},
tick align=outside,
tick pos=left,
x grid style={white!69.0196078431373!black},
xlabel={Time steps},
xmajorgrids,
xmin=-1.45, xmax=30.45,
xtick style={color=black},
y grid style={white!69.0196078431373!black},
ylabel={Swarm density},
ymajorgrids,
ymin=-0.0077125, ymax=0.1619625,
ytick style={color=black},
ytick={-0.02,0,0.02,0.04,0.06,0.08,0.1,0.12,0.14,0.16,0.18},
yticklabels={−0.02,0.00,0.02,0.04,0.06,0.08,0.10,0.12,0.14,0.16,0.18}
]
\addplot [thick, blue, mark=asterisk, mark size=2, mark options={solid}]
table {%
0 0
1 0
2 0
3 0
4 0.151600003242493
5 0.147050023078918
6 0.149100065231323
7 0.150699973106384
8 0
9 0
10 0
11 0
12 0
13 0
14 0
15 0
16 0
17 0
18 0
19 0
20 0
21 0
22 0
23 0
24 0
25 0
26 0
27 0
28 0
29 0
};
\addlegendentry{\small $\mathcal{R}_{8}$}
\addplot [thick, green!50.1960784313725!black, mark=diamond*, mark size=2, mark options={solid}]
table {%
0 0
1 0.151600003242493
2 0.0500999689102173
3 0.150449991226196
4 0
5 0
6 0.0883500576019287
7 0.140499949455261
8 0
9 0
10 0.140499949455261
11 0
12 0.142300009727478
13 0.0577000379562378
14 0.142300009727478
15 0.0577000379562378
16 0.142300009727478
17 0.0577000379562378
18 0.142300009727478
19 0.0577000379562378
20 0.142300009727478
21 0.0577000379562378
22 0.142300009727478
23 0.0577000379562378
24 0.142300009727478
25 0.0577000379562378
26 0.142300009727478
27 0.0577000379562378
28 0.142300009727478
29 0.0577000379562378
};
\addlegendentry{\small $\mathcal{R}_{12}$}
\addplot [thick, color0, mark=triangle*, mark size=2, mark options={solid}]
table {%
0 0
1 0.146700024604797
2 0.146899938583374
3 0.154250025749207
4 0
5 0
6 0.0521500110626221
7 0
8 0
9 0
10 0
11 0
12 0
13 0.142300009727478
14 0.0577000379562378
15 0.142300009727478
16 0.0577000379562378
17 0.142300009727478
18 0.0577000379562378
19 0.142300009727478
20 0.0577000379562378
21 0.142300009727478
22 0.0577000379562378
23 0.142300009727478
24 0.0577000379562378
25 0.142300009727478
26 0.0577000379562378
27 0.142300009727478
28 0.0577000379562378
29 0.142300009727478
};
\addlegendentry{\small $\mathcal{R}_{18}$}
\addplot [thick, color1, mark=x, mark size=2, mark options={solid}]
table {%
0 0
1 0
2 0
3 0
4 0.0497499704360962
5 0
6 0.151800036430359
7 0
8 0.00639998912811279
9 0
10 0
11 0
12 0
13 0
14 0
15 0
16 0
17 0
18 0
19 0
20 0
21 0
22 0
23 0
24 0
25 0
26 0
27 0
28 0
29 0
};
\addlegendentry{\small $\mathcal{R}_{26}$}
\end{axis}

\end{tikzpicture}
    \vspace{-0.20cm}
    \caption{Evolution of the density distribution of the swarm in the obstacle bins and the final bins (top figure), in the starting bins (middle figure), and in bins under the GTL constraint $\square (f_i \leq 0.15)$ (bottom figure). }
    \label{fig:density-evl-ex1}
\end{figure}
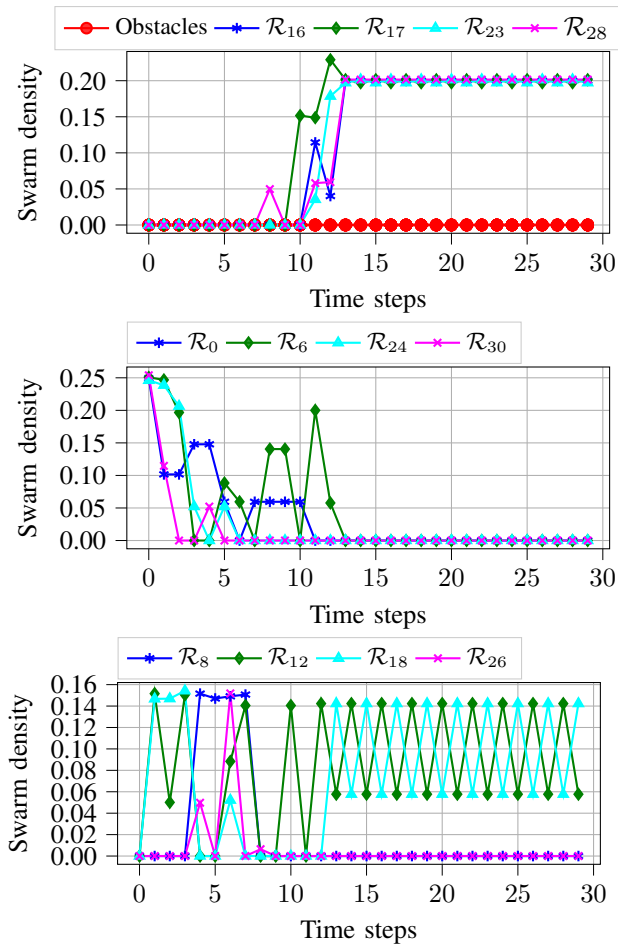

Figure~\ref{fig:density-evl-ex1} empirically demonstrates that the proposed approach is sound since all the specifications for this experiment were satisfied. 
Specifically, one can observe that the density inside the obstacles bins is always zero, all the capacity constraints are satisfied, and the final bins constraints (density greater than $0.2$) are also satisfied. 
Thus, we empirically demonstrate with this example the correctness of our algorithm. 
Furthermore, \controlalgo{} took only $1.1s$ to terminate.

\subsection{Heterogeneous Swarm in a Gridworld}
\begin{figure*}
    \centering
    \includegraphics[width=1.75in,height=1.75in]{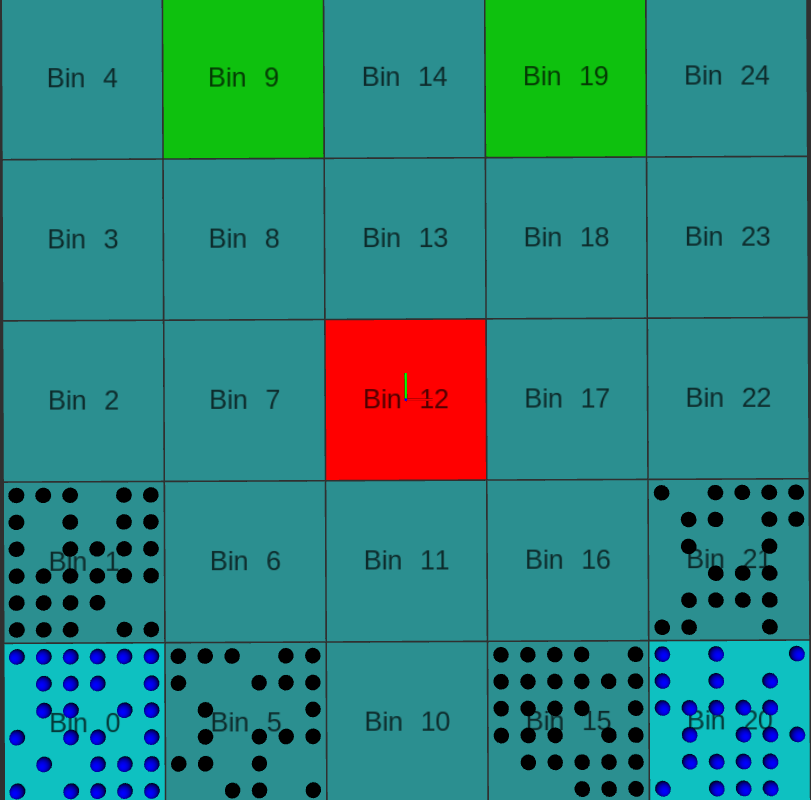}
    \includegraphics[width=1.75in,height=1.75in]{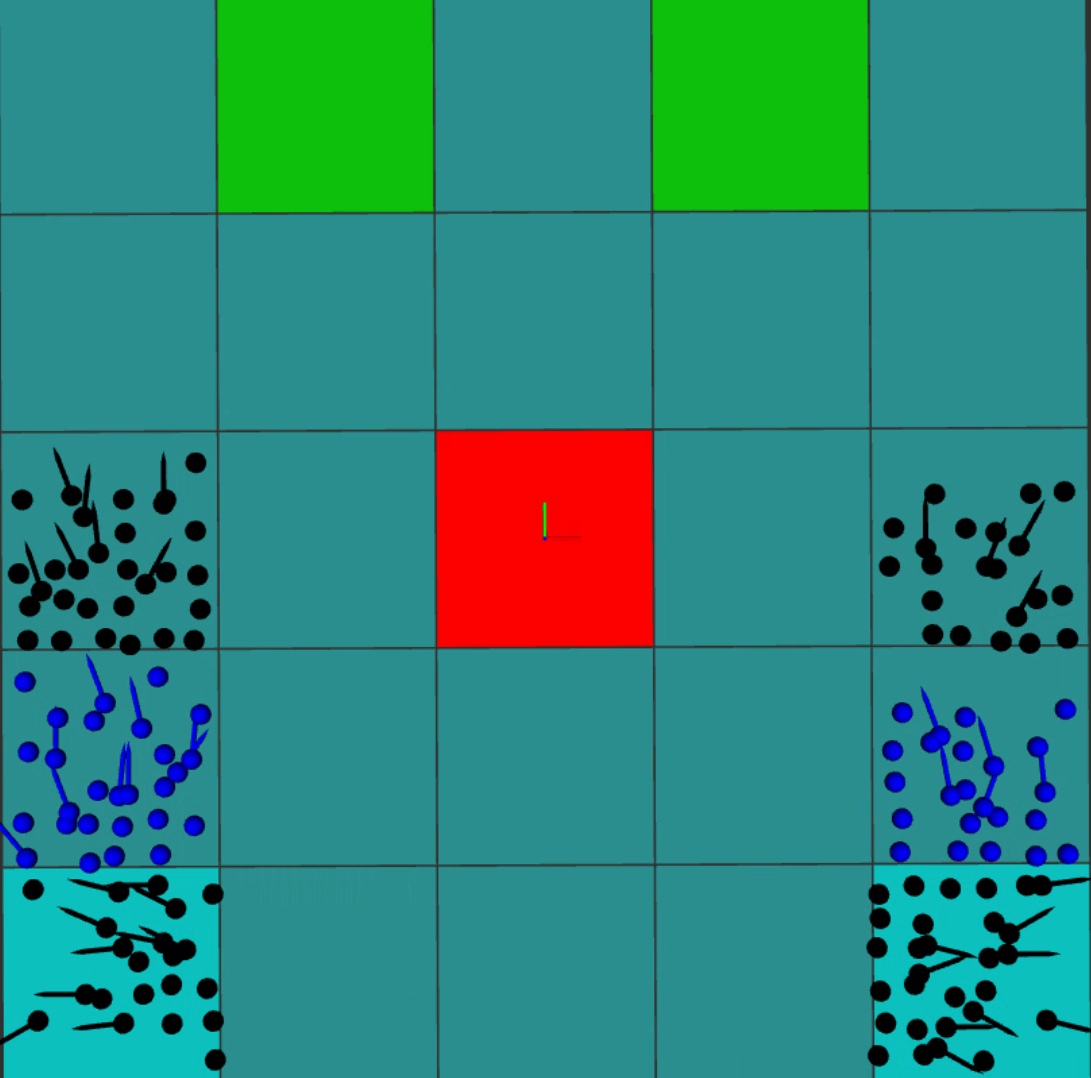}
    \includegraphics[width=1.75in,height=1.75in]{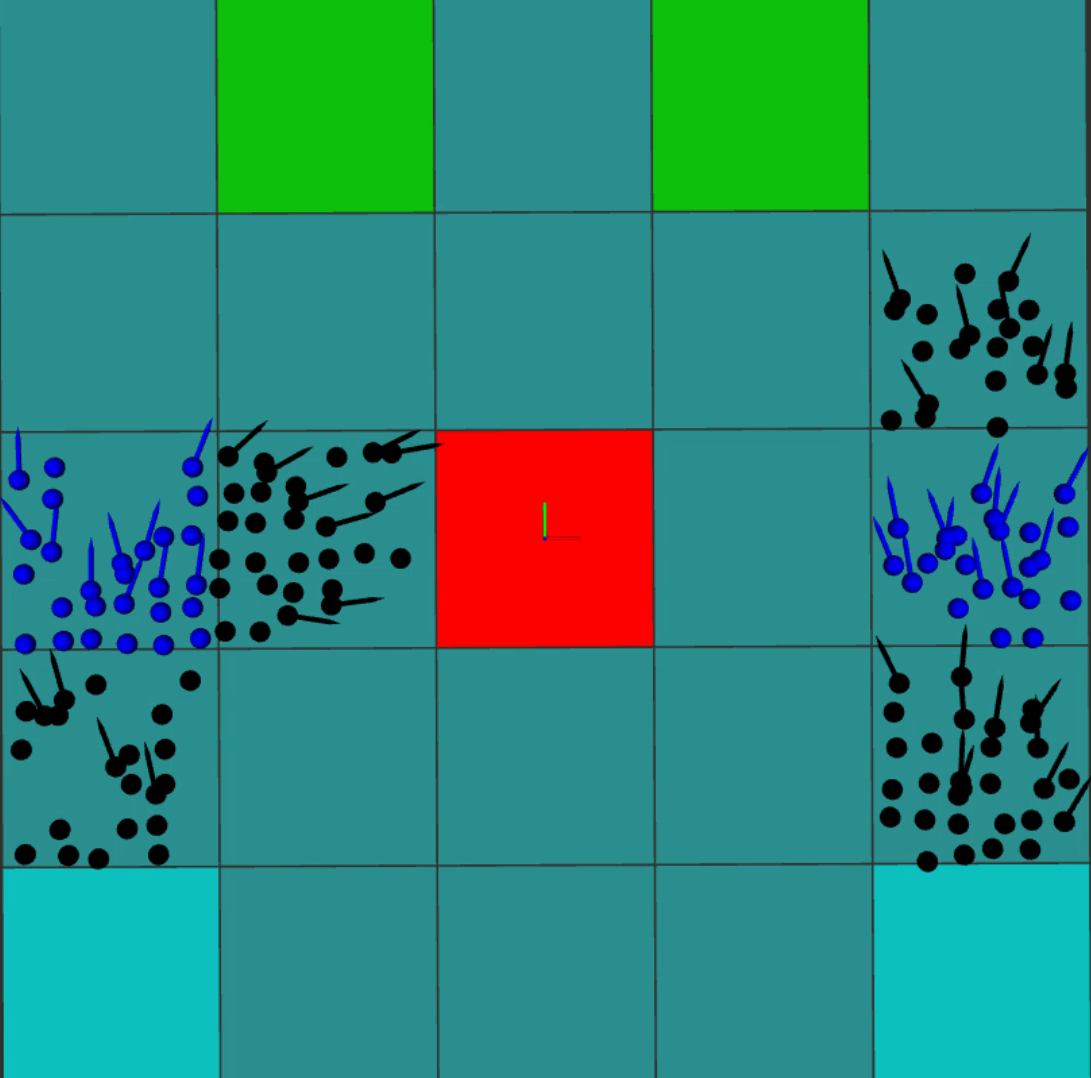}
    \includegraphics[width=1.75in,height=1.75in]{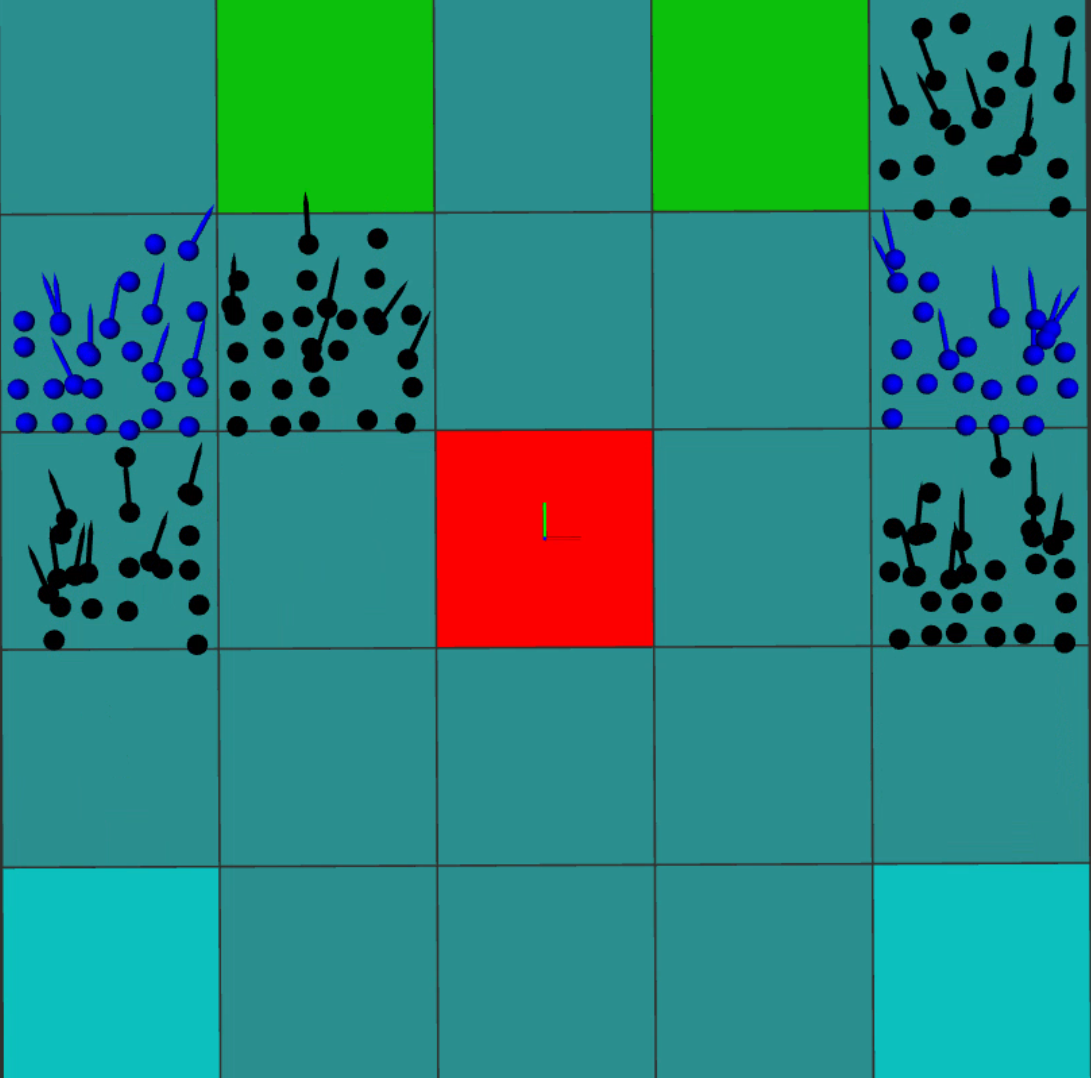}\vspace{0.10cm}
    \includegraphics[width=1.75in,height=1.75in]{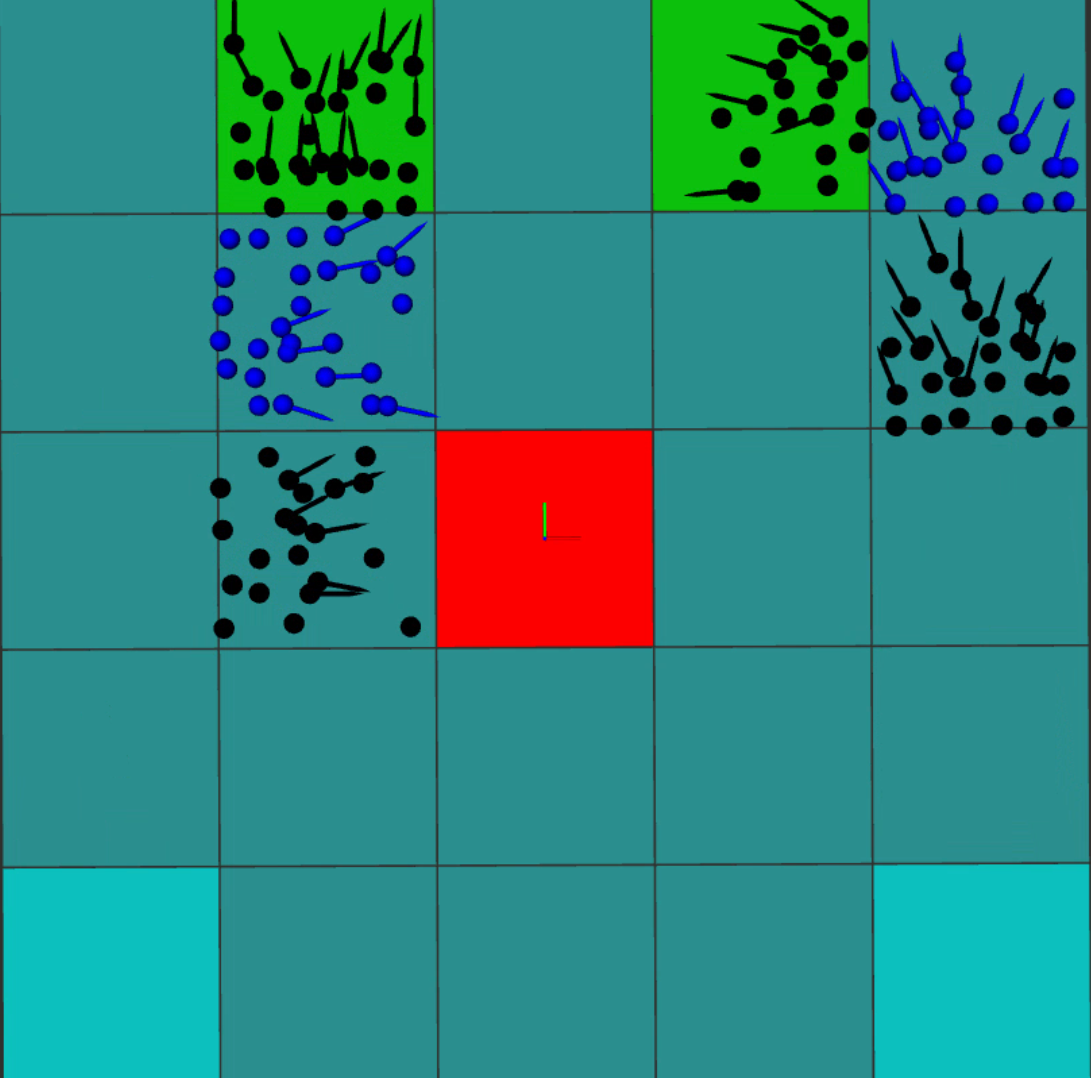}
    \includegraphics[width=1.75in,height=1.75in]{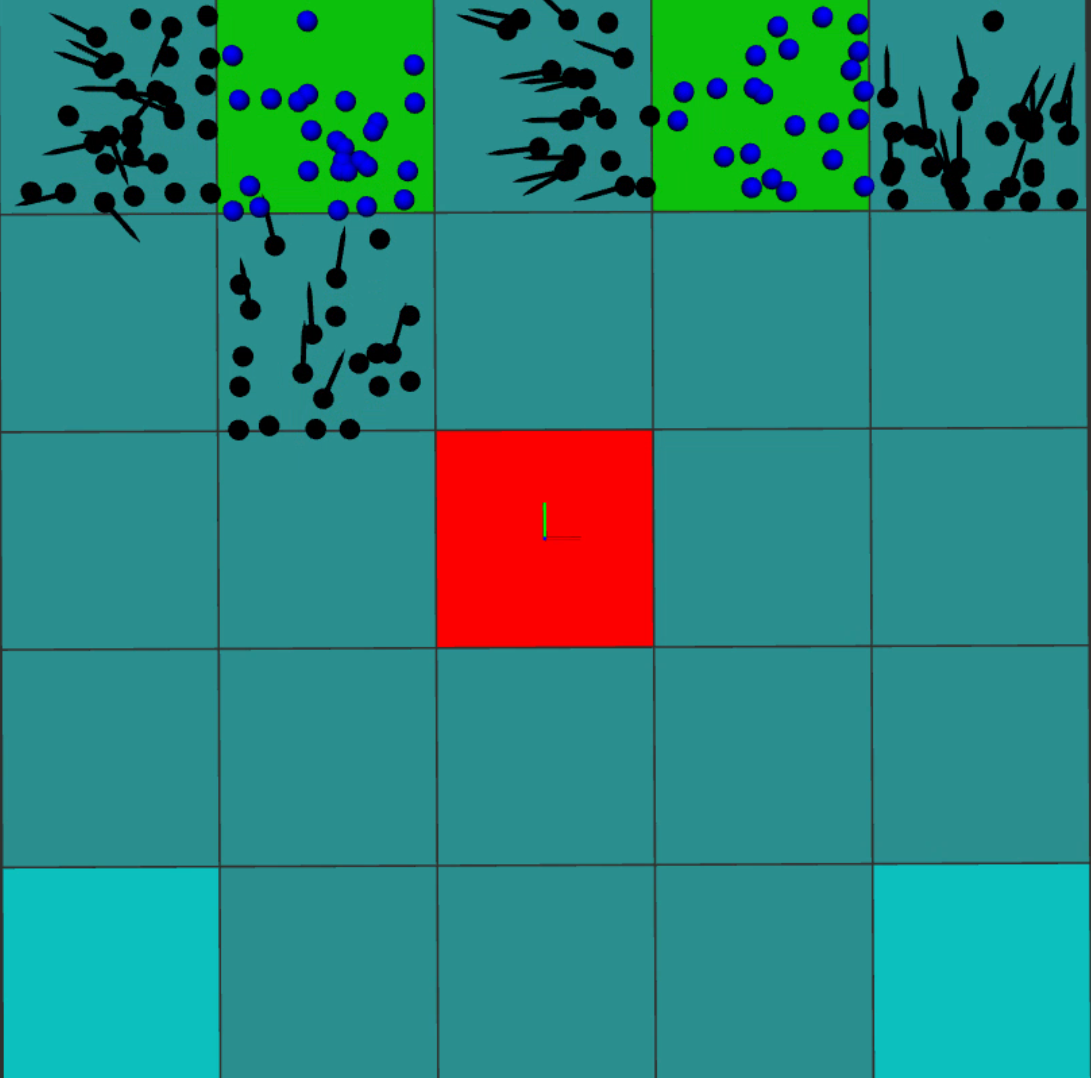}
    \vspace{-0.20cm}
    \caption{Evolution of the agents comprising the heterogeneous swarm at different time indexes. From the top left to the bottom right figure, we show the distribution of the agents at time indexes $0,1,\hdots,5$. In the figure, the obstacle bins are in red, the target and starting bins for the leaders are in green and cyan, respectively. Besides, we specify the followers agents as the black dots while the leaders are the blue dots.}\vspace{-0.40cm}
    \label{fig:heterogeneous-swarm}
\end{figure*}
In this section, we consider a simulation example with a swarm of heterogeneous agents navigating in a gridworld as shown in Figure~\ref{fig:heterogeneous-swarm}. The swarm contains $2$ sub-swarms.
The first sub-swarm, referred to as the leader, must reach a target density distribution while the second sub-swarm, referred to as the follower, must satisfy some density constraints in the neighborhood of each leader. 
Note that the leader and follower do not need to have the same dynamics. In such a situation, it is straightforward to deal with the different dynamics via the adjacency matrix $A_\mathrm{adj}^s$. However, for simplicity of this simulation, we assume that each sub-swarm has the same dynamics. 
Therefore, we obtain the graph representing the gridworld and label it with $f_{i}(x^1, x^2) = (x^1,x^2)$ for each node $i$. We use GTL to express the task specifications as follows:
\begin{itemize}
    \item Initially $x_i^0(0) = 0.5$ for $i\in\{0,20\}$ and we impose no constraints on the sub-swarm $1$. That is, the algorithm should find a correct initialization for the follower.
    \item We enforce the capacity constraints $\square (f_i \leq [0.5,0.25])$ for each bin $i$. Thus, each bin can contain at most $50\%$ of agents from sub-swarm $0$ and $25\%$ from sub-swarm $1$.
    \item We consider the GTL formulas $\Diamond \square (f_i \geq [0.5,0])$ for nodes $i \in \{9,19\}$ and $\square (f_i = [0,0])$ for the obstacle bin $i=12$. 
    Thus, the density of the leader sub-swarm $0$ should eventually be $0.5$ in bin $9$ and bin $19$.
    \item We require that no follower should be in the same bin as the leader with the formula $\square (f_i \leq [0,1] \vee f_i \leq [1,0])$ for all bins $i$.
    \item We enforce $\square (f_i \leq [0,1] \vee \exists^{2}\bigcirc (f_i \geq [0,0.25]))$ for all bins $i$. 
    This means that for each node, there should always be no leader in each bin $i$ or if there is a leader in any bin $i$, the swarm density of the follower sub-swarm should be greater than $0.25$ in at least two of the neighboring bins of $i$. 
    In other words, each leader should always be surrounded by followers.
\end{itemize}

In this scenario, the swarm is comprised of a total of $150$ agents performing collision avoidance in a decentralized manner. 
The leader swarm contains $50$ agents while the follower swarm contains $100$ agents.
Each agent in the simulation uses optimal reciprocal collision avoidance (ORCA)~\cite{jaimes2008approach} to dynamically and locally compute safe velocities to reach a given goal region. 
We first apply \controlalgo{} to find time-varying Markov matrices $M^1(t)$ and $M^2(t)$ such that the specifications above are satisfied. The computation time to generate the Markov matrices was $1.8$s.
Then, each agent independently and probabilistically chooses their target bin based on Algorithm~\ref{alg:psg-algo} with computed $M^s(t)$. 
When the target bin is obtained, the line~\ref{alg-local-interaction} of Algorithm~\ref{alg:psg-algo} consists of using ORCA to generate in real-time, at a fixed frequency, control velocities to reach the target bin while avoiding the fixed obstacles and the other agents in the gridworld.

Figure~\ref{fig:heterogeneous-swarm} demonstrates that the GTL specifications are satisfied. Specifically, it can be seen that the obstacles are always avoided, the leaders reach the target bins with the desired densities, the leader and follower never occupy the same bin at each time index, and finally the leaders are always surrounded by followers in at least two adjacent bins.
For example, the third image shows that the leader sub-swarm has $50\%$ of agents in bin $2$ and $50\%$ in bin $22$, which are surrounded by $25\%$ of follower agents in each bin $1$, $7$, $21$, and $23$.\looseness=-1



\section{Conclusion} \label{sec:conclusion}
We develop a correct-by-construction algorithm to control, in a decentralized and probabilistic manner, the density distribution of a swarm of heterogeneous agents subject to infinite-horizon GTL specifications. The algorithm, agnostic to the number of agents comprising the swarm, relies on synthesizing time-varying Markov matrices by adequately formulating the problem as either linear, semi-definite, or mixed-integer linear programs. The synthesized Markov matrices are independently used by each agent to determine the next targets while the entire swarm satisfies the specifications. Theoretically, we prove that the algorithm is correct by construction, and a complexity analysis shows that it significantly improves scalability over existing swarm control approaches. Empirically, we successfully demonstrated the efficiency and correctness of the algorithm in several simulation experiments.\vspace{-0.15cm}

\bibliographystyle{IEEEtran}
\bibliography{IEEEabrv,ref}\vspace*{-0.50cm}

\end{document}